\pdfoutput=1

\documentclass[envcountsame]{llncs}
\usepackage[utf8]{inputenc}
\usepackage{amsmath}

\usepackage{multirow}
\usepackage{amssymb}
\usepackage{stmaryrd}
\usepackage{booktabs}
\usepackage{xspace}
\usepackage{placeins}
\usepackage{enumerate}
\usepackage{verbatim}
\usepackage{enumerate}
\usepackage{mathtools}
\usepackage{url}
\usepackage{thm-restate}
\usepackage{color}
\usepackage{subfigure}
\usepackage{wrapfig}

\usepackage[bookmarks=false]{hyperref}

\newcommand\inter[1]{\llbracket #1 \rrbracket}
\newcommand\trans[2]{\ensuremath{#1\mid #2}}
\newcommand\delay{\Delta}
\newcommand\tra{\mathcal{T}}
\newcommand\eof{\hfill$\square$}

\usepackage{graphicx}

\usepackage{xcolor,pgf,tikz}
\usetikzlibrary{arrows,automata,decorations.pathmorphing,snakes}

\title{
Multi-Sequential Word Relations}

\author{
Ismaël Jecker\and
Emmanuel Filiot
}
\institute{
Université Libre de Bruxelles}


\pagestyle{plain}

\begin{document}
\maketitle

\begin{abstract}
    Rational relations are binary relations of finite words that are
    realised by non-deterministic finite state transducers (NFT). 
    A particular kind of rational relations is the sequential
    functions. Sequential functions are the functions that can be
    realised by input-deterministic transducers. Some rational
    functions are not sequential. However, based on a property on
    transducers called the twinning property, it is decidable in
    \textsc{PTime} whether a rational function given by an NFT is
    sequential. In this paper, we investigate the
    generalisation of this result to multi-sequential relations,
    i.e. relations that are equal to a finite union of sequential
    functions. We show that given an NFT, it is decidable in
    \textsc{PTime} whether the relation it defines is
    multi-sequential, based on a property called the weak twinning
    property. If the weak twinning property is satisfied, we give a procedure that effectively
    constructs a finite set of input-deterministic transducers whose
    union defines the relation. This procedure generalises to
    arbitrary NFT the determinisation procedure of 
    functional NFT.
\end{abstract}


Finite transducers extend finite automata with 
output words on transitions. Any successful computation (called run) of a
transducer defines an output word obtained by concatenating, from
left to right, the words occurring along the transitions of that
computation. Since transitions are non-deterministic in general, there
might be several successful runs on the same input word $u$, and hence 
several output words associated with $u$. Therefore, finite
transducers can define binary relations of finite words, the so-called
class of rational relations \cite{Elgot:Mezei:ibmjrd:1965,berstel2009}.
Unlike finite automata, the equivalence problem, i.e. whether two transducers define the same
relation, is undecidable \cite{DBLP:journals/jacm/Griffiths68}. This has
motivated the study of different subclasses of rational relations, and
their associated definability problems: given a finite transducer $T$,
does the relation $\inter{T}$ it defines belong to a given class
$\mathcal{C}$ of relations? We survey the most
important known subclasses of rational relations.

\paragraph{Rational Functions} An important subclass of rational relations is the class of
\emph{rational functions}.  It enjoys decidable
equivalence and moreover, it is decidable whether a transducer
is functional, i.e. defines a function. This latter result was first shown by 
Sch\"utzenberger with polynomial space complexity \cite{Schutz75} and the complexity has been refined to
polynomial time in \cite{GurIba83,BealCPS03}.

A subclass of rational functions which enjoys good algorithmic
properties with respect to evaluation is the class of \emph{sequential
  functions}. Sequential functions are those functions defined by 
finite transducers whose underlying input automaton is deterministic
(called sequential transducers). Some rational functions are not
sequential. E.g., over the alphabet $\Sigma = \{a,b\}$, the 
function $f_{swap}$ mapping any word of the form $u\sigma$ to $\sigma u$, where
$u\in\Sigma^*$ and $\sigma\in\Sigma$, is rational but not sequential,
because finite transducers process input words from left-to-right, and
therefore any transducer implementing that function should
 guess non-deterministically the last symbol of $u\sigma$. 
Given a functional transducer, it is decidable whether it defines a
sequential function \cite{DBLP:journals/tcs/Choffrut77}, based on a
structural property of finite transducers called the \emph{twinning
  property}. This property can be decided in \textsc{PTime}
\cite{BealCPS03} and therefore, deciding whether a functional
transducer defines a sequential function is in
\textsc{PTime}. If the twinning property holds, one can
``determinise'' the original transducer into an equivalent sequential
transducer.

It turns out that many examples of rational functions from the
literature which are not sequential are \emph{almost} sequential, in
the sense that they are equal to a finite union of sequential
functions. Such functions are called \emph{multi-sequential}. 
For instance, the function $f_{swap}$ is multi-sequential, as
$f_{swap} = f_a\cup f_b$, where $f_a$ is
the partial sequential function mapping all words of the form $ua$ to
$au$ (and similarly for $f_b$). Some rational functions are not
multi-sequential,  such as functions that are iterations of non-sequential functions. E.g., the function 
mapping $u_1\# u_2\#\dots \# u_n$ to $f_{swap}(u_1)\dots
f_{swap}(u_n)$ for some separator $\#$, is not multi-sequential. 
Multi-sequential functions have been
considered by Choffrut and
Sch\"utzenberger in \cite{DBLP:conf/stacs/ChoffrutS86}, where it is shown
that multi-sequentiality for functional transducers is a decidable
problem. 

\paragraph{Contribution} In this paper, we investigate multi-sequential relations,
i.e. relations that are equal to a finite union of sequential functions. Our
main result shows that, given a finite transducer, it is decidable in
\textsc{PTime} whether the relation it defines is
multi-sequential. Our procedure is based on a simple characterisation
of multi-sequential relations by means of a structural property,
called the weak twinning property (WTP), on finite transducers. We
show that a finite transducer defines a multi-sequential relation iff
it satisfies the WTP. We define a ``determinisation'' procedure of
finite transducers satisfying the WTP, that decomposes them into finite
unions of sequential transducers. Finally, we also investigate the
computational properties of multi-sequential relations and show, that,
for a natural computational model for word relations, 
multi-sequential relations correspond to the relations that
can be evaluated with constant memory.

\paragraph{Related Works} As already mentioned, multi-sequential
functions were considered in \cite{DBLP:conf/stacs/ChoffrutS86}. Our
weak twinning property is close to the characterisation of
multi-sequential functions of \cite{DBLP:conf/stacs/ChoffrutS86}, which is based on analysing 
families of \emph{branching paths} in transducers. Thanks to the
notion of \emph{delay} between output words, our property is simpler
and can be decided in \textsc{PTime}, for arbitrary (not
necessarily functional) transducers. Compared to
\cite{DBLP:conf/stacs/ChoffrutS86}, our decomposition procedure is
more constructive. It extends the known determinisation procedure of
functional transducers, to multi-sequential relations, and applies
directly on arbitrary finite state transducers (in
\cite{DBLP:conf/stacs/ChoffrutS86}, the functional transducers are assumed to be
unambiguous, but removing ambiguity is worst-case exponential).

Finite-valued rational relations are rational relations such that any
input word has at most $k$ images by the relation, for a fixed
constant $k$. Finite-valuedness (existence of such a $k$) is decidable in \textsc{PTime} for
rational relations, and any $k$-valued rational relation is equal to a
union of $k$ rational functions \cite{journals/mst/SakarovitchS10,SICOMP::Weber1993,DBLP:journals/acta/Weber89}. 
Equivalence of finite-valued rational relations is decidable
\cite{GurIba83}. Clearly, any multi-sequential relation is
finite-valued. However, for the multi-sequentiality problem, it would
not have been simpler to assume that the input relation is given as a
finite union of rational functions, because obtaining such as
decomposition is costly, and moreover, our property is simpler than the structural properties 
of \cite{journals/mst/SakarovitchS10,DBLP:journals/acta/Weber89} that 
characterise finite-valuedness.

Finally, finitely-sequential relations have been considered in
\cite{DBLP:journals/ijfcs/AllauzenM03}. They correspond to relations
that can be realised by an input-deterministic transducer whose
accepting states can, at the end of the run, output additional words
from a finite set. Such relations are much weaker than
multi-sequential relations.

The known and new results of this paper are summarised in
Table~\ref{table:summary}.

\begin{table}[t]
    \centering
    \begin{tabular}{c|c|c|c|c}
\toprule
sequentiality & multi-sequentiality  & functionality & multi-sequentiality & finite-valuedness \\
 & (for functions) &  & 
\\\midrule
\textsc{PSpace} \cite{DBLP:journals/tcs/Choffrut77} & \textsc{Decidable} \cite{DBLP:conf/stacs/ChoffrutS86} & \textsc{PSpace}
                                                  \cite{Schutz75} &
                                                           \textsc{PTime}
                                                           (\textbf{this
      paper}) & \textsc{PTime} \cite{journals/mst/SakarovitchS10,DBLP:journals/acta/Weber89} \\
\textsc{PTime} \cite{BealCPS03} & \textsc{PTime} (\textbf{this paper}) & \textsc{PTime}
                                                  \cite{GurIba83,BealCPS03}&
                                                           & \\
\bottomrule
\end{tabular}
\caption{\label{table:summary}Definability problems for rational relations given by finite transducers}\label{tab}
\vspace{-9mm}
\end{table}


\vspace{-4mm}
\section{Rational Word Relations}
\vspace{-2mm}

We denote by $\mathbb{N}$ the set of natural numbers
$\{0,1,\dots\}$, and by $\mathcal{P}(A)$ the set of subsets of
a set $A$, and by $\mathcal{P}_f(A)$ the set of finite subsets of $A$.

\vspace{2mm}
\noindent \textbf{Words and delays} Let $\Sigma$ be a (finite)
alphabet of symbols. The elements of the free monoid $\Sigma^*$ are
called words over $\Sigma$. The length of a word $w$ is written $|w|$.
The free monoid $\Sigma^*$ is partially ordered by the prefix relation
$\leq$. We denote by $\Sigma^{-1}$ the set of
symbols $\sigma^{-1}$ for all $\sigma\in\Sigma$. Any word $u\in
(\Sigma\cup \Sigma^{-1})^*$ can be reduced into an irreducible word $\overline{u}$ by the equations
$\sigma\sigma^{-1} = \sigma^{-1}\sigma = \epsilon$ for all
$\sigma\in\Sigma$. Let $G_\Sigma$ be the set of irreducible words over
$\Sigma\cup \Sigma^{-1}$. The set $G_\Sigma$ equipped with concatenation $u.v =
\overline{uv}$ is a group, called the free group over $\Sigma$. We denote by $u^{-1}$ the inverse
of $u$. E.g. $(a^{-1}bc)^{-1} = c^{-1}b^{-1}a$. The \emph{delay} between two words $v$ and $w$ is the element
$\Delta(v,w) := v^{-1} w \in G_\Sigma$. E.g., $\Delta(ab,acd) =
b^{-1}cd$.

\vspace{2mm}
\noindent \textbf{Finite automata} A (finite) automaton
over a monoid $M$ is a tuple $\mathcal{A} = (Q,E,I,T)$ where $Q$ is the finite set
of states, $I \subset Q$ is the set of initial states, $T \subset Q$
is the set of final states, and $E \subset Q \times M \times Q$ is the
finite set of edges, or transitions, labelled by elements of the monoid $M$.

For all transitions $e = (q_1,m,q_2) \in E$, $q_1$ is called the source of $e$, $q_2$ its target and $m$ its label.
A run of an automaton is a sequence of transitions $r = e_1 \ldots
e_n$ such that for every $1 \leq i < n$, the target of $e_i$ is equal
to the source of $e_{i+1}$. We write $p \xrightarrow{m}_{\mathcal{A}}
q$ (or just $p \xrightarrow{m} q$ when it is clear from the context) to mean
that there is a run $e_1\dots e_n$ such that $p$ is the source of
$e_1$, $q$ the target of $e_n$, and $m$ is the product of the labels
of the $e_i$. A run is called accepting if its source is an initial state and its target is a final state.
An automaton is called trim if each of its states occurs in at least
one accepting run. It is well-known that any automaton can be trimmed
in polynomial time. The language recognised by an automaton over a monoid $M$ is the set of elements of $M$ labelling its accepting runs.
A $\Sigma$-automaton is an automaton over the free monoid $\Sigma^*$ such that each edge is labelled by a single element $\sigma$ of $\Sigma$.
A $\Sigma$-automaton is called deterministic if it has a single initial state,
and for all $q \in Q$ and $\sigma \in \Sigma$, there exists at most one
transition labelled $\sigma$ of source $q$.

Given two automata $A_1$ and $A_2$ over a monoid $M$, their disjoint
union $A_1\cup A_2$ is defined as the disjoint union of their set of states, initial and final states,
and transitions. It recognises the union of the their respective
languages. 

\vspace{2mm}
\noindent \textbf{Finite transducers} Let $\Sigma$ and $\Gamma$ two
alphabets. A (finite)
transducer $\mathcal{T}$ from $\Sigma^*$ to $\Gamma^*$ is a tuple
$(Q,E,I,T,f_T)$ such that $(Q,E,I,T)$ is a
finite automaton over the monoid $\Sigma^*\times \Gamma^*$, called the underlying automaton of $\mathcal{T}$,
such that $E\subseteq Q\times \Sigma\times \Gamma^*\times Q$, 
and $f_T :  T \rightarrow \Gamma^*$ is the final output
function. In this paper, the input and output alphabets are always denoted by
$\Sigma$ and $\Gamma$. Hence, we just
use the terminology transducer instead of transducer from $\Sigma^*$
to $\Gamma^*$.

A run (resp. accepting run) of $\tra$ is a run (resp. accepting run)
of its underlying automaton. We write $p\xrightarrow{u|v} q$ instead
of $p\xrightarrow{(u,v)} q$. The relation recognised by $\tra$ is the
set $\inter{\tra}$ of pairs $(v,wf_T(t))$ such that $i
\xrightarrow{v|w} t$ for $i\in I$ and $t\in T$. A transducer $\tra$ is functional if $\inter{\tra}$ is a function. It is called trim if its underlying automaton is trim.
The input automaton of a transducer is the $\Sigma$-automaton obtained by projecting the labels of its underlying automaton on their first component.
Given a transducer $\mathcal{T}$, we denote by $M_{\mathcal{T}}$ the
maximal integer $|v|$, $v\in \Gamma^*$, such that $v$ labels a
transition of $\tra$ or $v = f_T(q)$ for some $q\in Q$. 

A rational transducer is defined as a transducer, except that its
transitions are labeled in $\Sigma^*\times \Gamma^*$. Rational
transducers are strictly more expressive than transducers and define
the class of rational relations. E.g., by using loops 
$q\xrightarrow{\epsilon|w} q$, a word can have infinitely many images
by a rational relation. If there is no such loop, it is
easily shown that rational transducers are equivalent to transducers\footnote{Transducers are sometimes called real-time in the
  literature, and rational transducers just transducers. To avoid
  unecessary technical difficulties, we establish our main result for 
  (real-time) transducers, but, as shown in Remark \ref{rm:realtime},
  it still holds for rational transducers.}.


\vspace{-3mm}
\section{Multi-Sequential Relations}
\vspace{-2mm}

In this section, we define multi-sequential relations, and give a
decidable property on transducers that characterise them.

\vspace{-4mm}
\subsection{From sequential functions to multi-sequential relations}
\vspace{-2mm}

A transducer $\mathcal{T} = (Q, E, I, T, f_T)$ is
\emph{sequential} if its input automaton is deterministic. A function $f:A^*\rightarrow B^*$ is
\emph{sequential} if it can be realised by a sequential
transducer\footnote{These functions are sometimes called subsequential
in the literature. We follow the terminology of
\cite{DBLP:books/daglib/0023547}.}, i.e. $f = \inter{\mathcal{T}}$ for
some sequential transducer $\mathcal{T}$. 

\begin{figure}[t]
\centering
\subfigure[$\tra_{a}$\label{ex1}]{
\begin{tikzpicture}[->,>=stealth',auto,node distance=2cm,thick,scale=0.7,every node/.style={scale=0.7}]
  \tikzstyle{every state}=[fill=yellow!30,text=black]
  \tikzstyle{initial}=[initial by arrow, initial where=left, initial text=]
  \tikzstyle{accepting}=[accepting by arrow, accepting where=right, accepting text=$b$]

  \node[initial,state] (A)  {$q_0$};
  \node[accepting,state] (B) [below of=A] {$q_1$};
  \path (A) edge node {\trans{a}{\epsilon}} (B);
  \path (B) edge [loop below] node {\trans{a}{a}} (B);
\end{tikzpicture}
}%
\subfigure[$\tra_{blank}$\label{ex2}]{
\begin{tikzpicture}[->,>=stealth',auto,node distance=2cm,thick,scale=0.7,every node/.style={scale=0.7}]
  \tikzstyle{every state}=[fill=yellow!30,text=black]
  \tikzstyle{initial}=[initial by arrow, initial where=left, initial text=]
  \tikzstyle{accepting}=[accepting by arrow, accepting where=right, accepting text=]

  \node[initial,accepting,state] (A)  {$q_0$};
  \node[accepting,state] (B) [below of=A] {$q_1$};
  \path (A) edge [bend left] node {\trans{b}{\epsilon}} (B);
  \path (B) edge [bend left] node {\trans{a}{ba}} (A);
  \path (A) edge [loop above] node {\trans{a}{a}} (A);
  \path (B) edge [loop below] node {\trans{b}{\epsilon}} (B);
\end{tikzpicture}
}%
\subfigure[$\tra_{swap}$\label{ex3}]{
\begin{tikzpicture}[->,>=stealth',auto,node distance=2cm,thick,scale=0.7,every node/.style={scale=0.7}]
  \tikzstyle{every state}=[fill=yellow!30,text=black]
  \tikzstyle{initial}=[initial by arrow, initial where=left, initial text=]
  \tikzstyle{accepting}=[accepting by arrow, accepting where=right, accepting text=]

  \node[initial,state] (A) at (0,0) {$q_0$};
  \node[state] (B) at (1,1) {$q_1$};
  \node[state] (C) at (1,-1) {$q_2$};
  \node[state,accepting] (D) at (2,0) {$q_3$};

  \path (A) edge [bend left] node {\trans{a}{aa}} (B);
  \path (B) edge [bend left] node {\trans{a}{\epsilon}} (D);
  \path (A) edge [below left,bend right] node {\trans{a}{ba}} (C);
  \path (C) edge [below right,bend right] node {\trans{b}{\epsilon}} (D);
  \path (B) edge [loop above] node {\trans{a}{a}} (B);
  \path (C) edge [loop below] node {\trans{a}{a}} (C);
\end{tikzpicture}
}%
\subfigure[$\tra^*_{swap}$\label{ex4}]{
\begin{tikzpicture}[->,>=stealth',auto,node distance=2cm,thick,scale=0.7,every node/.style={scale=0.7}]
  \tikzstyle{every state}=[fill=yellow!30,text=black]
  \tikzstyle{initial}=[initial by arrow, initial where=left, initial text=]
  \tikzstyle{accepting}=[accepting by arrow, accepting where=right, accepting text=]

  \node[initial,state] (A) at (0,0) {$q_0$};
  \node[state] (B) at (1,1) {$q_1$};
  \node[state] (C) at (1,-1) {$q_2$};
  \node[state,accepting] (D) at (2,0) {$q_3$};

  \path (A) edge [bend left] node {\trans{a}{aa}} (B);
  \path (B) edge [bend left] node {\trans{a}{\epsilon}} (D);
  \path (A) edge [below left,bend right] node {\trans{a}{ba}} (C);
  \path (C) edge [below right,bend right] node {\trans{b}{\epsilon}} (D);
  \path (B) edge [loop above] node {\trans{a}{a}} (B);
  \path (C) edge [loop below] node {\trans{a}{a}} (C);
  \path (D) edge [above] node {\trans{\#}{\#}} (A);
\end{tikzpicture}
}
\vspace{-4mm}
\caption{Finite transducers}
\label{fig:transducers} 
\vspace{-6mm}
\end{figure}
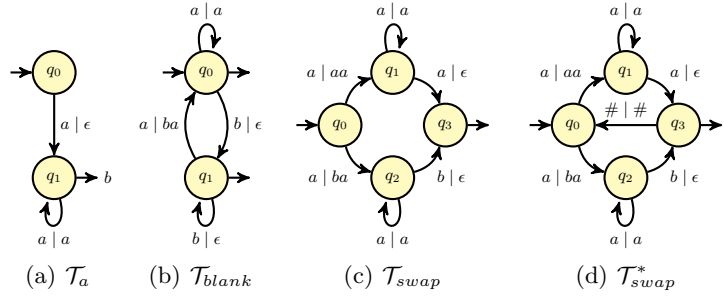

Let $\Sigma = \Gamma = \{a,b\}$. Fig. \ref{fig:transducers} depicts transducers 
that implement sequential and non-sequential functions. All states
which is the target of a source-less arrow are initial, and those
which are the source of an arrow without target, or whose target is a
word, are accepting.  The function $f_{a}$ that maps any word of the
form $a^n$, $n>0$, to $a^{n-1}b$, is sequential. It is realised by the
transducer of Fig.~\ref{ex1}. The function $f_{blank}$ replaces each
block of consecutive $b$ by a single $b$, but the last
one. E.g. $f_{blank}(abbbab) = aba$. It is sequential and defined by
the sequential transducer of Fig.~\ref{ex2}. The function $f_{swap}$
maps any word of the form $a^n\sigma$ to $\sigma a^n$, for
$\sigma\in\Sigma$. It is not sequential, because the transducer
has to guess the last symbol $\sigma$. It can be defined by the
transducer of Fig.~\ref{ex3}.

Sequential functions have been characterised by a structural property
of the transducers defining them, called the \emph{twinning
  property}.  Precisely, a trim transducers with initial state $q_0$
is twinned iff for all states $q_1,q_2$, all words $u,v\in\Sigma^*$ and
$u_1,v_1,u_2,v_2\in\Gamma^*$, if $q_0\xrightarrow{u |
    u_{1}}
q_1\xrightarrow{v | v_1} q_1$ and $q_0\xrightarrow{u | u_2}
q_2\xrightarrow{v | v_2} q_2$, then $\delay(u_1,u_2) =
\delay(u_1u_2,v_1v_2)$. E.g., by taking $u = v = a$, $u_1 =
aa$, $u_2 = ba$ and $v_1 = v_2 = a$, it is easy to see that 
the transducer of Fig.\ref{ex3} is not twinned.

\begin{theorem}[\cite{DBLP:journals/tcs/Choffrut77,BealCPS03}]
Let $\tra$ be a trim transducer.
\vspace{-2mm}
\begin{enumerate}
    \item $\tra$ is twinned iff $\inter{\tra}$ is sequential. 
    \item It is decidable in \textsc{PTime} whether a trim transducer
      is twinned. 
\end{enumerate}
\end{theorem}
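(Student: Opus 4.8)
The plan is to prove item~1 by its two implications and then convert the resulting characterisation into the polynomial-time test of item~2. (If $\tra$ is not functional then $\inter{\tra}$ is not a function, hence not sequential, and the twinning property of the statement fails as well, so I may assume $\tra$ functional.) For the direction ``$\inter{\tra}$ sequential $\Rightarrow$ $\tra$ twinned'', fix a sequential transducer $\mathcal{S}$ with $\inter{\mathcal{S}}=\inter{\tra}$ and an instance of the twinning hypothesis, $q_0\xrightarrow{u\mid u_1}q_1\xrightarrow{v\mid v_1}q_1$ and $q_0\xrightarrow{u\mid u_2}q_2\xrightarrow{v\mid v_2}q_2$. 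Since $\tra$ is trim, extend $q_1,q_2$ to accepting runs, with remaining inputs $w_1,w_2$, so that $\inter{\tra}(uv^{j}w_i)$ has the form $u_iv_i^{j}r_i$ with $r_i$ independent of $j$. Run $\mathcal{S}$ (deterministic, one initial state) on the inputs $uv^{j}$: by finiteness, infinitely many $j$ reach the same state of $\mathcal{S}$, hence produce the same continuation output on $w_i$; comparing the outputs of $\mathcal{S}$ and $\tra$ for two such $j$ forces $\delay(u_1v_1^{j},u_2v_2^{j})$ to take the same value for infinitely many $j$. As $\delay(u_1v_1^{j},u_2v_2^{j})=v_1^{-j}\,\delay(u_1,u_2)\,v_2^{j}$ in $G_\Sigma$, uniqueness of roots in the free group then gives $v_1\,\delay(u_1,u_2)=\delay(u_1,u_2)\,v_2$, i.e.\ the delay is unchanged by pumping the loop, which is the twinning identity.

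For ``$\tra$ twinned $\Rightarrow$ $\inter{\tra}$ sequential'' I would build a sequential transducer $\mathcal{S}$ by the classical subset-with-delays construction (the determinisation of functional transducers alluded to in the introduction): a state of $\mathcal{S}$ is a nonempty partial map $f:Q\rightharpoonup\Gamma^{*}$ assigning to each state of $\tra$ co-reachable on the input read so far the suffix by which its partial output exceeds the output already committed by $\mathcal{S}$, the committed part being the longest common prefix of these partial outputs; reading a symbol advances all component runs, the new committed output is the new longest common prefix, the delays are renormalised, and the final output at $f$ is $f(q)f_T(q)$ for any final $q\in\mathrm{dom}(f)$. Correctness $\inter{\mathcal{S}}=\inter{\tra}$ is a routine induction; the real point is termination, i.e.\ that $f$ ranges over a finite set. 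This follows because the twinning property bounds the delay between any two co-reachable runs by $2|Q|^{2}M_{\tra}$: if a larger delay occurred, the common input of the two runs would traverse a repeated pair of states and hence contain a synchronised loop which is either delay-neutral — and can then be excised, contradicting minimality of the input realising that delay — or not delay-neutral — contradicting twinning.

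For item~2 the twinning condition quantifies over arbitrarily long prefixes and loops, so I would first reduce it to a bounded check. A pumping argument shows that if twinning fails it fails already for an instance with $|u|,|v|\le|Q|^{2}$: in a shortest violating instance, neither the prefix $u$ nor the loop $v$ can contain a sub-loop over a repeated pair of states, since such a sub-loop is either delay-neutral — and can be excised, leaving a shorter violating instance — or not delay-neutral — and is then itself a shorter violating instance. For $|u|,|v|\le|Q|^{2}$ the witnessing output words, and hence the two delays $\delay(u_1,u_2)$ and $\delay(u_1v_1,u_2v_2)$ to be compared, have length at most $|Q|^{2}M_{\tra}$, so a single instance is tested by comparing two reduced words over $\Gamma\cup\Gamma^{-1}$ in polynomial time. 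What remains is to locate a short violating instance, if one exists, without enumerating all short inputs; this is done by a reachability analysis in the product of $\tra$ with itself enriched with a bounded-length delay component, following \cite{BealCPS03}.

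The main obstacle is precisely this last step: performed naïvely the delay component ranges over exponentially many words, so the argument of \cite{BealCPS03} must exploit combinatorics-on-words properties of the free group — essentially conjugacy and uniqueness of roots — to keep the reachability analysis polynomial. By contrast, the pumping reduction and the structural characterisation of item~1 are direct manipulations of runs and of reduced words, and I expect them to be straightforward.
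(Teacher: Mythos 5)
The paper offers no proof of this theorem: it is imported verbatim from \cite{DBLP:journals/tcs/Choffrut77,BealCPS03}, and the paper's own technical work targets the analogous statements for the \emph{weak} twinning property. So your proposal must stand on its own. For item~1 your two directions are essentially the classical arguments and are sound: for ``sequential $\Rightarrow$ twinned'', pumping $uv^j$ through a repeated state of the sequential transducer gives $v_1^{-j}\delay(u_1,u_2)v_2^{j}=v_1^{-j'}\delay(u_1,u_2)v_2^{j'}$ for two exponents, hence $v_1^{k}D=Dv_2^{k}$ for some $k\geq 1$ with $D=\delay(u_1,u_2)$, and uniqueness of roots in the free group then yields $v_1D=Dv_2$ as you say; the converse via the subset-with-delays determinisation and the $O(|Q|^{2}M_{\tra})$ delay bound obtained by excising delay-neutral synchronized loops is the standard construction. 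One genuine error, though: your parenthetical reduction to the functional case is wrong. A non-functional trim transducer can perfectly well be twinned --- take two parallel loop-free accepting paths on the same input with different outputs; every synchronized loop is then empty and the twinning condition holds vacuously --- so ``not functional $\Rightarrow$ not twinned'' fails. The theorem is only meaningful under a standing functionality assumption (which the surrounding text of the paper supplies), and you cannot conjure that assumption out of the twinning hypothesis; you would have to add it, or invoke the separate \textsc{PTime} functionality test.

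The more substantial gap is item~2. Your reduction to violating instances with $|u|,|v|\leq|Q|^{2}$ is fine (modulo choosing the right minimality measure when the excised sub-loop sits inside $v$), but you then explicitly leave open how to \emph{find} such an instance in polynomial time, noting that the naive product-with-delays is exponential --- and that is precisely the content of the claim, so as written item~2 is asserted rather than proved. The missing idea is the one the paper itself uses for the weak twinning property (Lemma~\ref{choice_wtp0} and the Appendix proof of Theorem~\ref{thm:main}.\ref{decWTP}): avoid storing delays altogether by observing that twinning fails iff there is a witness in which either the two synchronized loop outputs $v_1,v_2$ have different lengths, or both are nonempty and $u_1,u_2$ mismatch at some position. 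The first is a cycle-weight condition on the square transducer; the second is checked by guessing the mismatch position with two counters decremented by output lengths. Both are handled by reversal-bounded counter machines (or direct graph arguments) in polynomial time. Without some such reformulation your reachability analysis does not run in polynomial time.
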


The following result is a folklore result that we show in Appendix for the
sake of completeness. It states that the difference
(the delay) between the outputs of 
two input words is linearly bounded by the difference of their
input words. 
  
\vspace{-1mm}
\begin{proposition}\label{bounded_variation}
Let $\mathcal{D}$ be a sequential transducer. For all pairs $(u_1,v_1),(u_2,v_2)\in\inter{\mathcal{D}}$, $|\Delta(v_1,v_2)| \leq M_{\mathcal{D}} (|\Delta(u_1,u_2)|+2)$.
\end{proposition}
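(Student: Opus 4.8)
The plan is to exploit the input-determinism of $\mathcal{D}$ together with an explicit description of the delay in terms of longest common prefixes. First, recall that if $p$ is the longest common prefix of $u_1$ and $u_2$, writing $u_1 = ps_1$ and $u_2 = ps_2$, then $s_1$ and $s_2$ either start with distinct letters or one of them is empty; in both cases the word $s_1^{-1}s_2$ is already irreducible, so $\Delta(u_1,u_2) = s_1^{-1}s_2$ and $|\Delta(u_1,u_2)| = |s_1| + |s_2|$.

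Next, since $\mathcal{D} = (Q,E,I,T,f_T)$ is sequential it has a unique initial state $q_0$, and the run of $\mathcal{D}$ starting in $q_0$ on any input word is unique. In particular, the accepting runs witnessing $(u_1,v_1),(u_2,v_2)\in\inter{\mathcal{D}}$ both begin with the same run on the common prefix $p$, say $q_0 \xrightarrow{p\mid w} q$ for some state $q$ and output $w\in\Gamma^*$. Hence there are runs $q\xrightarrow{s_i\mid x_i} t_i$ with $t_i \in T$ and $v_i = w\, x_i\, f_T(t_i)$ for $i\in\{1,2\}$.

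Then I compute the delay in $G_\Gamma$: $\Delta(v_1,v_2) = v_1^{-1}v_2 = f_T(t_1)^{-1} x_1^{-1} w^{-1} w\, x_2\, f_T(t_2) = f_T(t_1)^{-1} x_1^{-1} x_2\, f_T(t_2)$, so that $|\Delta(v_1,v_2)| \leq |f_T(t_1)| + |x_1| + |x_2| + |f_T(t_2)|$. Each run $q\xrightarrow{s_i\mid x_i} t_i$ consists of exactly $|s_i|$ transitions (the transducer being real-time), each producing at most $M_{\mathcal{D}}$ output symbols, so $|x_i| \leq M_{\mathcal{D}}\,|s_i|$; moreover $|f_T(t_i)| \leq M_{\mathcal{D}}$ by definition of $M_{\mathcal{D}}$. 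Combining, $|\Delta(v_1,v_2)| \leq M_{\mathcal{D}}(|s_1|+|s_2|) + 2M_{\mathcal{D}} = M_{\mathcal{D}}(|\Delta(u_1,u_2)|+2)$.

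There is no genuine obstacle here; the only points requiring a little care are the free-group bookkeeping — checking that the cancellation of $w^{-1}w$ is the only simplification we invoke, and that the triangle-type inequality for $|\cdot|$ under concatenation and inversion suffices without needing to track further reductions — and the observation that a real-time transducer performs exactly one transition per input letter, so that $|x_i|$ is controlled by $|s_i|$ rather than by the length of the whole input word $u_i$.
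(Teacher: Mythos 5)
Your proof is correct and follows essentially the same route as the paper's: both split the two runs at the longest common prefix of $u_1$ and $u_2$ using input-determinism, observe that the output produced on that common prefix cancels in the delay, and bound the remaining output by $M_{\mathcal{D}}$ per remaining transition plus the two final outputs. The only cosmetic difference is that you perform the cancellation directly in the free group, whereas the paper argues via lengths of longest common prefixes; the bounds obtained are identical.
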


\vspace{-4mm}
\paragraph{Multi-sequential relations} The function $f_{swap}$ is not
sequential, but it is multi-sequential, in the sense that it is the
union of two sequential functions $f_1,f_2$ such that $f_1$ is the
restriction of $f_{swap}$ to words in $a^+a$ and $f_2$ its restriction
to words in $a^+b$. Precisely:

\begin{definition}[Multi-sequential relations]
    A relation $R \subseteq \Sigma^*\times \Gamma^*$ is
    multi-sequential if there exist $k$ sequential functions
    $f_1,\dots,f_k$ such that $R = \bigcup_{i=1}^k f_i$. 
\end{definition}

The \emph{multi-sequentiality problem} asks, given a transducer $\tra$,
whether $\inter{\tra}$ is multi-sequential. It should be clear that
the answer to this problem is not always positive. Indeed, even some
rational functions are not multi-sequential. It is the case
for instance for the function $f^*_{swap}$ that maps any word of the
form $u_1\#u_2\#\dots \#u_n$ to $f_{swap}(u_1)\#f_{swap}(u_2)\#\dots
\# f_{swap}(u_n)$, where $u_i\in\Sigma^*$ and $\#\not\in\Sigma$ is a
fresh symbol. This function is rational, as it can be defined by the
transducer of Fig.~\ref{ex4}. In this paper, we investigate the intrinsic reasons
making a rational relation like $f_{swap}$ multi-sequential and a
rational relation like $f_{swap}^*$ not. In particular, we define a
weaker variant of the twinning property that characterises the
multi-sequential relations by structural properties of the transducers
which define them. 

\vspace{-5mm} 
\subsection{Weak Twinning Property}
\vspace{-1mm} 

\begin{definition}\label{def:wtp}
    Let $\mathcal{T}$ be trim transducer and $q_1,q_2$ be two
    states of $\mathcal{T}$. We say that $q_1$ is weakly twinned to
    $q_2$ if for all words
    $u,v\in \Sigma^*$ and all words $u_1,u_2,v_1,v_2\in \Gamma^*$, if $q_1\xrightarrow{u\mid u_1} q_1 \xrightarrow{v\mid v_1}
    q_1\xrightarrow{u\mid u_2} q_2 \xrightarrow{v\mid v_2} q_2$, or graphically
\begin{center}
\vspace{-1mm}
\begin{tikzpicture}[->,>=stealth',auto,node distance=2cm,thick,scale=0.7,every node/.style={scale=0.7}]
  \tikzstyle{every state}=[fill=yellow!30,text=black]
  \tikzstyle{initial}=[initial by arrow, initial where=left, initial text=]
  \tikzstyle{accepting}=[accepting by arrow, accepting where=right, accepting text=]

  \node[state] (A)  {$q_1$};
  \node[state] (B) [right of=A] {$q_2$};
  \tikzstyle{every node}=[font=\normalsize]
  \node[] (C) [right of=B,xshift=1.5cm] {then $\delay(u_1,u_2) = \delay(u_1v_1,u_2v_2).$};
  \tikzstyle{every node}=[font=\scriptsize]
  \path (A) edge [loop above] node {\trans{v}{v_1}} (A);
  \path (A) edge [loop left] node {\trans{u}{u_1}} (A);
  \path (A) edge node {\trans{u}{u_2}} (B);
  \path (B) edge [loop above] node {\trans{v}{v_2}} (B);
\end{tikzpicture}
\end{center}
\vspace{-2mm}
\noindent $\mathcal{T}$ satisfies the weak twinning property (WTP) if for any
two states $q_1,q_2$ of $\mathcal{T}$, $q_1$ is weakly twinned to
$q_2$. $\mathcal{T}$ is weakly twinned if it satisfies the
WTP. 
\end{definition}

\begin{remark}\label{rm:TPvsWTP} 
A transducer satisfying the twinning property also satisfies the weak
twinning property. Indeed, suppose that $\tra$ satisfies the twinning
property. We show that in any pattern depicted in
Definition~\ref{def:wtp}, we immediately get
$\delay(u_1,u_2)=\delay(u_1v_1,u_2v_2)$. Indeed, since $\tra$ is trim, 
there exist words $(x,x')\in \Sigma^*\times \Gamma^*$ such
that $q_0\xrightarrow{x\mid x'} q_1$, where $q_0$ is the
initial state of $\tra$. Then, we have:
\vspace{-2mm}
$$
\begin{array}{lllllllll}
  q_0 & \xrightarrow{xu\mid x'u_1} & q_1 & \xrightarrow{v\mid
                                                v_1} q_1 & \qquad &
  q_0 & \xrightarrow{xu\mid x'u_2} & q_2 & \xrightarrow{v\mid v_2} q_2
\end{array}
$$
Since $\tra$ satisfies the twinning property, then 
$\delay(x'u_1,x'u_2) = \delay(x'u_1v_1,x'u_2v_2)$, and therefore
$\delay(u_1,u_2) = \delay(u_1v_1,u_2v_2)$. 
\end{remark}

\vspace{-6mm} 
\subsection{Main Result}
\vspace{-2mm}

We show that the weak twinning property characterises the
transducers that define multi-sequential relations, and that it is
decidable in polynomial time. 

\vspace{-1mm}
\begin{theorem}[Main Result]\label{thm:main}
    Let $\tra$ be a trim transducer. 
\vspace{-1mm}
\begin{enumerate}
\item\label{eqWTP} $\tra$ is weakly twinned iff $\inter{\tra}$ is
  multi-sequential. 
\item\label{decWTP} It is decidable in \textsc{PTime} whether a trim
  transducer is weakly twinned. 
\end{enumerate}
\end{theorem}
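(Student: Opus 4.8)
The plan is to prove the two items separately, with most of the work going into item~\ref{eqWTP}.

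*Easy direction of item~\ref{eqWTP} ($\Leftarrow$).* Suppose $\inter{\tra} = \bigcup_{i=1}^k f_i$ with each $f_i$ sequential, realised by a sequential transducer $\mathcal{D}_i$. I want to show $\tra$ is weakly twinned. The idea is that the WTP pattern at states $q_1,q_2$ of $\tra$ forces a ``delay growth'' that cannot be matched by any single sequential transducer. Concretely, fix such a pattern $q_1\xrightarrow{u\mid u_1} q_1 \xrightarrow{v\mid v_1} q_1\xrightarrow{u\mid u_2} q_2 \xrightarrow{v\mid v_2} q_2$ and suppose toward a contradiction that $\delay(u_1,u_2)\neq\delay(u_1v_1,u_2v_2)$; iterating the $v$-loops $n$ times on both branches produces, for each $n$, a pair of inputs of the form $xuv^n$ (read off the $q_1$-branch) and $xuv^n$ (the $q_2$-branch, after suitable prefix and suffix to reach an initial/final state), whose outputs differ by a delay whose length grows linearly in $n$ — because in the free group, $\delay(u_1v_1^n,u_2v_2^n)$ has length $\Theta(n)$ once the two branches ``disagree.'' Since the two input words are actually the \emph{same} word $w_n$, both output words are images of $w_n$ under $\inter{\tra}$, hence each is realised by some $\mathcal{D}_{i_n}$, $\mathcal{D}_{j_n}$. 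By pigeonhole some pair of indices $(i,j)$ recurs for infinitely many $n$; but then the two outputs both lie in the range of sequential transducers, and by a bounded-variation argument (Proposition~\ref{bounded_variation}) applied to $\mathcal{D}_i$ and $\mathcal{D}_j$ separately — or rather to the fact that $\mathcal{D}_i$ maps $w_n\mapsto$ one output and $w_n\mapsto$ stays $\epsilon$-far from itself — the delay between the two outputs must be bounded by a constant (taking $u_1 = u_2 = w_n$ in Proposition~\ref{bounded_variation}, $\Delta(u_1,u_2)=\epsilon$), contradicting linear growth. A little care is needed to massage the local pattern into genuine pairs of $\inter{\tra}$, i.e. to close off the runs at final states using trimness, exactly as in Remark~\ref{rm:TPvsWTP}.

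*Hard direction of item~\ref{eqWTP} ($\Rightarrow$).* This is the crux and presumably where the bulk of the paper lies. Assuming $\tra$ satisfies the WTP, I need to construct finitely many sequential transducers whose union is $\inter{\tra}$. The natural approach is a ``subset-with-delays'' determinisation, generalising the classical construction for functional transducers: a state of the new machine tracks, for each reachable state $q$ of $\tra$, the delay between the output produced so far along some chosen run to $q$ and a designated reference output. The WTP is exactly what is needed to bound these delays \emph{along each individual run that stays within one strongly connected component relative to the reference run}, but because $\tra$ is not functional there is no single reference output — different accepting runs may legitimately produce incomparable outputs. The fix is to run many copies in parallel: one sequential transducer per ``choice of which run to commit to,'' where the WTP guarantees that once we have committed to following a particular state-sequence skeleton through the SCC-decomposition of $\tra$, the delays of all the states we need to remember stay bounded, so only finitely many delay-vectors arise and the construction terminates with a finite automaton. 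Formally I expect this to go through an analysis of runs decomposed according to the order in which SCCs are visited, bounding the number of relevant ``branching'' configurations; the WTP pattern (loop at $q_1$, transfer to $q_2$, loop at $q_2$) is precisely the obstruction to delays blowing up when a run lingers in one SCC while the reference lingers in another. The main obstacle will be making this bookkeeping precise: showing that the set of delay-vectors that can arise is finite, and that the resulting finite collection of sequential transducers is both sound (each accepts a subfunction of $\inter{\tra}$ — in particular each is single-valued) and complete (every pair in $\inter{\tra}$ is captured by at least one of them).

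*Item~\ref{decWTP} (\textsc{PTime} decidability).* Here I would build, from $\tra$, a product-type automaton that simultaneously simulates the four runs of the WTP pattern: states are $5$-tuples (or $4$-tuples) keeping $q_1$ four times and $q_2$, synchronised so that the two ``$u$'' runs read the same input symbol and the two ``$v$'' runs read the same input symbol, and equipped with the integer $|\delay(\cdot,\cdot)| - |\delay(\cdot,\cdot)|$ or rather the free-group element comparison. A transducer violates the WTP iff in this product there are reachable configurations realising the loop-then-transfer-then-loop pattern with $\delay(u_1,u_2)\neq\delay(u_1v_1,u_2v_2)$. By the usual small-witness / pumping arguments for delays (the delay between outputs of equal-length input runs is either bounded by a polynomial in the number of states times $M_\tra$, or unbounded, and this dichotomy is testable by looking for the right simple loop), this reduces to reachability plus detection of a ``bad loop'' in a polynomial-size graph, which is in \textsc{PTime}. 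I would invoke the same machinery already used in \cite{BealCPS03} to decide the ordinary twinning property in \textsc{PTime}, adapted to the weaker pattern; since the WTP pattern is structurally simpler (one transfer edge, two independent loop labels), the argument should if anything be easier, and the polynomial bound on the delay to be checked is the standard $|Q|^2 M_\tra$-type bound.
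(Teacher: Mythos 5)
Your plan for the hard direction of item~1 (WTP implies multi-sequential, via a subset-with-delays construction that resets at SCC boundaries) matches the paper's weak determinisation, and your reduction of item~2 to a product-automaton pattern search is a plausible alternative to the paper's reduction to emptiness of reversal-bounded counter machines. The genuine problem is in your argument for the direction you call easy, namely that multi-sequential implies weakly twinned. You propose to iterate the loops of a violating pattern so as to obtain, for each $n$, a \emph{single} input word $w_n$ carrying two accepting runs whose outputs have delay $\Theta(n)$, and then to apply Proposition~\ref{bounded_variation} with $\Delta(w_n,w_n)=\epsilon$. This cannot work: each $\mathcal{D}_i$ is sequential, hence functional, so two \emph{distinct} outputs of the same input $w_n$ necessarily come from two \emph{different} transducers $\mathcal{D}_i\neq\mathcal{D}_j$, and Proposition~\ref{bounded_variation} says nothing about the delay between outputs of two different sequential transducers on a common input --- that delay can be arbitrarily large even for a multi-sequential relation (take the union of the identity and of the constant function $u\mapsto\epsilon$). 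Moreover, the two runs you describe need not both be accepting on the same input, since the suffix that closes the run at $q_2$ need not be readable from $q_1$.

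The paper's fix is to manufacture $k+1$ pairwise \emph{distinct} inputs $w_i = x v^{r^{k}} u v^{r^{k-1}}\cdots u v^{r^{i}} y$, where the accepting run loops at $q_1$ on the early $u$-blocks and transfers to $q_2$ exactly at the last occurrence of $u$; the exponents $r^{i}$ are chosen so that the input delay between any two $w_{i_1},w_{i_2}$ stays below $r^{i_1}/2$ while the output delay exceeds $r^{i_1}/2$, because the outputs already disagree on the long block $v_1^{r^{i_1}}$ versus $v_2^{r^{i_1}}$. Pigeonhole over the $k$ transducers then forces two of these $k+1$ pairs into the \emph{same} $\mathcal{D}_j$, and only then does Proposition~\ref{bounded_variation} yield the contradiction. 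You need this device (several distinct inputs, two of which land in one transducer), not the same-input/two-outputs route. For item~2, note also that the paper first normalises a WTP violation (Lemma~\ref{choice_wtp0}) into either $|v_1|\neq|v_2|$ or a letter mismatch between $u_1$ and $u_2$; some such case split is needed before a product-automaton or counter-machine search becomes a finitely checkable condition, and your sketch leaves that step implicit.
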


Deciding the WTP is done with a reversal-bounded counter machine,
whose emptiness is known to be decidable in \textsc{PTime} \cite{JACM::Ibarra1978}
(see Appendix). The proof of Theorem
\ref{thm:main}.\ref{eqWTP} is done via two lemmas, Lemma
\ref{lem:decomposition} and \ref{lem:necessary}, that are shown in the
rest of this paper. An immediate consequence of this theorem and the
fact that any transducer can be trimmed
in polynomial time, is the following corollary:
\vspace{-1mm}
\begin{corollary}
    It is decidable in \textsc{PTime} whether a transducer defines a
    multi-sequential relation. 
\end{corollary}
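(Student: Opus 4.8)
The corollary follows immediately from Theorem~\ref{thm:main}: trimming a transducer is a polynomial-time operation that does not change the recognised relation, so it suffices to trim the input and then apply Theorem~\ref{thm:main}.\ref{decWTP} to decide the WTP in \textsc{PTime}, which by Theorem~\ref{thm:main}.\ref{eqWTP} answers the multi-sequentiality question. So the plan is really to prove Theorem~\ref{thm:main}. For the decidability part, I would reduce the \emph{failure} of the WTP to non-emptiness of a reversal-bounded counter machine, which is decidable in \textsc{PTime}~\cite{JACM::Ibarra1978}. The machine guesses $q_1,q_2$ and runs four synchronised copies of $\tra$ realising the four parts of the pattern of Definition~\ref{def:wtp} (the $u$- and $v$-loops on $q_1$, the $u$-path $q_1\to q_2$, the $v$-loop on $q_2$), feeding the same input letters to the copies that must read $u$, respectively $v$, in lockstep. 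The only unbounded data are the two delays $\delay(u_1,u_2)$ and $\delay(u_1v_1,u_2v_2)$; exactly as in the polynomial-time twinning test~\cite{BealCPS03}, such a delay along two synchronised runs is either of bounded length or is ``one-sided'' (one run is ahead of the other by a fixed word), so its length is maintained by a counter making a bounded number of reversals, and whether the two delays differ is then a counter condition. Since the machine is of polynomial size, this gives the bound.

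\textbf{Necessity (multi-sequential $\Rightarrow$ WTP).} I would argue by contraposition. If the WTP fails, fix the pattern with $d_0:=\delay(u_1,u_2)\neq d_1:=\delay(u_1v_1,u_2v_2)$ and a fixed access run $q_0\xrightarrow{x\mid x'}q_1$ (trimness). For each $n$ the input $x\,u\,v^n$ has a run of $\tra$ reaching $q_1$ with output $x'u_1v_1^n$ and one reaching $q_2$ with output $x'u_2v_2^n$, so the delay between the two outputs is $h_n=v_1^{-n}d_0v_2^n$, with $h_0=d_0$, $h_1=d_1$, and $h_{n+1}=v_1^{-1}h_nv_2$. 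A standard property of delays gives that $h_1\neq h_0$ forces $(|h_n|)_n$ unbounded: were it bounded, the sequence would be eventually periodic, giving an identity $d_0v_2^p=v_1^pd_0$, and in the torsion-free free group, uniqueness of roots up to conjugacy collapses this to $d_0v_2=v_1d_0$, i.e. $h_1=h_0$. Extending both runs to accepting runs, $\inter{\tra}$ then contains pairs $(x u v^n s,\alpha_n)$ and $(x u v^n s',\beta_n)$ whose outputs differ, up to a fixed prefix and suffix, by $h_n$, while the inputs differ only by the constant delay $\delay(s,s')$. If $\inter{\tra}=\bigcup_{i=1}^k f_i$ with the $f_i$ sequential, some pair of indices $(i,j)$ realises $\alpha_n,\beta_n$ for infinitely many $n$. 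If $i=j$, Proposition~\ref{bounded_variation} applied to the sequential transducer for $f_i$ bounds $|\delay(\alpha_n,\beta_n)|$ by a constant, contradicting $|h_n|\to\infty$. The case $i\neq j$ is the delicate one: one analyses the deterministic runs of the transducers for $f_i$ and $f_j$ on these inputs, which are eventually periodic in $v$ and hence produce output at a fixed rate per extra letter $v$; matching these rates against the rates $|v_1|$ and $|v_2|$ forced by $\tra$, together with Proposition~\ref{bounded_variation} across two values of $n$, again bounds $(|h_n|)_n$ and gives a contradiction.

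\textbf{Sufficiency (WTP $\Rightarrow$ multi-sequential).} Here I would generalise the classical determinisation of functional transducers satisfying the twinning property. For a functional transducer, the determinised machine runs a subset construction in which every reachable original state carries a \emph{delay} relative to the output committed so far, and the twinning property keeps all these delays of bounded length, so the construction is finite. In the non-functional case one cannot keep all branches in a single set, since delays between ``incompatible'' branches may be unbounded; instead I would build a \emph{finite family} of sequential transducers, each maintaining a bounded-size set of (state, delay) pairs and allowed to \emph{drop} a branch as soon as its delay would exceed the WTP bound, the dropped branch being recovered from some point on by another transducer of the family that instead ``switches'' onto it. The WTP supplies the two facts needed: branches that keep looping together along any cycle never accumulate delay beyond a fixed bound — this is the pattern of Definition~\ref{def:wtp} iterated along arbitrary cycles, not just single-state loops — so each transducer is finite-state; and finitely many such ``switch profiles'' suffice to cover all accepting runs of $\tra$. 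Soundness of each member of the family is immediate by construction, and completeness, that their union equals $\inter{\tra}$, follows by showing that every accepting run of $\tra$ is, after finitely many switches, traced exactly by one member of the family. This is what Lemma~\ref{lem:decomposition} (together with Lemma~\ref{lem:necessary} for the converse) should establish.

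\textbf{Main obstacle.} I expect the sufficiency direction to be the hardest: it requires turning the qualitative WTP into a uniform quantitative bound on the delays that must be remembered — by iterating and composing the pattern of Definition~\ref{def:wtp} along arbitrary cycles — and into a bound on the number of ``switch profiles'', and then proving the resulting family both sound and complete. A secondary difficulty is the $i\neq j$ case of the necessity argument: because $\inter{\tra}$ is a relation and not a function, one must control, via pigeonhole and Proposition~\ref{bounded_variation}, which of the finitely many sequential components handles each member of a growing family of inputs.
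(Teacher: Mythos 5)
Your reduction of the corollary to Theorem~\ref{thm:main} (trim in polynomial time, then decide the WTP) is exactly the paper's one-line argument, and your counter-machine sketch for Theorem~\ref{thm:main}.\ref{decWTP} matches the appendix. The genuine problem is in your necessity argument. With only the two families of inputs $x\,u\,v^n s$ and $x\,u\,v^n s'$, the case $i\neq j$ is not merely delicate: it admits no contradiction at all. Nothing prevents one sequential component from realising all the pairs $(x\,u\,v^n s,\alpha_n)$ and a second, different component from realising all the pairs $(x\,u\,v^n s',\beta_n)$; an unbounded delay between outputs of \emph{distinct} sequential functions on \emph{distinct} inputs is perfectly consistent with multi-sequentiality (think of one component outputting $a^n$ and the other $b^n$). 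Proposition~\ref{bounded_variation} constrains a single sequential transducer on two inputs of bounded input-delay, and your ``rate'' considerations only recover the lengths $|v_1|,|v_2|$, which carry no contradiction when the two branches are handled by different components. To defeat a union of $k$ sequential functions you need $k+1$ witnesses engineered so that \emph{any} two of them falling into the same component is already contradictory.

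That is what the paper's proof of Lemma~\ref{lem:preserve} does: it builds $k+1$ single input--output pairs $(w_i,w_i')$ with $w_i=xv^{r^{k}}uv^{r^{k-1}}\cdots uv^{r^{i}}y$, where $w_i'$ follows the $q_1$-loop on every block except the last one, on which it switches to the $q_2$-loop; the tower of exponents $r^{k},r^{k-1},\dots$ makes the output delay between any $w_{i_1}'$ and $w_{i_2}'$ of order $r^{i_1}$ while the input delay between $w_{i_1}$ and $w_{i_2}$ is only of order $r^{i_1-1}$, so the pigeonholed pair landing in the same $\mathcal{D}_j$ violates Proposition~\ref{bounded_variation}. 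Note also that the paper derives Lemma~\ref{lem:necessary} from Lemma~\ref{lem:preserve}, whose proof invokes the sufficiency direction (Lemma~\ref{lem:decomposition}) to decompose the containing transducer, so necessity is not proved independently of sufficiency as in your plan. Finally, on sufficiency, your ``drop a branch when its delay exceeds the bound'' heuristic is not the paper's mechanism: the weak determinisation resets to singleton states $\{(q,\epsilon)\}$ exactly when the subset construction definitively leaves a strongly connected component (the rank decreases), which is what guarantees both that the new non-deterministic edges are transient (hence the result is separable and splits into sequential pieces) and that the WTP bounds the delays accumulated inside each SCC.
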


\begin{remark}\label{rm:realtime} Theorem \ref{thm:main} is also true
    when $\tra$ is a rational transducer. Indeed, if $\tra$ is weakly twinned,
    then there is no loop of the form $q\xrightarrow{\epsilon|w}q$, otherwise by taking $q_1 = q_2 = q$,
    $u=v=u_1=v_1=\epsilon$ and $u_2 = v_2 = w$ in the definition of
    the WTP, one would raise a contradiction. It is easily shown that
    $\tra$ can be transformed into an equivalent real-time transducer,
    while preserving the WTP. Conversely, if $\inter{\tra}$ is
    multi-sequential, then it is finite-valued, and therefore there is
    no loop of the form $q\xrightarrow{\epsilon|w}q$. As before, one can transform $\tra$ into a real-time
    transducer while preserving the WTP. 
\end{remark}

\begin{lemma}\label{lem:decomposition}
    Let $\tra$ be a trim transducer. If $\tra$ is weakly twinned, then
    $\inter{\tra}$ is multi-sequential.
\end{lemma}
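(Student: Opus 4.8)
The plan is to reduce the problem to the functional case handled by Choffrut's theorem, by cutting each accepting run of $\tra$ into finitely many "pieces" along which the transducer behaves deterministically enough. Concretely, the intuition is that the WTP says: once two runs on the same input have been "parallel" around a pair of loops, the delay between them is frozen, so divergence between competing runs can only happen a bounded number of times along any input before stabilising. I would first make this precise by a pumping-style argument: there is a constant $N$ (depending only on $|Q|$ and $M_\tra$) such that along any accepting run, the set of "co-reachable" states whose partial outputs differ from the chosen run's output by an unbounded delay can change at most $N$ times. The first step is therefore to isolate, using the WTP and a Ramsey/pigeonhole argument on simple loops, a bound on how the delay between two synchronised runs can grow.

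The second step is to define the decomposition. For each accepting run $\rho$ of $\tra$, I would record a bounded amount of extra information: a sequence of "switching points" together with the target state reached, i.e. a sequence of at most $N$ states $(p_0,p_1,\dots,p_m)$ with $m \le N$, where between consecutive switching points the run stays in a strongly connected component in which it is effectively sequentialisable. Each such finite sequence $\tau$ is a "type"; there are finitely many types since $m$ is bounded and $Q$ is finite. For a fixed type $\tau$, consider the sub-relation $R_\tau$ consisting of all pairs $(u,v)\in\inter{\tra}$ witnessed by a run of type $\tau$. The claim is that $R_\tau$ is a \emph{function}, and moreover that the transducer $\tra_\tau$ obtained from $\tra$ by restricting to runs of type $\tau$ (a product of $\tra$ with a bounded acyclic "schedule" automaton over the types) is \emph{twinned}. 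Twinnedness of $\tra_\tau$ follows from the WTP of $\tra$: the loop-pattern in the twinning property, instantiated inside a single SCC-block of a type, is exactly an instance of the WTP pattern (with $q_1=q_2$ being in the same block), and between blocks there are no loops at all since the schedule component is acyclic. By Choffrut's theorem (Theorem~\cite{DBLP:journals/tcs/Choffrut77,BealCPS03}) each $\tra_\tau$ defines a sequential function, and $\inter{\tra}=\bigcup_\tau \inter{\tra_\tau}$ is a finite union of sequential functions, hence multi-sequential.

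The main obstacle, which is where the real work lies, is proving that $\tra_\tau$ is functional and twinned — equivalently, that the "type" really does capture all the non-determinism that matters. The delicate point is that fixing a type does not literally fix a single run, since there can still be several runs of the same type on the same input; one must show they all produce the same output. This is where the WTP is used in full strength: if two runs of the same type differ in output, one can pump a loop in some block to make the delay grow, contradicting the delay-freezing conclusion $\delay(u_1,u_2)=\delay(u_1v_1,u_2v_2)$ of the WTP (after prefixing by a run to the relevant state, exactly as in Remark~\ref{rm:TPvsWTP}). Getting the bookkeeping of types right — so that the schedule automaton is genuinely acyclic and polynomially/finitely many types suffice, while still covering every accepting run — is the part requiring care; the rest is an application of the classical determinisation of twinned functional transducers.
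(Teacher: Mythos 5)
There is a genuine gap at the heart of your plan: the claim that, after fixing a ``type'' $\tau$ (the bounded sequence of states at which a run changes SCC), the restricted relation $R_\tau$ is a function and the restricted transducer $\tra_\tau$ is twinned. This is false, and the paper's own running example refutes it. Consider the transducer $\tra$ of Fig.~\ref{exmultnseq}. On input $a^n$ ($n\geq 4$) it has exactly two accepting runs, $q_0\xrightarrow{a|a}q_1\xrightarrow{a|b}q_2\xrightarrow{a|a}q_3\xrightarrow{a|a}q_4\cdots$ and $q_0\xrightarrow{a|b}q_2\xrightarrow{a|a}q_3\xrightarrow{a|a}q_4\cdots$, with outputs $aba^{n-2}$ and $ba^{n-1}$. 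Both runs traverse the SCCs $\{q_0,q_1,q_2\}$, $\{q_3\}$, $\{q_4\}$ in the same order, enter each of them at the same state, and use the same transient edges, so any finite ``type'' in your sense assigns them the same $\tau$. Yet they produce different outputs on the same input, and $|\delay(aba^{n-2},ba^{n-1})|=2n$ grows unboundedly, so $R_\tau$ is not a function and $\tra_\tau$ violates the twinning property (this is exactly the witness $q_0\xrightarrow{aaaa|abaa}q_4\xrightarrow{a|a}q_4$ versus $q_0\xrightarrow{aaaa|baaa}q_4\xrightarrow{a|a}q_4$ mentioned in the paper). The WTP does not rescue the argument: its hypothesis requires a loop on $u$ at $q_1$ \emph{before} the branching towards $q_2$, so it only forbids divergence patterns that can be \emph{iterated}; a one-shot branching inside an SCC, followed by parallel loops whose outputs keep the delay growing, is perfectly compatible with the WTP, and it is precisely what makes such relations non-sequential while still multi-sequential. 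The pumping step in your last paragraph would need a loop at the branching state on the relevant input, and no such loop exists here.

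The paper avoids this trap by not partitioning the runs of $\tra$ at all. It applies the subset construction with delays, so that all runs on a given input are merged into a single state carrying their residual delays, and it introduces non-determinism only at the transitions where the set of reachable SCCs strictly decreases: there the accumulated delays are flushed onto the outgoing edges and the construction restarts from a singleton $\{(q,\epsilon)\}$. The WTP is used where it actually bites, namely to bound the delays stored in the states \emph{between} two such rank drops, which makes $\mathcal{W}(\tra)$ finite; the resulting transducer is separable (all remaining non-determinism sits on transient edges), and splitting along the paths of its condensation yields the sequential pieces. Each sequential piece thus simulates \emph{several} runs of $\tra$ simultaneously and only commits to one of the accumulated outputs at each rank drop --- this is the mechanism your decomposition is missing.
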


\begin{proof}
The proof of this Lemma is the goal of Sec.~\ref{sec:decomposition}
which provides a procedure that decomposes a transducer $\tra$ into a
union of sequential transducers. This procedure generalises to
relations the determinisation procedure of functional transducers. In
particular, it is based on a subset construction extended with delays, 
and a careful analysis of the strongly connected components of
$\tra$. \eof
\end{proof}

The following lemma is a key result to prove the other direction of
Theorem \ref{thm:main}.\ref{eqWTP}. It states that the WTP is
preserved by transducer inclusion and
equivalence, and therefore, is independent from the transducer that
realises the relation. 

\begin{lemma}\label{lem:preserve}
    Let $\tra_1,\tra_2$ be two trim transducers. 
\vspace{-1mm}
    \begin{enumerate}
        \item If $\inter{\tra_1} \subseteq \inter{\tra_2}$ and $\tra_2$
          is weakly twinned, then $\tra_1$ is weakly twinned. 
        \item If $\inter{\tra_1}=\inter{\tra_2}$, then $\tra_1$ is
          weakly twinned iff $\tra_2$ is weakly twinned.
    \end{enumerate}
\end{lemma}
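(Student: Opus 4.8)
The plan is to reduce statement~2 to statement~1 and prove statement~1 by contraposition. Statement~2 follows by applying statement~1 twice: $\inter{\tra_1}=\inter{\tra_2}$ gives both inclusions, hence $\tra_1$ is weakly twinned iff $\tra_2$ is. For statement~1 I assume $\inter{\tra_1}\subseteq\inter{\tra_2}$ and that $\tra_1$ is \emph{not} weakly twinned, and aim to show $\tra_2$ is not weakly twinned either.

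Fix a witness of the failure of the WTP in $\tra_1$: states $q_1,q_2$, input words $u,v$, output words $u_1,v_1,u_2,v_2$ with $q_1\xrightarrow{u\mid u_1}q_1\xrightarrow{v\mid v_1}q_1\xrightarrow{u\mid u_2}q_2\xrightarrow{v\mid v_2}q_2$ and, writing $e:=\delay(u_1,u_2)\in G_\Gamma$, $\delay(u_1v_1,u_2v_2)=v_1^{-1}ev_2\neq e$. The first key step is purely algebraic: this forces $\sup_{m\geq 0}|v_1^{-m}ev_2^m|=\infty$. Indeed $m\mapsto v_1^{-m}ev_2^m$ is the orbit of $e$ under the bijection $\phi\colon g\mapsto v_1^{-1}gv_2$ of $G_\Gamma$; were it bounded, then, a bounded subset of the finitely generated free group $G_\Gamma$ being finite, we would get $\phi^a(e)=\phi^b(e)$ for some $a<b$, hence $e=\phi^{p}(e)=v_1^{-p}ev_2^{p}$ with $p:=b-a\geq1$; then $v_1^{p}=(ev_2e^{-1})^{p}$, so $v_1=ev_2e^{-1}$ by uniqueness of roots in free groups, and therefore $v_1^{-1}ev_2=ev_2^{-1}e^{-1}ev_2=e$, contradicting the choice of witness. (If $v_1=\epsilon$ or $v_2=\epsilon$ the claim is immediate, and for real-time $\tra_1$ these cases do not occur.)

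Using trimness of $\tra_1$, fix runs $q_0\xrightarrow{x\mid x'}q_1$ with $q_0$ an initial state and $q_2\xrightarrow{z\mid z'}t$ with $t$ final, and form two families in $\inter{\tra_1}\subseteq\inter{\tra_2}$: $\alpha_m:=(xuv^muz,\ x'u_1v_1^mu_2z'f_{T_1}(t))$ (loop once on $u$ and then $m$ times on $v$ at $q_1$, cross to $q_2$, leave) and $\beta_m:=(xuv^mz,\ x'u_2v_2^mz'f_{T_1}(t))$ (cross directly to $q_2$, loop $m$ times there, leave). Their inputs share the prefix $xuv^m$, so $|\delay(xuv^muz,xuv^mz)|=|z^{-1}u^{-1}z|\leq 2|z|+|u|$ is bounded, whereas $\delay$ of the outputs equals $(z'f_{T_1}(t))^{-1}u_2^{-1}\,(v_1^{-m}ev_2^m)\,z'f_{T_1}(t)$, whose length is $|v_1^{-m}ev_2^m|$ up to an additive constant, hence tends to $\infty$ by the previous step. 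I would also keep the richer two-parameter family $(xuv^muv^{m'}z,\ x'u_1v_1^mu_2v_2^{m'}z'f_{T_1}(t))\in\inter{\tra_1}$, obtained by additionally looping $m'$ times on $v$ at $q_2$. Now, since $\tra_2$ is weakly twinned, Lemma~\ref{lem:decomposition} gives $\inter{\tra_2}=g_1\cup\cdots\cup g_k$ with each $g_i$ sequential, realised by $\mathcal D_i$, and by Proposition~\ref{bounded_variation} each $g_i$ has bounded variation with uniform constant $M:=\max_i M_{\mathcal D_i}$. Pick $i(m),j(m)$ with $\alpha_m\in g_{i(m)}$ and $\beta_m\in g_{j(m)}$; by pigeonhole some pair $(i_0,j_0)$ recurs for infinitely many $m$. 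If $i_0=j_0$, bounded variation of $g_{i_0}$ bounds $|\delay(\text{outputs})|$ by $M(|\delay(\text{inputs})|+2)$, contradicting the divergence; so necessarily $i_0\neq j_0$.

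The case $i_0\neq j_0$ is the crux, and I expect it to be the main obstacle. A plain count does not close it: ``two members of a multi-sequential relation at bounded input-delay and unbounded output-delay'' is harmless in general, since $\{(w,\epsilon):w\in\Sigma^*\}\cup\{(w,w):w\in\Sigma^*\}$ is multi-sequential yet contains $(a^m,\epsilon)$ and $(a^m,a^m)$. What has to be exploited is that a divergence produced by a WTP-failure is \emph{rigid} -- it is generated by two loops meeting at the single state $q_1$ -- so the two branches remain at bounded input-delay \emph{all along} the pumping, not just at the endpoints. Concretely I would analyse the deterministic runs of $\mathcal D_{i_0},\mathcal D_{j_0}$, and the runs of $\tra_2$, on the common prefixes $xuv^m$: determinism makes them ultimately cyclic in $m$, so along an arithmetic progression they reach fixed states with controlled outputs; feeding the whole two-parameter family through them, together with a Ramsey-type argument, should pin one component $\mathcal D_i$ -- or one accepting run of $\tra_2$ -- that must track \emph{both} the $v_1$- and the $v_2$-delay behaviour inside a single run, which, given $v_1^{-1}ev_2\neq e$, either contradicts the bounded variation of $\mathcal D_i$ or directly exhibits a WTP-violating pattern $s_1\xrightarrow{p\mid p_1}s_1\xrightarrow{q\mid q_1}s_1\xrightarrow{p\mid p_2}s_2\xrightarrow{q\mid q_2}s_2$ in $\tra_2$. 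Making this localisation precise -- in particular matching up the positions of the relevant loops inside the (non-deterministic) runs of $\tra_2$ over $xuv^m$ so that the incompatibility can be read off -- is the delicate part of the argument.
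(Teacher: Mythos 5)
There is a genuine gap, and it is exactly where you flag it: the case $i_0\neq j_0$ is not an obstacle you can expect to push through with the tools you set up, and your own counterexample $\{(w,\epsilon)\}\cup\{(w,w)\}$ explains why. Your construction produces only \emph{two} families $\alpha_m,\beta_m$ (the ``stay at $q_1$'' branch and the ``cross to $q_2$'' branch), so the pigeonhole argument can only compare a member of one family with a member of the other, and nothing forces those two witnesses into the same sequential component. The subsequent plan (analysing the deterministic runs of $\mathcal D_{i_0},\mathcal D_{j_0}$ on the common prefixes, plus ``a Ramsey-type argument'') is a hope, not a proof; as stated it does not use the specific shape of the WTP witness in any way that distinguishes it from the harmless example, and I do not see how to complete it along those lines. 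Your algebraic preliminary (unboundedness of $m\mapsto|v_1^{-m}ev_2^m|$ via finiteness of balls in the free group and uniqueness of roots) is correct and is a clean substitute for the paper's Lemma on the growth of the delay, but it does not touch the real difficulty.

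The paper closes this case by choosing the witness words differently, so that the problematic case never arises. Instead of two families it builds $k+1$ words $w_i$ (one for each possible ``crossing point'' from the $q_1$-loop to the $q_2$-loop), each of the form $xv^{r^{k}}uv^{r^{k-1}}\cdots uv^{r^{i+1}}uv^{r^{i}}y$ with outputs $x'v_1^{r^{k}}u_1v_1^{r^{k-1}}\cdots u_1v_1^{r^{i+1}}u_2v_2^{r^{i}}y'$, where the exponents $r^{k},\dots,r^{0}$ grow geometrically with $r=4b(k+1)(M+1)$. The point of the geometric scaling is that for \emph{any} two indices $i_1>i_2$ the inputs $w_{i_1},w_{i_2}$ share a long common prefix and their delay is $O(r^{i_1-1})$, while the outputs diverge at the block of exponent $r^{i_1}$ (one branch emits $u_2v_2^{r^{i_1}}$, the other $u_1v_1^{r^{i_1}}$), giving an output delay of order $r^{i_1}$, strictly larger. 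Since there are $k+1$ such pairs and only $k$ sequential components, pigeonhole forces two of them into the \emph{same} $\mathcal D_j$, and Proposition~1 (bounded variation of sequential transducers) yields the contradiction immediately --- no analysis of the runs of the $\mathcal D_i$ or of $\tra_2$ is needed. If you want to repair your proof, replace your two families by this interpolating family; the rest of your setup (reduction of 2 to 1, contraposition, trimness to get the entry run $x$ and exit run $y$, the decomposition from Lemma~5 and Proposition~1) is the same as the paper's.
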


\vspace{-2mm}
\begin{proof}
Clearly, $2$ is a consequence of $1$. Let us prove $1$ by
contradiction. Suppose that $\tra_1$ is \emph{not} weakly twinned. 
\vspace{-1mm}
Hence, it can be shown that there exist two 
states $q_1$ and $q_2$
of $\tra_1$, an accepting run $q_0 \xrightarrow{x|x'}_{\tra_1} q_1
\xrightarrow{u|u_2}_{\tra_1} q_2 \xrightarrow{y|y'}_{\tra_1} q_f$ and
loops $q_1 \xrightarrow{u|u_1}_{\tra_1} q_1
\xrightarrow{v|v_1}_{\tra_1} q_1$ and $q_2
\xrightarrow{v|v_2}_{\tra_1} q_2$ such that for every $n \in
\mathbb{N}$, $|\delay(u_1v_1^n,u_2v_2^n)| \geq n$ 
(see Lemma \ref{choice_wtp} in appendix).

Since $\tra_2$ is weakly twinned, by Lemma \ref{lem:decomposition},
there exist sequential transducers $\mathcal{D}_1, \ldots,
\mathcal{D}_k$ such that $\inter{\tra_2} = \bigcup_{i = 1}^k \inter{\mathcal{D}_i}$.
Let $M = max \{ |u|,|v|,|y| \}$.
Let $b$ be the maximal $M_{\mathcal{D}_i}$, $1 \leq i \leq k$.
Let $r = 4b(k+1)(M+1)$.
For every $1 \leq i \leq k+1$, consider the pair $(w_i,w'_i)$ in $\inter{\tra_1}$ defined by
\vspace{-1mm}
$$\begin{array}{lllllllllll}
w_i & = & xv^{r^{k}}uv^{r^{k-1}} \ldots uv^{r^{i+1}}uv^{r^{i}}y,\qquad
w'_i & = & x'v_1^{r^{k}}u_1v_1^{r^{k-1}} \ldots u_1v_1^{r^{i+1}}u_2v_2^{r^{i}}y'.
\vspace{-1mm}
\end{array}
$$
Since $\inter{\tra_1}\subseteq \inter{\tra_2}$, and
$(w_i,w'_i)\in\inter{\tra_1}$, we have $(w_i,w'_i)\in\inter{\tra_2}$, hence there exists $1 \leq j \leq k$ such that $(w_i,w'_i)\in\inter{\mathcal{D}_j}$.
As there are $k$ different $\mathcal{D}_j$ and $k+1$ pairs
$(w_i,w_i')$, there exist $k \geq i_1 > i_2 \geq 0$ such that
$(w_{i_1},w_{i_1}'), (w_{i_2},w_{i_2}')\in\inter{\mathcal{D}_j}$.
By Proposition \ref{bounded_variation}, $|\Delta(w_{i_1}',w_{i_2}')| \leq b (|\Delta(w_{i_1},w_{i_2})| + 2)$.
However,
\vspace{-1mm}
$$
\begin{scriptsize}
\begin{array}{lcl@{$\quad\qquad$} lll}
& & b(|\Delta(w_{i_1},w_{i_2})| + 2) & & |\Delta(w'_{i_1},w'_{i_2})|\\ 
& = &  b|y^{-1}uv^{r^{i_1-1}} \ldots uv^{r^{i_2}}y| + 2b  & = & |(u_2v_2^{r^{i_1}}y')^{-1}u_1v_1^{r^{i_1}} \!\!\!\!\!\ldots u_1v_1^{r^{i_2 + 1}} u_2v_2^{r^{i_2}}y'|\\
& \leq & b|uv^{r^{i_1-1}} \ldots uv^{}y| + 2b & \geq & |(u_2v_2^{r^{i_1}})^{-1}u_1v_1^{r^{i_1}}| \\

& = & i_1b|u| + b|y| +  \frac{r^{i_1}-1}{r-1}b |v| + 2b& & {-}
                                                           |u_1v_1^{r^{i_1{-}1}}
                                                           \!\!\!\!\!\!\!\!\ldots
                                                           u_1v_1^{r^{i_2
                                                           + 1}}
                                                           u_2v_2^{r^{i_2}}y'| \\

& \leq &   b(k{+}1)(M{+}1) {+} 2bM(r^{i_1{-}1}{-}1)& \geq & r^{i_1} - b|y| - b|uv^{r^{i_1-1}} \ldots uvy|\\

& \leq & \frac{r}{4} + \frac{r^{i_1}-1}{4}\ <\  \frac{r^{i_1}}{2}  & >
                                       & r^{i_1} - \frac{r^{i_1}}{2}\
                                         =\ \frac{r^{i_1}}{2}
\end{array}
\end{scriptsize}
$$
which is a contradiction.\eof
\end{proof}

\begin{lemma}\label{lem:necessary}
    Let $\tra$ be a trim transducer. If $\inter{T}$ is
    multi-sequential, then $T$ is weakly twinned. 
\end{lemma}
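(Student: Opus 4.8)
The plan is to prove Lemma~\ref{lem:necessary} as essentially a corollary of the two earlier results already established: Lemma~\ref{lem:preserve} (the WTP is preserved by equivalence) and the easy direction of the twinning characterisation together with Remark~\ref{rm:TPvsWTP} (a transducer satisfying the twinning property also satisfies the WTP). So first I would unwind the hypothesis: $\inter{\tra}$ is multi-sequential means there are sequential functions $f_1,\dots,f_k$ with $\inter{\tra} = \bigcup_{i=1}^k f_i$. Each $f_i$ is realised by a sequential transducer $\mathcal{D}_i$, and I may assume each $\mathcal{D}_i$ is trim. Since sequential transducers are in particular twinned (the twinning property holds trivially when the input automaton is deterministic — the two runs $q_0 \xrightarrow{u} q_1$ and $q_0 \xrightarrow{u} q_2$ force $q_1 = q_2$, $u_1 = u_2$, $v_1 = v_2$), Remark~\ref{rm:TPvsWTP} gives that each $\mathcal{D}_i$ satisfies the WTP.

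Next I would combine the $\mathcal{D}_i$ into a single transducer. Take $\tra' = \mathcal{D}_1 \cup \dots \cup \mathcal{D}_k$, the disjoint union of the underlying automata together with the union of the final output functions; as noted in the preliminaries this recognises $\bigcup_i \inter{\mathcal{D}_i} = \inter{\tra}$, and it is trim since each $\mathcal{D}_i$ is. The key point is that $\tra'$ still satisfies the WTP: any witnessing pattern from Definition~\ref{def:wtp} uses loops on $q_1$ and $q_2$ and a path from $q_1$ to $q_2$, so all states involved lie in a single connected component, hence in a single $\mathcal{D}_i$ (the union is disjoint, so there are no transitions across components); since $\mathcal{D}_i$ is weakly twinned, the delay equation holds. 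Therefore $\tra'$ is a trim, weakly twinned transducer with $\inter{\tra'} = \inter{\tra}$.

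Finally, apply Lemma~\ref{lem:preserve}.2 (or directly part~1 with $\tra_1 = \tra$, $\tra_2 = \tra'$ and $\inter{\tra} \subseteq \inter{\tra'}$): since $\tra'$ is weakly twinned and $\inter{\tra} = \inter{\tra'}$, the transducer $\tra$ is weakly twinned, which is exactly the conclusion.

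The only real content — and what I expect to be the main (minor) obstacle to state carefully — is the claim that the disjoint union $\tra'$ inherits the WTP, i.e. that a WTP-violating pattern cannot be spread across two different $\mathcal{D}_i$'s. This is immediate from the fact that disjoint union adds no new transitions and the pattern is connected, but it should be spelled out. A secondary point is checking that a sequential transducer is indeed twinned, which is straightforward from input-determinism as indicated above; alternatively one can invoke the first Theorem (sequential $\Rightarrow$ the defining transducer can be taken twinned) directly, though arguing from determinism is cleaner since it applies to every sequential transducer, not just some. Everything else is bookkeeping.
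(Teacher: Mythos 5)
Your proposal is correct and follows essentially the same route as the paper's own proof: decompose $\inter{\tra}$ into a disjoint union $\tra'$ of sequential transducers, note that each is twinned hence weakly twinned by Remark~\ref{rm:TPvsWTP}, observe that the disjoint union inherits the WTP, and conclude via Lemma~\ref{lem:preserve}. The only difference is that you spell out the two steps the paper labels as clear (sequential $\Rightarrow$ twinned via input-determinism, and preservation of the WTP under disjoint union via connectedness of the witnessing pattern), which is fine.
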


\begin{proof}
    If $\inter{\tra}$ is multi-sequential, then $\tra$ is equivalent to a
    transducer $\tra'$ given as a union of $k$ sequential transducers $\mathcal{D}_i$ for some $k\geq
    0$ with disjoint sets of states. Clearly, if each $\mathcal{D}_i$ is weakly twinned, then 
    so is $\tra'$. Since the $\mathcal{D}_i$ are sequential, they
    satisfy the twinning property, and therefore the weak twinning
    property by Remark \ref{rm:TPvsWTP}. Hence, $\tra'$ is weakly
    twinned. By Lemma \ref{lem:preserve} and since $\tra'$ and $\tra$
    are equivalent, $\tra$ is also weakly twinned. \eof
\end{proof}

The following result implies that,  in order to show that a rational relation is not
multi-sequential, it suffices to exhibit a function contained in that
relation, which is not multi-sequential. 

\begin{corollary}
    Let $R$ be a rational relation, and $f$ a rational function such that
    $f\subseteq R$ and $f$ is not multi-sequential. Then $R$ is not
    multi-sequential. 
\end{corollary}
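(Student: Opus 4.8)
The plan is to derive this corollary directly from Lemma~\ref{lem:preserve}, together with Lemma~\ref{lem:necessary} applied contrapositively. First I would fix trim transducers $\tra_R$ and $\tra_f$ realising $R$ and $f$ respectively, which exist by trimming arbitrary transducers in polynomial time. Since $f$ is not multi-sequential, Lemma~\ref{lem:necessary} (in its contrapositive form) tells us that $\tra_f$ is \emph{not} weakly twinned. Now we have $\inter{\tra_f} = f \subseteq R = \inter{\tra_R}$, so if $\tra_R$ were weakly twinned, then by part~$1$ of Lemma~\ref{lem:preserve} the transducer $\tra_f$ would also be weakly twinned, a contradiction. Hence $\tra_R$ is not weakly twinned, and so by Theorem~\ref{thm:main}.\ref{eqWTP}, $R = \inter{\tra_R}$ is not multi-sequential.

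A slightly more careful bookkeeping point: Lemma~\ref{lem:preserve} is stated for trim transducers, and Lemma~\ref{lem:necessary} and Theorem~\ref{thm:main} likewise. Since $R$ is rational and $f$ is a rational function, both admit (trim) transducer presentations, so all the hypotheses of the cited results are met; the only thing being used about $f$ beyond its membership in $R$ is that some — equivalently, every, by Lemma~\ref{lem:preserve}.$2$ — transducer for $f$ fails the WTP, which is exactly the contrapositive of Lemma~\ref{lem:necessary}.

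I do not expect any real obstacle here: this corollary is essentially a one-line consequence of the machinery already assembled (preservation of the WTP under inclusion, plus the characterisation), and the proof is just chaining those implications together. If anything, the only subtlety worth a sentence in the write-up is making explicit that we are using the \emph{inclusion} direction of Lemma~\ref{lem:preserve} rather than the equivalence direction, since $f$ need only be contained in $R$, not equal to it.

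\begin{proof}
Let $\tra_R$ be a trim transducer with $\inter{\tra_R}=R$ and let $\tra_f$ be a trim transducer with $\inter{\tra_f}=f$; these exist since $R$ and $f$ are rational. Since $f$ is not multi-sequential, Lemma~\ref{lem:necessary} implies that $\tra_f$ is not weakly twinned. Now $\inter{\tra_f}=f\subseteq R=\inter{\tra_R}$, so if $\tra_R$ were weakly twinned, then by Lemma~\ref{lem:preserve}.$1$ the transducer $\tra_f$ would be weakly twinned as well, a contradiction. Hence $\tra_R$ is not weakly twinned, and by Theorem~\ref{thm:main}.\ref{eqWTP}, $R=\inter{\tra_R}$ is not multi-sequential.\eof
\end{proof}
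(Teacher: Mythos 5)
Your argument follows essentially the same route as the paper's own proof: reduce to transducers, use the characterisation of multi-sequentiality by the WTP, and transfer the failure of the WTP from $\tra_f$ to $\tra_R$ via Lemma~\ref{lem:preserve}.1. Two small points, though. First, the step ``$f$ not multi-sequential $\Rightarrow$ $\tra_f$ not weakly twinned'' is \emph{not} the contrapositive of Lemma~\ref{lem:necessary}: that lemma says multi-sequential $\Rightarrow$ weakly twinned, whose contrapositive is the reverse implication (not weakly twinned $\Rightarrow$ not multi-sequential). What you need is the contrapositive of Lemma~\ref{lem:decomposition} (equivalently, one direction of Theorem~\ref{thm:main}.\ref{eqWTP}, which is what the paper cites); the step is valid, but the citation should be fixed. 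Second, a general rational relation need not admit a \emph{real-time} (hence trim, in the paper's sense) transducer presentation, so ``these exist since $R$ and $f$ are rational'' is slightly too quick; the paper covers this by appealing to Remark~\ref{rm:realtime}, and your write-up should do the same.
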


\begin{proof}
    We assume that $R$ and $f$ are defined by transducers
    $\tra$ and $\tra_f$. The result still holds for 
    rational transducers, for the same reasons as the one explained in
    Remark \ref{rm:realtime}. Since $f$ is
    not multi-sequential, by Theorem \ref{thm:main}.\ref{eqWTP},
    $\tra_f$ is not weakly twinned. Since $f\subseteq R$, by Lemma
    \ref{lem:preserve} it implies that $\tra$ is not weakly twinned, and hence not
    multi-sequential, again by Theorem
    \ref{thm:main}.\ref{eqWTP}. \eof
\end{proof}



\section{Decomposition Procedure}\label{sec:decomposition}

In this section, we show how to decompose a transducer into a union of
sequential transducers, via a series of constructions, whenever the weak twinning property is
satisfied. For simplicity, we sometimes consider \emph{multi-transducers},
i.e. transducers such that the function $f_T$ maps any final state to
a finite set of output words. Let $\tra = (Q,E,I,T,f_T)$ be a transducer.
Let $\sim\subseteq Q^2$ defined by $q_1 \sim q_2$ if $q_1$ and $q_2$
are strongly connected, i.e. if there exist a run from $q_1$ to $q_2$ and a run from $q_2$ to $q_1$.
The equivalence classes of $\sim$ are called the strongly connected
components (SCC) of $\tra$. An edge of $\tra$ is called \emph{transient} if its source and 
target are in distinct SCC, or equivalently, if there exist no run from its target to its source.
The condensation of $\tra$ is the directed acyclic graph $\Psi (\tra)$ whose vertices are the SCC of $\tra$ and whose edges are the transient edges of $\tra$.
A transducer is called \emph{separable} if it has a single initial
state and any two edges of same source and same input symbol are
transient.

\paragraph{\textbf{Split}} Let $\tra = (Q,E,I,T,f_T)$ be a transducer. 
Let $P$ be the paths of the condensation $\Psi(\tra)$ starting in an SCC containing an initial state. Note that $P$ is finite as $\Psi (\tra)$ is a DAG.
For each path $p \in P$, let $\tra_p$ be the subtransducer of $\tra$
obtained by removing all the transient edges of $\tra$ but the ones
occurring in $p$. The split of $\tra$ is the transducer
$\textsf{split}(\tra) =\bigcup_{p \in P}\tra_p$. Clearly,

\begin{lemma}\label{split_eq}
The transducer $\textsf{split}(\tra)$ is equivalent to $\tra$,
i.e. $\inter{\textsf{split}(\tra)}=\inter{\tra}$. 
\end{lemma}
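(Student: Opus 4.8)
The plan is to establish the two inclusions separately. The inclusion $\inter{\textsf{split}(\tra)}\subseteq\inter{\tra}$ is immediate: for each path $p\in P$ the transducer $\tra_p$ is obtained from $\tra$ by deleting transitions only (its states, initial states, final states and final output function $f_T$ are those of $\tra$), so every accepting run of $\tra_p$ is also an accepting run of $\tra$ with the same output; hence $\inter{\tra_p}\subseteq\inter{\tra}$, and since $\textsf{split}(\tra)$ is the (finite) disjoint union of the $\tra_p$, it recognises $\bigcup_{p\in P}\inter{\tra_p}\subseteq\inter{\tra}$.

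For the converse, I would take $(v,w)\in\inter{\tra}$ witnessed by an accepting run $\rho:\, i\xrightarrow{v\mid w'} t$ with $i\in I$, $t\in T$ and $w=w'f_T(t)$, and produce a single path $p\in P$ such that $\rho$ is already an accepting run of $\tra_p$. Write $\rho=\rho_0\,e_1\,\rho_1\,e_2\cdots e_m\,\rho_m$, where $e_1,\dots,e_m$ are precisely the transient edges occurring in $\rho$, listed in order of occurrence, and each $\rho_j$ is a (possibly empty) subrun that uses no transient edge. Since a non-transient edge has its source and target in the same SCC, each $\rho_j$ stays within a single SCC $C_j$, and $e_j$ is a transient edge whose source lies in $C_{j-1}$ and whose target lies in $C_j$; hence in the condensation $\Psi(\tra)$ there is an edge from $C_{j-1}$ to $C_j$ for every $j$, so $C_0,C_1,\dots,C_m$ together with $e_1,\dots,e_m$ form a directed walk, which is a path $p$ of $\Psi(\tra)$ because $\Psi(\tra)$ is a DAG. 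Moreover $C_0$ contains the initial state $i$, so $p$ starts in an SCC containing an initial state, i.e.\ $p\in P$. (When $\rho$ has length $0$ we have $m=0$ and $p$ is the single SCC of $i$.)

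Finally, I would check that $\rho$ survives in $\tra_p$: every transition of $\rho$ is either non-transient --- hence kept in $\tra_p$, which removes only transient edges --- or one of $e_1,\dots,e_m$, which occur in $p$ and are therefore kept in $\tra_p$ by definition. As $\tra_p$ inherits $I$, $T$ and $f_T$ from $\tra$, the run $\rho$ is accepting in $\tra_p$ with output $w$, so $(v,w)\in\inter{\tra_p}\subseteq\inter{\textsf{split}(\tra)}$, which completes the argument.

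The only point requiring a little care is the structural claim that the SCCs visited by $\rho$, read off via its transient edges, form a genuine path of the condensation starting in an SCC with an initial state --- since the whole argument hinges on this path belonging to $P$. This is the standard observation that a run progresses monotonically through the DAG of SCCs and never re-enters an SCC it has left (a transient edge thus also occurs at most once in any run); once it is in place, the rest is routine bookkeeping about which transitions of $\rho$ appear in $\tra_p$.
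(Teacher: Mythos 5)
Your proof is correct: the paper gives no argument for this lemma at all (it is prefaced only by ``Clearly,''), and your two-inclusion argument --- in particular the observation that the transient edges of any accepting run, taken in order, form a path of the condensation $\Psi(\tra)$ starting in the SCC of the initial state, so the run survives in the corresponding $\tra_p$ --- is exactly the intended justification. Nothing to correct.
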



If $\tra$ is separable, then $\textsf{split}(\tra)$ is a decomposition
of $\tra$ into sequential transducers. Since any multi-transducer can
be transformed into an equivalent union of transducers over the same
underlying automaton while preserving separability, we get the
following result (fully proved in Appendix):

\begin{lemma}\label{sepmultiseq}
Let $\tra$ be a separable multi-transducer with a single initial state.
Then $\inter{\tra}$ is multi-sequential. 
\end{lemma}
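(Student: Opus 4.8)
The statement to prove is Lemma~\ref{sepmultiseq}: every separable multi-transducer $\tra$ with a single initial state defines a multi-sequential relation. The plan is to combine the two ingredients already set up, namely the \textsf{split} construction and the observation that on a separable transducer the subtransducers $\tra_p$ are sequential. First I would reduce from multi-transducers to ordinary transducers: by the remark preceding the lemma, a separable multi-transducer can be replaced by a finite union of ordinary transducers sharing the same underlying automaton (for each final state $q$, pick one element of $f_T(q)$ per copy), and separability as well as the single-initial-state property are inherited by each copy; since a finite union of multi-sequential relations is multi-sequential, it suffices to treat the case where $f_T$ is a genuine function. Then I would apply \textsf{split} and Lemma~\ref{split_eq}, which gives $\inter{\tra} = \bigcup_{p\in P}\inter{\tra_p}$ with $P$ finite.

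The heart of the argument is then to show each $\tra_p$ is sequential, i.e.\ its input automaton is deterministic. Fix a path $p$ in the condensation $\Psi(\tra)$ and consider two distinct edges $e, e'$ of $\tra_p$ with the same source state $s$ and the same input symbol $\sigma$. I would split into two cases. If at least one of $e, e'$ is transient: by construction $\tra_p$ retains only the transient edges lying on $p$, and $p$, being a path in a DAG, visits each SCC at most once, hence uses at most one transient edge out of any given SCC; since two transient edges out of the same vertex $s$ of $\tra_p$ would have to be two distinct edges on $p$ leaving $s$'s SCC, there can be at most one such edge, and a transient edge and a non-transient edge out of $s$ share no input symbol only if\,\ldots{} — here I need separability, which says precisely that any two edges of $\tra$ with the same source and same input symbol are transient, so a non-transient edge never competes with any other edge on its input symbol. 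If both $e, e'$ are non-transient: then separability forbids them outright. Either way $\tra_p$ is input-deterministic; together with the single initial state (inherited because $\tra$ has a single initial state and \textsf{split} does not add states) this makes $\tra_p$ sequential, so $\inter{\tra_p}$ is a sequential function.

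Finally, $\inter{\tra} = \bigcup_{p\in P}\inter{\tra_p}$ exhibits $\inter{\tra}$ as a finite union of sequential functions, i.e.\ as a multi-sequential relation, which is the claim.

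**Expected main obstacle.** The one genuinely delicate point is the bookkeeping around which edges survive in $\tra_p$ and why no two of them collide on an input symbol. Separability handles all pairs involving a non-transient edge for free, so the real content is: a path in the condensation DAG uses at most one outgoing transient edge per SCC, hence $\tra_p$ has at most one transient edge leaving any given state. This is where I would be careful to argue that two transient edges of $\tra_p$ out of the same state $s$ must both appear on $p$ (any transient edge of $\tra_p$ is, by definition of \textsf{split}, an edge of $p$), that they therefore leave the same vertex of $\Psi(\tra)$, and that a simple path in a DAG cannot take two different edges out of one vertex. The multi-transducer-to-transducer reduction and the assembly at the end are routine once this is pinned down.
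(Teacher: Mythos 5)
Your proof is correct and follows essentially the same route as the paper's: reduce multi-transducers to a finite union of ordinary transducers over the same underlying automaton, apply \textsf{split}, and argue that each $\tra_p$ is input-deterministic because separability forces any two competing edges to be transient, while a path in the condensation DAG can use at most one transient edge out of any given source. The only difference is cosmetic: your case split on whether $e,e'$ are transient collapses immediately, since separability already forces both to be transient, which is exactly how the paper phrases it.
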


\paragraph{\textbf{Determinisation}} We recall the determinisation
procedure for transducers, for instance presented in 
\cite{DBLP:journals/tcs/BealC02}. It extends the subset construction
with delays between output words, and outputs the longest common
prefix of all the output words produced on transitions on the same input
symbol, and keep the remaining suffixes (delays) in the states. 
Precisely, let $\tra = (Q,E,I,T,f_T)$ be a trim transducer. For every $U \in \mathcal{P}_f(Q \times \Gamma^*)$, for every $\sigma \in \Sigma$, let 
$$\begin{array}{ll}
R_{U,\sigma} & = \{(q,w) \in Q \times \Gamma^* | \exists (p,u) \in U, \exists (p,\sigma|v,q) \in E \textup{ s.t. } w = uv\},\\
w_{U,\sigma} & \textup{be the largest common prefix of the words $\{ w | \exists q \in Q \textup{ s.t. } (q,w) \in R_{U,\sigma}$} \} ,\\
P_{U,\sigma} & = \{(q,w) | (q,w_{U,\sigma}w) \in R_{U,\sigma} \}.
\end{array}$$
Let $\bar{\mathcal{D}}(\tra)$ be the infinite-state multi-transducer
over the set of states $\mathcal{P}_f(Q \times \Gamma^*)$, with set of
edges $\{ (U,\sigma|w_{U,\sigma},P_{U,\sigma}) | U \in \mathcal{P}_f(Q
\times \Gamma^*), \sigma \in \Sigma \}$,  initial state $U_0 = I\times
\{\epsilon\}$, set of final states $\{ U \in \mathcal{P}_f(Q \times \Gamma^*) | U \cap (T \times \Gamma^*) \neq \emptyset \}$, and final output relation that maps each final state $U$ to $\{wf_T(q) | q \in T \textup{ and } (q,w) \in U \}$.
Note that $\bar{\mathcal{D}}(\tra)$ has a deterministic (potentially
infinite) input-automaton.

The determinisation of $\tra$, written $\mathcal{D}(\tra)$, is the trim part of $\bar{\mathcal{D}}(\tra)$.
The transducer $\mathcal{D}(\tra)$ is equivalent to $\tra$ (see corollary \ref{det_equ}, appendix).
It is
well-known that $\mathcal{D}(\tra)$ is a (finite) sequential
transducer iff $\tra$ satisfies the twinning property.

Fig.~\ref{exmultnseq} depicts a transducer that satisfies the weak twinning
property, but not the twinning property. As a consequence,
$\mathcal{D}(\tra)$ is infinite (a part of $\mathcal{D}(\tra)$ can be
seen on Fig.~\ref{exmultnseq-det}). The non satisfaction of the
twinning property is witnessed by the two runs $q_0\xrightarrow{aaaa|abaa} q_4 \xrightarrow{a|a} q_4$ and $q_0\xrightarrow{aaaa|baaa} q_4 \xrightarrow{a|a} q_4$. Note that these runs do not harm the weak
twinning property. The idea of the next construction,
called the weak determinisation, is to keep some, well-chosen, non-deterministic
transitions, and reset the determinisation whenever
it definitively leaves an SCC (the SCC $\{q_0,q_1,q_2\}$ in this
example). We explain this procedure when there is a single initial
state, as any transducer can be easily transformed into a finite
union of transducers with single initial states.

\begin{figure}[!ht]
\centering
\subfigure[$\tra$\label{exmultnseq}]{
\begin{tikzpicture}[->,>=stealth',auto,node distance=2cm,thick,scale=0.7,every node/.style={scale=0.7}]
  \tikzstyle{every state}=[fill=yellow!30,text=black]
  \tikzstyle{initial}=[initial by arrow, initial where=left, initial text=]
  \tikzstyle{accepting}=[accepting by arrow, accepting where=right, accepting text=]

  \node[initial,state] (A)  {$q_0$};
  \node[state] [right of=A] (B)  {$q_1$};
  \node[state] [below of=A] (C)  {$q_2$};
  \node[state] [right of=C] (D)  {$q_3$};
  \node[accepting,state] [right of=D] (E)  {$q_4$};
  \path (A) edge node {\trans{a}{a}} (B);
  \path (A) edge [bend right,left] node {\trans{a}{b}} (C);
  \path (B) edge node {\trans{a}{b}} (C);
  \path (C) edge [bend right] node  {\trans{b}{\epsilon}} (A);
  \path (C) edge [below] node {\trans{a}{a}} (D);
  \path (D) edge [below] node {\trans{a}{a}} (E);
  \path (E) edge [loop above] node {\trans{a}{a}} (E);
\end{tikzpicture}
}
\subfigure[$\mathcal{D}(\tra)$\label{exmultnseq-det}]{
\begin{tikzpicture}[->,>=stealth',auto,node distance=2cm,thick,scale=0.7,every node/.style={scale=0.7}]
  \tikzstyle{every state}=[fill=yellow!30,text=black, shape = rectangle,rounded corners,font=\scriptsize]
  \tikzstyle{initial}=[initial by arrow, initial where=left, initial text=]
  \tikzstyle{accepting}=[accepting by arrow, accepting where=right, accepting text=$\cdots$]

  \node[initial,state] (A)  {$(q_0, \epsilon)$};
  \node[state] [right of=A] (B)  {$\begin{array}{l}(q_1, a)\\ (q_2, b)\end{array}$};
  \node[state] [below of=A] (C)  {$\begin{array}{lll}(q_2, ab)\\ (q_3, ba)\end{array}$};
  \node[state] [right of=C] (D)  {$\begin{array}{lll}(q_3, aba)\\ (q_4, baa)\end{array}$};
  \node[state,accepting] [right of=D] (E)  {$\begin{array}{lll} (q_4, abaa)\\ (q_4, baaa)\end{array}$};
  \path (A) edge [bend right] node {\trans{a}{\epsilon}} (B);
  \path (B) edge node {\trans{a}{\epsilon}} (C);
  \path (C) edge [below] node {\trans{a}{\epsilon}} (D);
  \path (B) edge [bend right,above] node {\trans{b}{b}} (A);
  \path (C) edge node {\trans{b}{ab}} (A);
  \path (D) edge [below] node {\trans{a}{\epsilon}} (E);
\end{tikzpicture}
}
\vspace{-4mm}
\caption{\label{fig:det1} Non determinisable transducer.}
\vspace{-6mm}
\end{figure}
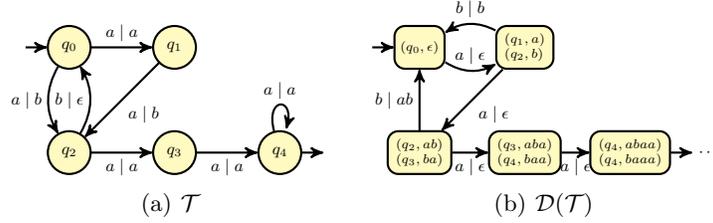

\paragraph{\textbf{Weak determinisation}} Let $\tra = (Q,E,I,T,f_T)$ be a trim transducer with a single initial state.
For every $U \in \mathcal{P}_f(Q \times \Gamma^*)$, let the rank $n_{U}$
of $U$ be the set containing all the SCC of $\tra$ accessible from the
states $q\in U$. The multi-transducer $\bar{\mathcal{W}}(\tra)$ is obtained from $\bar{\mathcal{D}}(\tra)$ by splitting the edges that do not preserve the rank, as follows.
If $(U,u|v,U')$ is an edge of $\bar{\mathcal{D}}(\tra)$ such that
$n_{U'}$ is strictly included in $n_{U}$, it is removed, and replaced
by the set of edges $\{ (U,u|vw,\{(q,\epsilon)\}) | (q,w) \in
U'\}$. It is easily shown that any pair of distinct edges of the form 
$U\xrightarrow{a|v_1} U_1$ and $U\xrightarrow{a|v_2} U_2$ in $\bar{\mathcal{W}}(\tra)$ have necessarily been
created by this transformation (because without this transformation
everything stays input-deterministic). Therefore, since the rank
strictly decreases ($n_{U_2}\subsetneq n_U$ and $n_{U_1}\subsetneq
n_U$) and can never increase in $\bar{\mathcal{W}}(\tra)$, there is no
run from $U_2$ to $U$, nor from $U_1$ to $U$ in
$\bar{\mathcal{W}}(\tra)$, and the two edges are transient. As a
consequence, 

\begin{lemma}\label{lem:separable}
The infinite transducer $\bar{\mathcal{W}}(\tra)$ is separable.
\end{lemma}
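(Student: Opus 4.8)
The plan is to verify the two defining requirements of separability directly from the construction of $\bar{\mathcal{W}}(\tra)$: that it has a single initial state, and that any two distinct edges sharing the same source and the same input symbol are transient. The first point is immediate. The weak determinisation transformation only deletes the rank-decreasing edges of $\bar{\mathcal{D}}(\tra)$ and replaces them by new edges; it never touches the initial state, and $\bar{\mathcal{D}}(\tra)$ has the single initial state $U_0 = I\times\{\epsilon\}$, a singleton since $\tra$ is assumed to have a single initial state. So $\bar{\mathcal{W}}(\tra)$ has a single initial state, and it remains to establish transience.

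The key technical ingredient is monotonicity of the rank along runs. First I would observe that for every edge $(U,\sigma\mid v,U')$ of $\bar{\mathcal{D}}(\tra)$ we have $n_{U'}\subseteq n_{U}$: by definition of $R_{U,\sigma}$ and $P_{U,\sigma}$, every state $q$ occurring in $U'$ is a $\sigma$-successor in $\tra$ of some state $p$ occurring in $U$, hence every SCC of $\tra$ reachable from $q$ is reachable from $p$ and thus accessible from $U$. For an edge of $\bar{\mathcal{W}}(\tra)$ created by splitting, namely $(U,\sigma\mid vw,\{(q,\epsilon)\})$ with $(q,w)\in U'$ and $n_{U'}\subsetneq n_U$, the target has rank $n_{\{(q,\epsilon)\}}\subseteq n_{U'}\subsetneq n_U$, so such an edge strictly decreases the rank; all other edges of $\bar{\mathcal{W}}(\tra)$ are inherited unchanged from $\bar{\mathcal{D}}(\tra)$, and since the $\bar{\mathcal{D}}$-edges that were removed are exactly the rank-decreasing ones, the inclusion $n_{U'}\subseteq n_U$ forces $n_{U'}=n_U$ for the rest, so these edges preserve the rank. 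Consequently the rank is non-increasing along every run of $\bar{\mathcal{W}}(\tra)$.

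Finally I would recall, as already noted in the construction, that any two distinct edges $U\xrightarrow{\sigma\mid v_1} U_1$ and $U\xrightarrow{\sigma\mid v_2} U_2$ of $\bar{\mathcal{W}}(\tra)$ must both be split edges: $\bar{\mathcal{D}}(\tra)$ has a deterministic input automaton, so any violation of input-determinism is introduced only by the splitting step, which for a fixed source and input symbol produces a whole family of new edges simultaneously. Hence $n_{U_1}\subsetneq n_U$ and $n_{U_2}\subsetneq n_U$. If there were a run from $U_1$ back to $U$ in $\bar{\mathcal{W}}(\tra)$, the monotonicity just established would give $n_U\subseteq n_{U_1}$, contradicting $n_{U_1}\subsetneq n_U$; symmetrically for $U_2$. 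Thus no run leads from $U_1$ (resp. $U_2$) to $U$, which is precisely the statement that the two edges are transient. Together with the single initial state, this shows $\bar{\mathcal{W}}(\tra)$ is separable.

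I expect the only real subtlety to be the inclusion $n_{U'}\subseteq n_U$ for $\bar{\mathcal{D}}(\tra)$-edges — that is, reading off carefully from the definitions of $R_{U,\sigma}$ and $P_{U,\sigma}$ that the $\tra$-states populating a successor subset are genuine $\tra$-successors of those in the current subset, so the set of reachable SCCs can only shrink. Everything else (the single-initial-state claim and the fact that all genuine nondeterminism comes from the split) is essentially immediate from the construction.
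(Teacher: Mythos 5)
Your proposal is correct and follows essentially the same route as the paper: observe that any two distinct edges with the same source and input symbol must both arise from the splitting step (since $\bar{\mathcal{D}}(\tra)$ is input-deterministic), that split edges strictly decrease the rank while all other edges preserve it, and conclude that no run can return to the source, making those edges transient. Your explicit verification that $n_{U'}\subseteq n_U$ holds along every edge of $\bar{\mathcal{D}}(\tra)$ is a detail the paper leaves implicit, but it is the same argument.
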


The weak determinisation of $\tra$, written $\mathcal{W}(\tra)$, is the trim part of $\bar{\mathcal{W}}(\tra)$.

\begin{proposition}\label{weakdet}
$\mathcal{W}(\tra)$ and $\tra$ are equivalent. Moreover, 
if $\tra$ is weakly twinned, $\mathcal{W}(\tra)$ is finite, and it
is a multi-transducer.
\end{proposition}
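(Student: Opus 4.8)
The equivalence of $\mathcal{W}(\tra)$ and $\tra$ follows in two steps. First, $\bar{\mathcal{D}}(\tra)$ is equivalent to $\tra$ (this is the classical determinisation fact recalled just above, Corollary \ref{det_equ}). Second, the transformation from $\bar{\mathcal{D}}(\tra)$ to $\bar{\mathcal{W}}(\tra)$ that splits rank-decreasing edges is output-preserving: whenever an edge $(U, u\mid v, U')$ is replaced by the family $\{(U, u\mid vw, \{(q,\epsilon)\}) \mid (q,w)\in U'\}$, each resulting edge keeps exactly the information $(q, vw)$ that was encoded as $(q,w)$ in $U'$ after reading $v$; so every run of $\bar{\mathcal{D}}(\tra)$ corresponds to a run (one of several, depending on which state $q$ is ``committed to'') of $\bar{\mathcal{W}}(\tra)$ producing the same output, and conversely. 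Passing to trim parts preserves the recognised relation, so $\mathcal{W}(\tra)$ is equivalent to $\tra$. I would state this carefully but the verification is a routine bookkeeping on the subset-with-delays construction.

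The substantive claim is finiteness when $\tra$ is weakly twinned. The state set of $\mathcal{W}(\tra)$ is a subset of $\mathcal{P}_f(Q\times\Gamma^*)$; since $Q$ is finite, finiteness amounts to bounding the lengths of the delay words $w$ appearing in the pairs $(q,w)$ of reachable trim states. The plan is: fix a reachable state $U$ of $\mathcal{W}(\tra)$, and track how it was reached. By construction of $\bar{\mathcal{W}}(\tra)$, the rank $n_U$ can only decrease along a run, and every time it strictly decreases the run passes through a ``reset'' state of the form $\{(q,\epsilon)\}$ (delays wiped to $\epsilon$). Hence between two consecutive resets — equivalently, within any maximal rank-constant segment of the run — the behaviour is exactly that of the ordinary determinisation $\bar{\mathcal{D}}$ restricted to the subtransducer whose states all lie in SCCs belonging to the current rank. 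So it suffices to bound the delays produced by $\bar{\mathcal{D}}$ on such a rank-constant segment. Here is where the WTP is used: a rank-constant segment of length exceeding $|Q|$ must, by pigeonhole on the pairs of states tracked in the subsets, force a ``double loop'' pattern $q_1\xrightarrow{u\mid u_1}q_1\xrightarrow{v\mid v_1}q_1 \xrightarrow{u\mid u_2}q_2\xrightarrow{v\mid v_2}q_2$; if such a segment allowed unbounded delay growth, iterating the $v$-loops would produce $|\Delta(u_1v_1^n,u_2v_2^n)|\to\infty$, contradicting $\delay(u_1,u_2)=\delay(u_1v_1,u_2v_2)$, which forces $v_1, v_2$ to satisfy $u_1 v_1 = u_2 v_2 u_1^{-1}u_2\cdots$, i.e. the delay stays constant along the loop. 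Making this precise, a standard argument (analogous to the one proving that the twinning property bounds the delays in $\bar{\mathcal D}$, see Theorem \ref{thm:main} and the literature on determinisation) shows that within a rank-constant segment all delays are bounded by a constant $C$ depending only on $\tra$ (say polynomially in $M_\tra$ and $|Q|$). Since resets re-zero the delays and the rank can strictly decrease at most $|n_{U_0}| \le (\text{number of SCCs})$ times, the delays never exceed $C$ throughout, so only finitely many subsets $U$ are reachable and $\mathcal{W}(\tra)$ is finite.

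Finally, $\mathcal{W}(\tra)$ is a multi-transducer because its final output relation, inherited from $\bar{\mathcal{D}}(\tra)$, maps a final state $U$ to the finite set $\{w f_T(q)\mid q\in T,\ (q,w)\in U\}$; as each reachable $U$ is a finite set and the delays $w$ are bounded, this is indeed a finite set of output words per state, which is exactly the definition of a multi-transducer.

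The main obstacle is the delay-boundedness argument on a rank-constant segment: one has to turn the local WTP equation $\delay(u_1,u_2)=\delay(u_1v_1,u_2v_2)$ into a global bound on all delay words tracked in the subset construction. The cleanest route is to first show that along any two runs in $\tra$ that stay within the SCCs of a fixed rank and read the same input, the delay between their outputs is bounded by a constant — this is precisely the statement that $\tra$ restricted to those SCCs satisfies the twinning property (the WTP double-loop pattern only requires $q_1$ to reach $q_2$, which holds within a single rank-closed region once we also restrict to runs that revisit $q_1$), and then invoke the classical fact that the twinning property bounds the delays appearing in $\bar{\mathcal D}$. Care is needed because $q_1$ and $q_2$ need not be in the same SCC even within a fixed rank; but the WTP is stated exactly for this asymmetric loop-then-loop pattern, which is what the subset construction produces, so the bound goes through.
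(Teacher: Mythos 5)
Your overall architecture --- equivalence via the output-preserving edge splitting, and finiteness by bounding the delay words within each rank-constant segment after a reset to a state of the form $\{(q,\epsilon)\}$ --- is the same as the paper's, and the equivalence and multi-transducer parts are fine. The gap is in the key step, namely turning the WTP into a delay bound on a rank-constant segment. You reduce it to the claim that the restriction of $\tra$ to the SCCs of a fixed rank satisfies the ordinary twinning property (with the reset state $p$ playing the role of the initial state), justified by saying the WTP pattern ``only requires $q_1$ to reach $q_2$''. That is not what Definition~\ref{def:wtp} requires: the state $q_1$ must carry a loop $q_1\xrightarrow{u|u_1}q_1$ on the \emph{same} input word $u$ that sends $q_1$ to $q_2$, in addition to its $v$-loop. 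A pigeonhole on \emph{pairs} of tracked states only produces the ordinary TP pattern $p\xrightarrow{u}q_1\xrightarrow{v}q_1$ and $p\xrightarrow{u}q_2\xrightarrow{v}q_2$; to upgrade it to a WTP pattern you need $q_1$ to be able to return to $p$ (so that some input $wu$ both loops at $q_1$ and sends $q_1$ to $q_2$), and rank preservation only guarantees that \emph{some} tracked state can still reach the SCC of $p$, not that $q_1$ or $q_2$ can. In the transducer of Fig.~\ref{exmultnseq}, for instance, the subset $\{(q_2,ab),(q_3,ba)\}$ has the same rank as $\{(q_0,\epsilon)\}$ even though $q_3$ cannot return to $q_0$. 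So your reduction to ``the rank-restricted transducer is twinned'' does not go through as stated, and the boundedness of delays on a rank-constant segment is precisely what remains to be proved.

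The paper closes exactly this gap with two further ingredients. First, it takes a \emph{shortest} run of $\bar{\mathcal{W}}(\tra)$ reaching a bad subset (one containing a pair with delay at least $2M_{\tra}|Q|^3$) and pigeonholes on \emph{triples} of states --- hence the cube in the bound: the two runs carrying the large delay plus a third run ending in a state $p_0$ of the subset that can return to $p$, which exists by rank preservation. Second, it argues by cases on a position where all three runs simultaneously loop: if the delay between the two bad runs is unchanged across the loop, cutting the loop out yields a strictly shorter run to a bad subset, contradicting minimality; otherwise the delay between the returning run and one of the two bad runs must change across the loop, and \emph{that} pair, routed through $p_0\rightarrow p$ to manufacture the required $u$-loop at the repeated state, produces a genuine WTP violation. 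To repair your plan you would need to add both the third (returning) run and the minimal-counterexample/loop-cutting argument; the two-run pigeonhole plus ``restricted TP'' shortcut is not sufficient.
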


The main idea is to prove that, as long as the weak twinning property
is satisfied, the length of the words present in the states of
$\mathcal{W}(\tra)$ can be bounded. The proof can be found in
Appendix.

\begin{example}
Let us illustrate the weak determinisation on the transducer $\tra$ of
Fig.~\ref{exmultnseq}. Consider the determinisation
$\mathcal{D}(\tra)$ of $\tra$ of Fig.~\ref{exmultnseq-det}.
When it is in state $U_1 = \{(q_2,ab),(q_3,ba)\}$, on input $a$, it moves to
state $U_2 = \{(q_3,aba),(q_4,baa)\}$, definitely leaving the SCC
$\{q_0,q_1,q_2\}$ of $\tra$ (the rank $n_{U_2}$ of $U_2$ is
strictly included in the rank $n_{U_1}$ of $U_1$). As a result, this
transition is removed from $\bar{\mathcal{D}}(\tra)$, and replaced by the
transitions $U_2\xrightarrow{a|aba} \{ (q_3,\epsilon)\}$ and
$U_2\xrightarrow{a|baa} \{ (q_4,\epsilon)\}$. The resulting transducer
$\mathcal{W}_\tra$ is depicted on Fig.~\ref{fig:wdet1} (where the new
transitions are dotted). Fig.~\ref{fig:split_wdet1} shows how the
latter transducer is split into a union.
\end{example}


\paragraph{Proof of Lemma \ref{lem:decomposition}} We can finally prove that the every weakly twinned transducer is multi-sequential.
Let $\tra = (Q,E,I,T,f_T)$ be a weakly twinned transducer.
Then $\tra$ is equivalent to $\bigcup_{i \in I}\tra_i$, where $\tra_i$ is the transducer obtained by keeping only $i$ as initial state.
Given $i \in I$, as we just saw, $\mathcal{W}(\tra_i)$ is a transducer equivalent to $\tra_i$.
Moreover, as $\bar{\mathcal{W}}(\tra_i)$ is separable, so is $\mathcal{W}(\tra_i)$,hence, by Lemma $\ref{sepmultiseq}$, it is multi-sequential.
The desired result follows.

\begin{figure}[!ht]
\subfigure[$\mathcal{W}(\tra)$\label{fig:wdet1}]{
\begin{tikzpicture}[->,>=stealth',auto,node distance=2cm,thick,scale=0.7,every node/.style={scale=0.7}]
  \tikzstyle{every state}=[fill=yellow!30,text=black, shape = rectangle,rounded corners,font=\scriptsize]
  \tikzstyle{initial}=[initial by arrow, initial where=left, initial text=]
  \tikzstyle{accepting}=[accepting by arrow, accepting where=right, accepting text=]

  \node[initial,state] (A) at (0,0) {$(q_0, \epsilon)$};
  \node[state] (B) at (2.2,0) {$\begin{array}{l}(q_1, a)\\ (q_2, b)\end{array}$};
  \node[state] [below of=A] (C)  {$\begin{array}{lll}(q_2, ab)\\ (q_3, ba)\end{array}$};
  \node[state] (D) at (2.2,-2) {$\begin{array}{lll}(q_3, \epsilon)\end{array}$};
  \node[state,accepting] [right of=D] (E)  {$\begin{array}{lll} (q_4, \epsilon)\end{array}$};
  \path (A) edge [bend right] node {\trans{a}{\epsilon}} (B);
  \path (B) edge node {\trans{a}{\epsilon}} (C);
  \path (C) edge [densely dotted] node {\trans{a}{aba}} (D);
  \path (C) edge [bend right,below,densely dotted] node {\trans{a}{baa}} (E);
  \path (B) edge [bend right,above] node {\trans{b}{b}} (A);
  \path (C) edge node {\trans{b}{ab}} (A);
  \path (D) edge node {\trans{a}{a}} (E);
  \path (E) edge [loop above] node {\trans{a}{a}} (E);
\end{tikzpicture}
}
\subfigure[$\textsf{trim}(\textsf{split}(\mathcal{W}(\tra)))$\label{fig:split_wdet1}]{
\begin{tikzpicture}[->,>=stealth',auto,node distance=2cm,thick,scale=0.7,every node/.style={scale=0.7}]
  \tikzstyle{every state}=[fill=yellow!30,text=black, shape = rectangle,rounded corners]
  \tikzstyle{initial}=[initial by arrow, initial where=left, initial text=]
  \tikzstyle{accepting}=[accepting by arrow, accepting where=right, accepting text=]

  \node[initial,state] (A) at (0,1) {$(q_0, \epsilon)$};
  \node[state] (B) at (2.2,1) {$\begin{array}{l}(q_1, a)\\ (q_2, b)\end{array}$};
  \node[state] (C) at (0,-1) {$\begin{array}{lll}(q_2, ab)\\ (q_3, ba)\end{array}$};
  \node[state] (D) at (2.2,-1) {$\begin{array}{lll}(q_3, \epsilon)\end{array}$};
  \node[state,accepting] (E) at (4.2,-1)  {$\begin{array}{lll} (q_4, \epsilon)\end{array}$};
  \path (A) edge [bend right] node {\trans{a}{\epsilon}} (B);
  \path (B) edge node {\trans{a}{\epsilon}} (C);
  \path (C) edge [densely dotted] node [below] {\trans{a}{aba}} (D);
  \path (B) edge [above,bend right] node {\trans{b}{b}} (A);
  \path (C) edge node {\trans{b}{ab}} (A);
  \path (D) edge node [below] {\trans{a}{a}} (E);
  \path (E) edge [loop above] node {\trans{a}{a}} (E);

  \node[initial,state] (A') at (6.2,1)  {$(q_0, \epsilon)$};
  \node[state] (B') at (8.7,1) {$\begin{array}{l}(q_1, a)\\ (q_2, b)\end{array}$};
  \node[state] (C') at (6.2,-1)  {$\begin{array}{lll}(q_2, ab)\\ (q_3, ba)\end{array}$};
  \node[state,accepting] (E') at (8.7,-1) {$\begin{array}{lll} (q_4, \epsilon)\end{array}$};
  \path (A') edge [bend right] node {\trans{a}{\epsilon}} (B');
  \path (B') edge node {\trans{a}{\epsilon}} (C');
  \path (C') edge [densely dotted] node [below] {\trans{a}{baa}} (E');
  \path (B') edge [above,bend right] node {\trans{b}{b}} (A');
  \path (C') edge node {\trans{b}{ab}} (A');
  \path (E') edge [loop above] node {\trans{a}{a}} (E');
\end{tikzpicture}
}
\vspace{-4mm}
\caption{Weak determinisation and split}
\vspace{-8mm}
\end{figure}
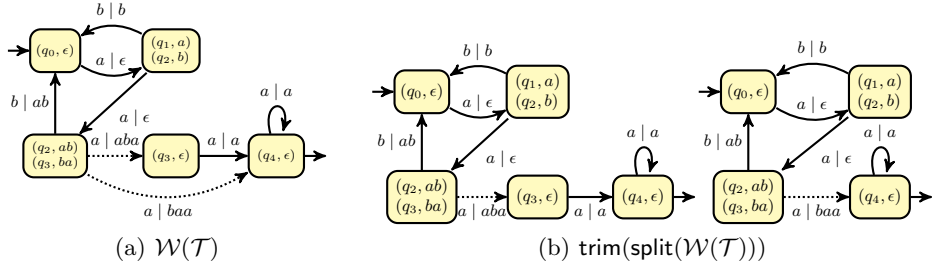


\section{Application to Multi-Output Streamability Problem}

Sequential functions have the advantage of being efficiently
computable. They are exactly the word functions that can be evaluated
with constant memory in a sequential, left-to-right, manner. This
computability notion have been defined formally in
\cite{DBLP:conf/fsttcs/FiliotGRS11} with the model of Turing
transducers. Informally, a Turing transducer has three tapes: a
read-only left-to-right input tape, a working tape, and a write-only
left-to-right output tape. The amount of memory is measured only on
the working tape. For any sequential function $f$, there exists a
Turing transducer $M$ and constant $K\in\mathbb{N}$ such that for all words $u\in
dom(f)$, $f(u)$ can be computed by $M$ while using at most $K$ cells of
the working tape. This model is a streaming model in the sense that 
the input tape is left-to-right, and therefore one can think of
receiving the input word $u$ as a stream. The converse also holds
true: any word function computable with constant memory by a Turing
transducer is sequential. Therefore, the following problem, called the
streamability problem, is decidable in \textsc{PTime}, based on the
twinning property: given a functional transducer, does it define a
function that can be evaluated with constant memory? In this section,
we establish a similar result for relations.

We extend the model of Turing transducer to a model for computing
relations. We rather explain this model in words, avoiding a tedious
and technical definition of intuitive concepts. This model can be thought of as a streaming model where the input
word $u$ is a stream, and the outputs words are produced on-the-fly, while
processing $u$, and sent through different channels (represented as
output tapes in the model). More precisely, let $k\in\mathbb{N}$. 
A \emph{$k$-output Turing
transducer} $M$ is a deterministic Turing machine with $(1)$ a read-only
left-to-right input tape, $(2)$ a two-way working tape, and $(3)$ $k$
write-only left-to-right output tapes without stay
transitions (therefore a cell cannot be rewritten). Additionally, the
machine can disable/enable some output tapes. A \emph{tape
  configuration} of such a machine is a tuple $(u,v,W)$ where $u$ is
the content of the input tape, $v$ is the content of the working tape,
and $W$ is the set of contents of the enabled output tapes. By content we mean the sequence of
symbols to the first blank symbol.

To define the class of constant memory computable relations, we will
allow some preprocessing of the input stream, i.e., the computation of 
a constant amount of information that can be then exploited by the
machine to compute the output words. In the setting of sequential
function, this information is implicitly present in the definition of
constant memory computability: it is assumed that the input stream belongs to the domain
of the function, otherwise it could start producing some output word
and realise later on that the word was not in the domain. The
information of being in the domain or not (a 0/1 bit) is typically some
preprocessing computation that can be performed on the input word, by
some other application. For instance, if the stream is generated by
another application, this application could also send the information
on whether the stream belong to the domain of the function or not.

We come to the definition of constant memory computability for
relations. Let $R\subseteq \Sigma^*\times
\Gamma^*$. We say that $R$ is \emph{constant memory computable} if
there exist two constants $k, K\in\mathbb{N}$, a
computable function $f : \Sigma^* \rightarrow \{0,1\}^*$ such that for
all $u\in\Sigma^*$, $|f(u)|\leq k$, and a $k$-output Turing transducer $M$ over the alphabet $\Sigma \cup
\Gamma \cup \{0,1,\#,\$\}$ (where $\$$ stand for a blank symbol) such
that for all $u\in\Sigma^*$, on initial tape
configuration $(f(u)\#u, \epsilon, \varnothing)$ (all output tapes are
initially disabled), the machine uses at most $K$ cells of the working tape
along its computation,  and halts in  a tape configuration $(f(u)\#u,
\alpha, W)$ such that  $W = R(u)$.

\begin{theorem}\label{thm:streaming}
    Let $R\subseteq \Sigma^*\times \Gamma^*$. Then $R$ is constant
    memory computable iff it is multi-sequential. Moreover, if $R$ is given by a transducer, it is decidable in
          \textsc{PTime} whether it is constant memory computable. 
\end{theorem}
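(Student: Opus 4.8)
The theorem has two directions plus a decidability claim. I will establish it by showing: (a) every multi-sequential relation is constant memory computable; (b) every constant memory computable relation is multi-sequential; and (c) derive the \textsc{PTime} decision procedure from Theorem~\ref{thm:main}.

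For direction (a), suppose $R = \bigcup_{i=1}^k f_i$ with each $f_i$ realised by a sequential transducer $\mathcal{D}_i$. The idea is to run all $k$ sequential transducers in parallel on the input stream, using channel $i$ for the output of $\mathcal{D}_i$. The working tape only needs to store, for each $i$, the current state of $\mathcal{D}_i$ together with the part of its output that has been computed but not yet flushed — but since $\mathcal{D}_i$ is sequential (real-time, input-deterministic), reading one input symbol triggers exactly one transition producing a bounded output word, which can be immediately written to output tape $i$; so in fact only the $k$ current states need to be stored, a constant amount. The only subtlety is correctly handling the domains: tape $i$ must end up containing $f_i(u)$ if $u \in dom(f_i)$, and must be \emph{disabled} (so it does not contribute to $W$) otherwise. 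This is exactly where the preprocessing function $f$ is used: let $f(u) \in \{0,1\}^k$ be the bit vector indicating which $f_i$ accept $u$; this is computable and has constant length $k$. The machine reads $f(u)$ first, enables exactly the output tapes indexed by the $1$'s, then streams $u$ and simulates; at the end it applies the final output functions $f_T^{(i)}(q)$ for enabled tapes. One must also deal with the fact that two distinct $f_i$ may map $u$ to the same word, but since $W$ is a \emph{set} of tape contents this is automatically fine. Thus $W = \{f_i(u) : u \in dom(f_i)\} = R(u)$.

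For direction (b), suppose $R$ is constant memory computable via constants $k,K$, preprocessing $f$, and a $k$-output Turing transducer $M$. The plan is to build, for each preprocessing value $w \in \{0,1\}^{\le k}$ and each output channel $j \le k$, a sequential transducer $\mathcal{D}_{w,j}$ that simulates $M$ on input $f^{-1}$-compatible streams and outputs the content of channel $j$. The key point is that $M$ has a two-way working tape bounded by $K$ cells, so its working-tape configuration (tape content, head position, control state, set of enabled channels) ranges over a \emph{finite} set $S$ of size depending only on $K, k$ and the alphabet. Processing one input symbol takes $M$ a bounded number of steps (it cannot loop, else it would use unbounded time on a fixed finite configuration space while the input head is stuck — standard argument: a halting two-way machine on a bounded working tape performs a bounded number of steps per input cell), so between consuming input symbol $i$ and $i+1$ the machine writes a bounded word on each channel. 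Hence the map $(\text{configuration}, \sigma) \mapsto (\text{new configuration}, \text{words written on each channel})$ is a finite deterministic transition function: this yields a sequential transducer. Since $f(u)$ is a prefix of bounded length read before $u$, we branch on its $\le 2^{k+1}$ possible values by taking a finite union, restricting the domain of $\mathcal{D}_{w,j}$ to those $u$ with $f(u)=w$ (a regular restriction, since $dom(f^{-1}(w))$ need not be regular in general — \emph{this is a genuine concern}). I would handle this by noting that we only need $R$ itself to equal the union; since $f$ is an arbitrary computable function we cannot assume $f^{-1}(w)$ is regular, but we \emph{can} instead argue directly that $R = \bigcup_{j \le k} g_j$ where $g_j(u)$ is the content of channel $j$ when running $M$ on $f(u)\#u$ — and each $g_j$ is a function computable by feeding the machine; the point is that $g_j$ as a word function has bounded-memory streaming computability \emph{with the same preprocessing}, and by the single-channel version of Choffrut's characterisation (the classical one recalled in the paper's introduction) each $g_j$ is sequential. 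That reduces (b) to the known functional case, applied channelwise. So $R = \bigcup_{j=1}^k g_j$ is multi-sequential.

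For the decidability claim: given a transducer $\tra$ defining $R$, by Theorem~\ref{thm:main}.\ref{eqWTP} $R$ is multi-sequential iff $\tra$ (after trimming, in \textsc{PTime}) is weakly twinned, and by Theorem~\ref{thm:main}.\ref{decWTP} the WTP is decidable in \textsc{PTime}; combined with the equivalence just proved, constant memory computability is decidable in \textsc{PTime}. \textbf{The main obstacle} I anticipate is direction (b): making rigorous that one input symbol induces a \emph{bounded} amount of work and output on the two-way working tape (the no-infinite-loop / bounded-crossing argument) and, more delicately, dealing with the arbitrary — possibly non-regular — preprocessing function $f$ when trying to present $R$ as a finite union of \emph{sequential functions} rather than merely constant-memory functions. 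The cleanest route is to reduce channelwise to the already-cited classical result that constant-memory streaming computable word functions are exactly the sequential ones, so that the preprocessing is absorbed uniformly and no regularity of $f^{-1}$ is needed. \eof
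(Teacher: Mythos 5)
Your first direction (multi-sequential implies constant memory computable) and your derivation of the \textsc{PTime} decidability coincide with the paper's proof: the preprocessing word is the bit vector recording membership of $u$ in the domains of the $\mathcal{D}_i$, the machine keeps the tuple of current states of the enabled $\mathcal{D}_i$ in its finite control (so $K=0$), and writes each $\inter{\mathcal{D}_i}(u)$ on its own output tape. No issue there.

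The converse is where your proposal diverges and where it has a gap. The paper's proof takes exactly the route you start with: for each of the finitely many preprocessing values $v\in P=\{f(u)\mid u\in\Sigma^*\}$ and each channel $i\leq k$, the boundedly many working-tape configurations of $M$ on inputs of the form $v\#u$ are turned into a sequential transducer $\mathcal{D}_{i,v}$ emitting the symbols of channel $i$ (the several steps of $M$ per input symbol give $\epsilon$-transitions, removed as in \cite{DBLP:conf/fsttcs/FiliotGRS11}), and it concludes $R=\bigcup_{v\in P}\bigcup_{i=1}^k\inter{\mathcal{D}_{i,v}}$. You rightly observe that $f^{-1}(v)$ need not be regular, so the domain of $\mathcal{D}_{i,v}$ cannot be cut down to the inputs that actually receive preprocessing $v$. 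But your repair---define $g_j(u)$ as the content of channel $j$ on input $f(u)\#u$ and invoke ``the single-channel classical characterisation'' to conclude that each $g_j$ is sequential---does not close the gap: that classical result is stated for machines \emph{without} an auxiliary preprocessing word, and it fails in the model with preprocessing, precisely because $f$ may decide a non-regular property and the machine may enable or disable channel $j$ according to it, making $dom(g_j)$ non-regular and hence $g_j$ non-sequential. So you are appealing to a statement that is not available in the form you need. (For what it is worth, the paper's own identity $R=\bigcup_{v,i}\inter{\mathcal{D}_{i,v}}$ silently includes, for every $u$, the outputs of $M$ on $v\#u$ for all $v\in P$ rather than only for $v=f(u)$, so the obstacle you flag is a genuine feature of this construction and not a misunderstanding on your part; but identifying the obstacle and routing around it via an inapplicable theorem leaves this direction of your argument incomplete.)
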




\bibliographystyle{abbrv}
\bibliography{biblio}

\newpage

\appendix
\noindent{\bf \Large Appendix}
\medskip 

\subsection*{Proofs of section 2}

\begin{lemma}\label{choice_wtp0}
Let $\tra = (Q,E,I,T,f_T)$ be a transducer that does not satisfy the weak twining property.
Then there exists a run 
$$q_1 \xrightarrow{u\mid u_1} q_1 \xrightarrow{v\mid v_1} q_1 \xrightarrow{u\mid u_2} q_2 \xrightarrow{v\mid v_2} q_2$$
such that either $|v_1| \neq |v_2|$, or $v_1,v_2 \neq \epsilon$ and there is a mismatch between $u_1$ and $u_2$.
\end{lemma}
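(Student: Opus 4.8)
The plan is to start from the definition of the WTP failing: there exist states $q_1, q_2$ and words $u, v \in \Sigma^*$, $u_1, v_1, u_2, v_2 \in \Gamma^*$ forming the pattern $q_1 \xrightarrow{u\mid u_1} q_1 \xrightarrow{v\mid v_1} q_1 \xrightarrow{u\mid u_2} q_2 \xrightarrow{v\mid v_2} q_2$ with $\delay(u_1, u_2) \neq \delay(u_1 v_1, u_2 v_2)$. The goal is to massage this into the sharper dichotomy stated. First I would observe that the inequality $\delay(u_1,u_2) \neq \delay(u_1v_1, u_2v_2)$, i.e. $u_1^{-1} u_2 \neq (u_1 v_1)^{-1} u_2 v_2$, rearranges (multiply on the left by $u_1$, then by $v_1$ on what remains) to the statement $v_1 \cdot (u_1^{-1} u_2) \neq (u_1^{-1} u_2) \cdot v_2$ in the free group $G_\Gamma$, i.e. $v_1$ and $v_2$ are \emph{not conjugated} via the group element $\delta := u_1^{-1} u_2$.

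The key step is a case analysis on this non-conjugacy. If $|v_1| \neq |v_2|$ we are already in the first case of the conclusion, so assume $|v_1| = |v_2|$. If moreover $v_1 = v_2 = \epsilon$, then $\delta \neq \delta$, a contradiction, so not both are empty; since they have equal length, $v_1, v_2 \neq \epsilon$. It remains to produce a mismatch between $u_1$ and $u_2$. Here I would use the pumping freedom: the pattern lets us replace $u_1$ by $u_1 v_1^n$ and $u_2$ by $u_2 v_2^n$ for any $n$ (following the loop on $v$ before branching, exactly as in the displayed computation in the proof of Lemma \ref{lem:preserve}), and these are still valid outputs along loops at $q_1$ and $q_2$ respectively — but more simply, I would argue directly on $u_1, u_2$: if $u_1 \leq u_2$ or $u_2 \leq u_1$ (one a prefix of the other), then $\delay(u_1, u_2)$ is a genuine word of $\Gamma^*$ or of $(\Gamma^{-1})^*$, and I claim this forces conjugacy $v_1 \delta = \delta v_2$ to fail only if $|v_1| \neq |v_2|$, contradicting our assumption; hence $u_1, u_2$ are prefix-incomparable, which is exactly a mismatch between $u_1$ and $u_2$.

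The main obstacle is the case where $u_1$ and $u_2$ \emph{are} prefix-comparable but $v_1 \neq v_2$ with $|v_1|=|v_2|$: then I have genuinely failed to get a mismatch, so I must instead exhibit a \emph{different} witnessing run that does. The fix is to pump first: take the run $q_1 \xrightarrow{v\mid v_1} q_1 \xrightarrow{u\mid u_1} q_1 \xrightarrow{v \mid v_1} q_1 \xrightarrow{u \mid u_2} q_2 \xrightarrow{v\mid v_2} q_2$ — no, more carefully, reorganize the loops so that the "stem" reading $u$ is preceded by $v$-loops on both sides: we get a new pattern with the same $u,v$ but with $u_1' = v_1 u_1$ (reading $v$ then $u$ as a loop at $q_1$), $u_2' = u_2$ still a loop $q_1 \to q_2$, wait — the $v$-loop at $q_1$ composes on the left. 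By iterating we can arrange the left output $u_1 v_1^n$ versus right output $u_2$; since $|v_1| = |v_2| \geq 1$, for suitable $n$ the lengths $|u_1 v_1^n|$ and $|u_2|$ differ in a way that, combined with the non-conjugacy $v_1 \delta \ne \delta v_2$, forces a genuine mismatch (a comparable pair of unequal-length-incompatible words would contradict the delay being a pure word). I would spell out this pumping argument carefully, as it is the one genuinely fiddly point; everything else is bookkeeping in the free group.
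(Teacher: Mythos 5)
Your overall strategy matches the paper's: start from a violating pattern, split on whether $|v_1| = |v_2|$, observe that both loop outputs are nonempty in the equal-length case, and then pump the $v$-loops to expose a mismatch. But two steps as written do not go through. First, your claim that prefix-comparability of $u_1,u_2$ together with $|v_1|=|v_2|$ already forces the conjugacy $v_1\delta=\delta v_2$ (and hence a contradiction) is false: take $u_1=\epsilon$, $u_2=a$, $v_1=v_2=ab$; then $\delay(u_1,u_2)=a$ while $\delay(u_1v_1,u_2v_2)=\delay(ab,aab)=b^{-1}ab$, so the WTP is violated even though $u_1$ is a prefix of $u_2$, $|v_1|=|v_2|$, and there is no mismatch. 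You do acknowledge that a residual case remains, but your fix is where the more serious gap lies. Pumping so as to compare $u_1v_1^n$ against $u_2$ is not a legal instance of the pattern: the two branches must read the \emph{same} input word, so once the first $u$-segment becomes $uv^n$, the branch into $q_2$ must also read $uv^n$ and therefore outputs $u_2v_2^n$, not $u_2$. With both sides pumped, $|u_1v_1^n|$ and $|u_2v_2^n|$ differ by the constant $|u_2|-|u_1|$ for every $n$, so no length argument can conclude, and your final sentence (``forces a genuine mismatch'') is an assertion rather than a proof.

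The missing ingredient is the combinatorial core of the paper's argument. Assume toward a contradiction that $u_1v_1^n$ is a prefix of $u_2v_2^n$ for \emph{every} $n$. Then $u_2=u_1v_1^{k}v_1'$ and $u_2v_2=u_1v_1^{k'}v_1''$ with $v_1',v_1''$ prefixes of $v_1$; since $|v_1|=|v_2|$, necessarily $k'=k+1$ and $v_1'=v_1''$, which yields $\delay(u_1v_1,u_2v_2)=\delay(u_1,u_2)$ and contradicts the violation you started from. Hence for some $n$ the word $u_1v_1^n$ is not a prefix of $u_2v_2^n$, and since $|u_1v_1^n|\leq|u_2v_2^n|$ this is a genuine mismatch; the run in which both $u$-segments read $uv^n$ is then the required witness (the case $u_2\leq u_1$ is symmetric). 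Your free-group reformulation of the violation as non-conjugacy $v_1\delta\neq\delta v_2$ is correct and potentially clarifying, but as it stands you never use it to close the argument.
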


\begin{proof}
Suppose that $\tra$ is not weakly twinned.
Then there exists a run 
$$q_1 \xrightarrow{u\mid u_1} q_1 \xrightarrow{v\mid v_1} q_1 \xrightarrow{u\mid u_2} q_2 \xrightarrow{v\mid v_2} q_2$$
such that $\delay(u_1,u_2) \neq \delay(u_1v_1,u_2v_2)$.
If $|v_1| \neq |v_2|$, we are done. 
If $|v_1| = |v_2|$, then they are both distinct from $\epsilon$, otherewise $\delay(u_1,u_2)$ would be equal to $\delay(u_1v_1,u_2v_2)$.
We have to consider three possibilities.\\
If there is a mismatch between $u_1$ and $u_2$, we are done.\\
If $u_1$ is a prefix of $u_2$, suppose ab absurdo that for every $n \in \mathbb{N}$, $u_1v_1^n$ is a prefix of $u_2v_2^n$.
In particular, this means that there exist $k,k' \in \mathbb{N}$ such that $u_2 = u_1v_1^kv_1'$, where $v_1'$ is a prefix of $v_1$, and $u_2v_2 = u_1v_1^{k'}v_1''$, where $v_1''$ is a prefix of $v_1$.
However, as $|v_1| = |v_2|$ by supposition, $k' = k+1$ and $|v_1'| = |v_1''|$, hence $v_1'=v_1''$.
Therefore 
$$\delay(u_1v_1,u_2v_2) = \delay(u_1v_1,u_1v_1^{k+1}v_1') = \delay(\epsilon,v_1^{k}v_1') = \delay(u_1,u_1v_1^{k}v_1') = \delay(u_1,u_2),$$
which is a contradiction.\\
Therefore there exists $n \in \mathbb{N}$ such that $u_1v_1^n$ is not a prefix of $u_2v_2^n$.
As $|u_1v_1^n| \leq |u_2v_2^n|$, there is a mismatch between $u_1v_1^n$ and $u_2v_2^n$, hence the run
$$q_1 \xrightarrow{uv^n\mid  u_1v_1^n} q_1 \xrightarrow{v\mid v_1} q_1 \xrightarrow{uv^n\mid u_2v_2^n} q_2 \xrightarrow{v\mid v_2} q_2$$
satisfies the desired conditions.\\
Finally, if $u_2$ is a prefix of $u_1$, the desired result can be proved in a similar way.
\end{proof}

\begin{lemma}\label{choice_wtp}
Let $\tra = (Q,E,I,T,f_T)$ be a transducer that does not satisfy the weak twining property.
Then there exists a run $$q_1 \xrightarrow{u\mid u_1} q_1 \xrightarrow{v\mid v_1} q_1 \xrightarrow{u\mid u_2} q_2 \xrightarrow{v\mid v_2} q_2$$
such that for every $n \in \mathbb{N}$, $|\delay(u_1v_1^n,u_2v_2^n)| \geq n$.
\end{lemma}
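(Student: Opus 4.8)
The plan is to bootstrap from Lemma \ref{choice_wtp0}, which already hands us a run of the required shape in which either the output-loop lengths differ ($|v_1|\neq|v_2|$) or both loops are nonempty and there is a mismatch position between $u_1$ and $u_2$. I would treat these two cases separately and show that in each case the delay $\delay(u_1v_1^n,u_2v_2^n)$ grows at least linearly in $n$.

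First I would handle the length-mismatch case: if $|v_1|\neq|v_2|$, say without loss of generality $|v_1|<|v_2|$, then $|u_1v_1^n| = |u_1| + n|v_1|$ and $|u_2v_2^n| = |u_2| + n|v_2|$, so $||u_1v_1^n|-|u_2v_2^n|| \geq n(|v_2|-|v_1|) - ||u_1|-|u_2|| \geq n - |u_1| - |u_2|$ for $n$ large. Since the length of the reduced word $\delay(v,w) = v^{-1}w$ in the free group is at least $||w|-|v||$ (reduction can only cancel symbols in pairs, one from each side), we immediately get $|\delay(u_1v_1^n,u_2v_2^n)| \geq n - |u_1| - |u_2|$. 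To get the clean bound $|\delay(\cdot,\cdot)|\geq n$ stated in the lemma, I would absorb the additive constant by replacing the $u$-loop once more, i.e. pump the left loop $q_1\xrightarrow{u|u_1}q_1$ an appropriate constant number of extra times before the $u|u_2$ edge — more precisely, rewrite the run using $u' = u^{m}$, $u_1' = u_1^{m}$ with $m$ chosen so that $m|v_1|$ or the extra copies dominate; alternatively, just reindex $n$. This is a routine adjustment.

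Second, the mismatch case: $v_1,v_2\neq\epsilon$ and there is a position $p$ with $u_1[p]\neq u_2[p]$ (so in particular $p \leq \min(|u_1|,|u_2|)$). Here $u_1v_1^n$ and $u_2v_2^n$ already disagree at position $p$, which lies strictly inside both words for all $n$. When computing the reduced form of $(u_1v_1^n)^{-1}(u_2v_2^n)$, cancellation proceeds from the junction inward and must stop at the latest by position $p$ on each side, so the reduced word retains at least $(|u_1v_1^n|-p) + (|u_2v_2^n|-p) \geq n|v_1| + n|v_2| \geq 2n$ symbols — in particular at least $n$. I would state this cancellation-stops-at-the-first-mismatch fact as a small observation about reduction in the free group: if $v = \alpha\gamma$ and $w=\beta\gamma'$ with $\gamma,\gamma'$ the maximal common... actually the cleanest phrasing is that $|\delay(v,w)| \geq |v| + |w| - 2\ell$ where $\ell$ is the length of the longest common prefix of $v$ and $w$, and here $\ell < p$.

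The main obstacle is purely bookkeeping: getting the exact constant "$\geq n$" rather than "$\geq n - c$", and making sure the witnessing run still has the precise syntactic shape $q_1\xrightarrow{u|u_1}q_1\xrightarrow{v|v_1}q_1\xrightarrow{u|u_2}q_2\xrightarrow{v|v_2}q_2$ (same input word $u$ on both the first loop and the transient edge, same input word $v$ on both $v$-loops) after any pumping adjustment. The trick, already used inside Lemma \ref{choice_wtp0}, is to pump the $v$-loops: replace $v$ by $vv^n$ giving loops labelled $v_1v_1^n$ and $v_2v_2^n$ — but since we want a single fixed run that works for all $n$ simultaneously, the honest route is to observe that the constants $|u_1|,|u_2|$ from Lemma \ref{choice_wtp0}'s run are fixed, and then define a new run by prepending extra iterations of the $u$-loop to $u_1$ and of the first $v$-loop, or simply note that the lemma as used in Lemma \ref{lem:preserve} is only invoked through the inequality $|\delay(u_1v_1^n,u_2v_2^n)|\geq n$, so a clean linear lower bound with the right leading behaviour suffices after a harmless shift of index. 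I would close the proof by combining the two cases and, if needed, absorbing the additive constant by one extra pumping of the idempotent loops, which does not change the state sequence.
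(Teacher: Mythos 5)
Your overall strategy is the same as the paper's: bootstrap from Lemma~\ref{choice_wtp0} and show linear growth of the delay in each of the two cases it provides. Your mismatch case is correct and matches the paper: the longest common prefix $\ell$ of $u_1v_1^n$ and $u_2v_2^n$ is bounded by the mismatch position, so $|\delay(u_1v_1^n,u_2v_2^n)| = |u_1v_1^n|+|u_2v_2^n|-2\ell \geq n(|v_1|+|v_2|) \geq n$, with no additive constant to absorb. The length case is likewise fine when (taking $|v_1|<|v_2|$) one also has $|u_1|\leq|u_2|$, since then $|u_2v_2^n|-|u_1v_1^n|\geq n$ outright.

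The gap is in the remaining subcase $|v_1|<|v_2|$ and $|u_1|>|u_2|$, where your bound degrades to $n-(|u_1|-|u_2|)$ and the run must be repaired; neither of your proposed repairs works. Reindexing $n$ proves a strictly weaker statement than the lemma (it would suffice for the application in Lemma~\ref{lem:preserve}, but it does not establish $\geq n$ for every $n$ for a single fixed run, which is what is claimed). Pumping the $u$-loop, i.e.\ taking $u'=u^m$, does not change the deficit: the loop output becomes $u_1^m$ while the transient output becomes $u_1^{m-1}u_2$ (the only way to read $u^m$ from $q_1$ and end in $q_2$ along these edges), so $|u_2'|-|u_1'|=|u_2|-|u_1|$ for every $m$; and inserting extra $u$-loops ``before the $u|u_2$ edge'' destroys the required template, since the second and fourth components must carry the same input word and there is no corresponding loop on $q_2$. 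The paper's repair is the idea you are missing: fold the $v$-loops into the $u$-part, setting $u':=uv^{k}$ with $k=|u_1|-|u_2|$, which gives outputs $u_1'=u_1v_1^{k}$ on the loop and $u_2'=u_2v_2^{k}$ on the transient part. Each folded iteration gains $|v_2|-|v_1|\geq 1$ on the transient side, so $|u_2'|-|u_1'|\geq -k+k=0$ and you are reduced to the easy subcase. This step is short but essential: for the run that Lemma~\ref{choice_wtp0} hands you, the stated inequality can genuinely fail for small $n$ without it.
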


\begin{proof}
By the preceding lemma, there exists a run 
$$q_1 \xrightarrow{u\mid u_1} q_1 \xrightarrow{v\mid v_1} q_1 \xrightarrow{u\mid u_2} q_2 \xrightarrow{v\mid v_2} q_2$$
such that either $|v_1| \neq |v_2|$, or $v_1,v_2 \neq \epsilon$ and there is a mismatch between $u_1$ and $u_2$.\\
If $|v_1| < |v_2|$, we have to consider two possibilities.
If $|u_1| \leq |u_2|$, $|\delay(u_1v_1^n,u_2v_2^n)| \geq |u_2v_2^n| - |u_1v_1^n| \geq n$.
If $|u_1| > |u_2|$, by considering the run
$$q_1 \xrightarrow{u\mid u_1} q_1 \xrightarrow{uv^{|u_1| - |u_2|} \mid u_1v_1^{|u_1| - |u_2|}} q_1 \xrightarrow{uv^{|u_1| - |u_2|}\mid u_2v_2^{|u_1| - |u_2|}} q_2 \xrightarrow{v\mid v_2} q_2,$$
we are back in the first situation.\\
If $|v_1| > |v_2|$, we can proceed in a similar way.\\
Finally, if $u_1,u_2 \neq \epsilon$ and there is a mismatch between $u_1$ and $u_2$, the delay grows at each step by $|v_1| + |v_2| > 1$, and the result follows.
\end{proof}

\paragraph{Proof of Proposition \ref{bounded_variation}}

\begin{proof}
Let $\textsf{p}(u_1,u_2)$ denote the largest common prefix of $u_1$ and $u_2$, and let $\textsf{d}(u_1,u_2)$ and $\textsf{d}(u_2,u_1)$ be the words such that $u_1 = \textsf{p}(u_1,u_2) \textsf{d}(u_1,u_2)$ and $u_2 = \textsf{p}(u_1,u_2) \textsf{d}(u_2,u_1)$.
We use similar notations for $v_1$ and $v_2$.\\
As $(u_1,v_1)$ and $(u_2,v_2)$ are recognised by $\mathcal{D}$, there exist two accepting runs $p_1 : \bar{q} \xrightarrow{u_1|v_1'} q_1$ and $p_2 : \bar{q} \xrightarrow{u_2|v_2'} q_2$, where $v_1'f_T(q_1) = v_1$ and $v_2'f_T(q_2) = v_2$.
As the input automaton is deterministic, each run can be decomposed in two parts as follows. 
$$\begin{array}{lcl}
p_1 & : &  \bar{q} \xrightarrow{\textsf{p}(u_1,u_2)|v} q \xrightarrow{\textsf{d}(u_1,u_2)|v_1''} q_1,\\
p_2 & : & \bar{q} \xrightarrow{\textsf{p}(u_1,u_2)|v} q \xrightarrow{\textsf{d}(u_2,u_1)|v_2''} q_2.
\end{array}$$
Therefore, $v$ is a prefix of both $v_1$ and $v_2$, hence it is a prefix of $\textsf{p}(v_1,v_2)$.
Finally,
$$\begin{array}{lcl}
|\Delta(v_1,v_2)| & = & |\textsf{d}(v_1,v_2)| + |\textsf{d}(v_2,v_1)|\\
& = & |v_1| + |v_2| - 2|\textsf{p}(v_1,v_2)|\\
& = & 2 |v| + |v_1''| + |v_2''| + |f_T(q_1)| + |f_T(q_2)| - 2|\textsf{p}(v_1,v_2)|\\
& \leq & |v_1''| + |v_2''| + |f_T(q_1)| + |f_T(q_2)|\\
& \leq & M_{\mathcal{D}} |\textsf{d}(u_1,u_2)| + M_{\mathcal{D}} |\textsf{d}(u_2,u_1)| + 2 M_{\mathcal{D}}\\
& = & M_{\mathcal{D}} (|\Delta(u_1,u_2)|+2).
\end{array}$$
\end{proof}

\subsection*{Proof of Theorem \ref{thm:main}.\ref{decWTP}}

\begin{proof}
    Given a trim transducer $\tra = (Q,E,I,T,f_T)$, we construct a counter machine $M$
whose counters are reversal-bounded (they alternate between increasing
and decreasing modes only a bounded number of times), such that $L(M)$
is empty iff $\tra$ is weakly twinned. The result will follow since 
emptiness of reversal-bounded counter machine is decidable in
\textsc{PTime} \cite{JACM::Ibarra1978}. We base our construction on
checking the conditions of Lemma \ref{choice_wtp0}.

Let $\#$ be a fresh symbol that does not belong to $\Sigma$. The machine
$M$ will be the union of two machines $M_1$ and $M_2$ that
respectively accept words of the form $u\#v$ such that there exist
$q_1,q_2\in Q$ and $v_1,v_2,u_1,u_2\in\Gamma^*$ such that
$$
q_1\xrightarrow{u|u_1} q_1\xrightarrow{v|v_1}q_1\xrightarrow{u|u_2} q_2\xrightarrow{v|v_2} q_2
$$
and 
\begin{itemize}
  \item $|v_1|\neq |v_2|$ (for $M_1$)
  \item $v_1,v_2\neq \epsilon$ and there is a mismatch between $u_1$
    and $u_2$ (for $M_2$)
\end{itemize}

Let us explain how to construct $M_1$. First, to check the property
about runs, $M_1$ non-deterministically guesses $q_1$ and $q_2$, 
and will simulate two runs of $\tra$ in parallel on $u$ and $v$. While
reading $u$, $M_1$ simulates the product of $\tra$ with itself, by guessing
two runs of $\tra$ from the pair of state $(q_1,q_1)$ to the pair of
states $(q_1,q_2)$.  Similarly, it will simulate two runs of $\tra$
from the pair $(q_1,q_2)$ to the pair $(q_1,q_2)$, on $v$. By
simulating these runs, $M_1$ will non-deterministically trigger 
transitions of $\tra$, and therefore will guess output words
$u_1,u_2,v_1,v_2$. In order to check that $|v_1|\neq |v_2|$, the
machine $M_1$ uses two counters $c_1,c_2$, both initialised,
non-deterministically, to some value $\iota$ (by using a loop that 
increments both $c_1$ and $c_2$, and by non-deterministically deciding
to leave that loop). When reading $v$, if $M$ simulates two
transitions of $\tra$ of the form $q\xrightarrow{a|w} q'$ and $p\xrightarrow{a|w'} p'$ in
parallel, the machine $M_1$ decreases the counters $c_1$ and $c_2$
from respectively the value $|w|$ and $|w'|$. It accepts only if it
ends in the pair of states $(q_1,q_2)$, $c_1 = 0$, and $c_2>0$
(meaning that the guessed value $\iota$ equals $|v_1|$, and
$|v_2|>|v_1|$), or if $c_2 = 0$ and $c_1 > 0$ (meaning $|v_1|>
|v_2|=\iota$). Counter machines can only test for zero, but this can
be easily turned into more complicated tests, such as $c_1 = 0$ and
$c_2 >0$, with polynomial space.

The machine $M_2$ is constructed similarly. It can easily check that
$v_1,v_2\neq \epsilon$ (if the transitions of $\tra$ it simulates in
parallel only produce empty words, then the machine stays in a
non-accepting state). To check the existence of a mismatch between
$u_1$ and $u_2$, the machine again uses two counters $c_1,c_2$
both initialised to some value $\iota$. It will check that $\iota \leq
|u_1|,|u_2|$, and that $u_1[\iota]\neq u_2[\iota]$. Like for $M_1$, this is done by
decreasing in parallel $c_1$ and $c_2$ by the length of the respective
words produced on the simulated transitions of $\tra$ when reading
$u$. If one of the two counters reaches $0$, say $c_1$, then the
symbol $u_1[\iota]$ is stored in the state of $M_2$, and $c_2$ still
continues to be decreased, until it reaches $0$. At that point, it
suffices to check that the symbol $u_2[\iota]$ is different from the
stored symbol, and to accept only in that case.

The two machines $M_1$ and $M_2$ can be constructed in polynomial time
from $\tra$, and therefore have polynomial size. Their counters are
reversal-bounded: there is only one change of polarity. Therefore, the
emptiness of $M_1$ and $M_2$ can be checked in polynomial time (in the
size of $\tra$). \eof
\end{proof}




\subsection*{Proofs of section 3}

\paragraph{Proof of lemma \ref{sepmultiseq}}

\begin{proof}
First, we prove the result for transducers.
Let $\tra = (Q,E,I,T,f_T)$ be a separable transducer.
We shall show that the trim part of $\textsf{split}(\tra)$ is a decomposition of $\tra$ into sequential transducers.
Using the same notations as in the definition, we have
$\textsf{split}(\tra) = \bigcup_{p \in P}\tra_p.$
Then for every $p$ in $P$, the input automaton $\mathcal{A}_p$ of $\tra_p$ has a single initial state.
Suppose ab absurdo that $\mathcal{A}_p$ admits two distinct edges of same source and same label.
By supposition, both are transient edges of $\tra$, hence, by definition of $\tra_p$, they are part of the path $p$.
However, as they share the same source, the first one encountered by $p$ is not transient, which is a contradiction.
Therefore $\mathcal{A}_p$ is deterministic, hence $\tra_p$ is sequential, which proves the desired result.\\
The extension to multi-transducers uses the fact that any multi-transducer can be transformed in an equivalent finite union of transducers over the same underlying automaton, preserving separability.
\end{proof}

\begin{lemma}\label{run_equ}
Given $U \in \mathcal{P}_f(Q,\Gamma^*)$ and $v \in \Gamma^*$, let
$$\begin{array}{ll}
R_{U,v} & = \{(q,w) \in Q \times \Gamma^* | \exists (p,u) \in U,v'\in \Gamma^* \textup{ s.t. } p \xrightarrow{v|v'}_{\tra} q \textup{ and } w = uv'\},\\
w_{U,v} & \textup{be the largest common prefix of the words $\{ w | \exists q \in Q \textup{ s.t. } (q,w) \in R_{U,v}$} \} ,\\
P_{U,v} & = \{(q,w) | (q,w_{U,v}w) \in R_{U,v} \}.
\end{array}$$
Then $r : U \xrightarrow{v|w_{U,v}} P_{U,v}$ is the only run of source $U$ and input $v$ in $\bar{\mathcal{D}}(\tra)$.
\end{lemma}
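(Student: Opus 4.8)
The plan is to prove the statement by induction on the length of the input word $v$. Observe first that a $v$-labelled run of $\bar{\mathcal{D}}(\tra)$ from $U$ exists and is unique, because the input automaton of $\bar{\mathcal{D}}(\tra)$ is deterministic and complete: for every pair $(U,\sigma)$ it has exactly the edge $(U,\sigma|w_{U,\sigma},P_{U,\sigma})$. For $|v|=1$, say $v=\sigma\in\Sigma$, a run $p\xrightarrow{\sigma|v'}_{\tra}q$ is a single edge $(p,\sigma|v',q)\in E$, so the sets $R_{U,\sigma}$, $w_{U,\sigma}$, $P_{U,\sigma}$ of the statement coincide with those used in the definition of $\bar{\mathcal{D}}(\tra)$, and the claim is just the definition of its edge relation. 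The case $|v|=0$ is degenerate: since $\tra$ is real-time there is no non-empty run on the empty input, so $R_{U,\epsilon}=U$; for any $U$ occurring in $\bar{\mathcal{D}}(\tra)$ the second components of $U$ share no non-empty common prefix, hence $w_{U,\epsilon}=\epsilon$ and $P_{U,\epsilon}=U$, matching the empty run on $U$.

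For the inductive step, write $v=v'\sigma$ with $|v'|=|v|-1$. By the induction hypothesis, $U\xrightarrow{v'|w_{U,v'}}_{\bar{\mathcal{D}}(\tra)}P_{U,v'}$ is the unique $v'$-labelled run from $U$, and by the base case $P_{U,v'}\xrightarrow{\sigma|w_{P_{U,v'},\sigma}}_{\bar{\mathcal{D}}(\tra)}P_{P_{U,v'},\sigma}$ is the unique edge of source $P_{U,v'}$ reading $\sigma$; concatenating them yields the unique $v$-labelled run from $U$, with target $P_{P_{U,v'},\sigma}$ and output $w_{U,v'}w_{P_{U,v'},\sigma}$. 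It therefore suffices to establish
\[ w_{U,v'\sigma}=w_{U,v'}\,w_{P_{U,v'},\sigma}\qquad\text{and}\qquad P_{U,v'\sigma}=P_{P_{U,v'},\sigma}. \]

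Both equalities follow from the key identity: $R_{U,v'\sigma}$ is obtained from $R_{P_{U,v'},\sigma}$ by prepending the fixed word $w_{U,v'}$ to every second component. To prove it, decompose any $\tra$-run on $v'\sigma$ as a run on $v'$ followed by an edge reading $\sigma$; this gives $(q,w)\in R_{U,v'\sigma}$ iff there are $(p',u')\in R_{U,v'}$ and an edge $p'\xrightarrow{\sigma|v_2'}_{\tra}q$ with $w=u'v_2'$. Since $R_{U,v'}=\{(p',w_{U,v'}t)\mid (p',t)\in P_{U,v'}\}$ by definition of $P_{U,v'}$, this condition rewrites to $w\in\{w_{U,v'}s\mid (q,s)\in R_{P_{U,v'},\sigma}\}$, which is exactly the identity. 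Taking the longest common prefix of the second components on both sides — an operation that commutes with prepending a fixed word — yields $w_{U,v'\sigma}=w_{U,v'}w_{P_{U,v'},\sigma}$, and stripping off these prefixes then yields $P_{U,v'\sigma}=P_{P_{U,v'},\sigma}$, closing the induction.

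The step I expect to be the main obstacle is this last identity: checking that extracting the longest common prefix after reading $v'$ and then processing $\sigma$ produces the same output and successor set as processing $v'\sigma$ in one step. One must track carefully how the prefix $w_{U,v'}$ pulled out after $v'$ combines with the outputs generated while reading $\sigma$, i.e. that prepending a fixed word to a set of (state, word) pairs commutes with the ``extract longest common prefix, keep the suffixes'' operation. Given that, the run decomposition and the uniqueness claim are routine, the latter being immediate from $\bar{\mathcal{D}}(\tra)$ having a deterministic complete input automaton.
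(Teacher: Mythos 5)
Your proof is correct and follows essentially the same route as the paper's: induction on $|v|$, base case directly from the definition of the edges of $\bar{\mathcal{D}}(\tra)$, and an inductive step resting on the identity $R_{U,v_0\sigma}=\{(q,w_{U,v_0}w)\mid (q,w)\in R_{P_{U,v_0},\sigma}\}$ together with the fact that extracting the longest common prefix commutes with prepending a fixed word. You in fact justify that key identity (and the degenerate $v=\epsilon$ case) in somewhat more detail than the paper does.
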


\begin{proof}
This is proved by induction on the length of $v$.\\
If $v = \sigma \in \Sigma$, this follows immediately from the definition of the edges of $\bar{\bar{\mathcal{D}}}(\tra)$.\\
Otherwhise, $v = v_0\sigma$ for some non-empty word $v_0$ and some letter $\sigma \in \Sigma$.
Suppose that the result is true for $v_0$.
Then the only run of source $U$ and input $v_0$ in $\bar{\mathcal{D}}(\tra)$ is 
$$U \xrightarrow{v_0|w_{U,v_0}} P_{U,v_0}.$$
By definition of the edges of $\bar{\mathcal{D}}(\tra)$, the only edge of source $P_{U,v_0}$ and input $\sigma$ is $$P_{U,v_0} \xrightarrow{\sigma|w_{P_{U,v_0},\sigma}} P_{P_{U,v_0},\sigma}.$$
Hence the only run of source $U$ and input $v$ in $\bar{\mathcal{D}}(\tra)$ is 
$$U \xrightarrow{v|w_{U,v_0}w_{P_{U,v_0},\sigma}} P_{P_{U,v_0},\sigma}.$$
However, by definition, $R_{U,v} = \{ (q,w_{U,v_0}v') | (q,v') \in R_{P_{U,v_0},\sigma} \}$,
Therefore $w_{U,v} = w_{U,v_0}w_{P_{U,v_0},\sigma}$ and $P_{U,v} = P_{P_{U,v_0},\sigma}$, which proves the desired result.
\end{proof}

\begin{corollary}\label{det_equ}
The transducer $\tra$ is equivalent to $\mathcal{D}(\tra)$.
\end{corollary}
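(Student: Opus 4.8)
The plan is to reduce the claim to the infinite-state multi-transducer $\bar{\mathcal{D}}(\tra)$. Since $\mathcal{D}(\tra)$ is by definition the trim part of $\bar{\mathcal{D}}(\tra)$, and trimming removes only states and transitions that occur in no accepting run, we have $\inter{\mathcal{D}(\tra)} = \inter{\bar{\mathcal{D}}(\tra)}$. Hence it suffices to prove $\inter{\bar{\mathcal{D}}(\tra)} = \inter{\tra}$.

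For the core argument I would invoke Lemma \ref{run_equ} with the initial state $U_0 = I \times \{\epsilon\}$: it provides, for every input word $v \in \Sigma^*$, the unique run of $\bar{\mathcal{D}}(\tra)$ of source $U_0$ and input $v$, namely $U_0 \xrightarrow{v \mid w_{U_0,v}} P_{U_0,v}$, together with the explicit descriptions of $R_{U_0,v}$, $w_{U_0,v}$ and $P_{U_0,v}$. Because all second components of the pairs in $U_0$ equal $\epsilon$, the set $R_{U_0,v}$ simplifies to exactly $\{(q,v') \mid \exists i \in I,\ i \xrightarrow{v \mid v'}_{\tra} q\}$, i.e. the set of (state, produced-output) pairs reachable from an initial state of $\tra$ by reading $v$.

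Then I would spell out the two recognised relations on an arbitrary input $v$. A pair $(v,z)$ belongs to $\inter{\bar{\mathcal{D}}(\tra)}$ iff $P_{U_0,v}$ is final, i.e. contains some $(q,w)$ with $q \in T$, and $z = w_{U_0,v}\, w\, f_T(q)$ for such a pair. By the definition of $P_{U_0,v}$ this is equivalent to $(q, w_{U_0,v} w) \in R_{U_0,v}$ with $q \in T$, hence, by the previous paragraph, to the existence of a run $i \xrightarrow{v \mid w_{U_0,v} w}_{\tra} q$ with $i \in I$ and $q \in T$; and then $z = (w_{U_0,v} w)\, f_T(q)$ is precisely an output of $\tra$ on $v$. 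Conversely, given $(v,z) \in \inter{\tra}$ realised by $i \xrightarrow{v \mid w}_{\tra} t$ with $z = w\, f_T(t)$, the word $w_{U_0,v}$ is a prefix of $w$ (since $(t,w) \in R_{U_0,v}$), so $(t, w_{U_0,v}^{-1} w) \in P_{U_0,v}$ makes $P_{U_0,v}$ final and reproduces $z$ as its corresponding output. This double inclusion gives $\inter{\bar{\mathcal{D}}(\tra)} = \inter{\tra}$, and combined with the first paragraph, $\inter{\mathcal{D}(\tra)} = \inter{\tra}$.

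The proof is essentially bookkeeping once Lemma \ref{run_equ} is available; the only point requiring a little care is verifying that the common prefix $w_{U_0,v}$ factored out during the subset construction recombines with the per-state suffixes stored in $P_{U_0,v}$ to reproduce exactly the $\tra$-outputs, and that the final output function $f_T$ is applied to the correct suffix. So the main — and mild — obstacle is getting this concatenation accounting right rather than any genuine difficulty.
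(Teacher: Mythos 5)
Your proof is correct and follows essentially the same route as the paper: both rely on Lemma \ref{run_equ} applied to $U_0 = I\times\{\epsilon\}$, identify $R_{U_0,u}$ with the set of (state, output) pairs of runs of $\tra$ from an initial state on $u$, and unfold the definitions of final states and final outputs of $\bar{\mathcal{D}}(\tra)$ to obtain the equivalence. The paper presents this as a single chain of equivalences where you present a double inclusion, but the content is the same.
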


\begin{proof}
Let $(u,v) \in \Sigma^* \times \Gamma^*$.
Then
$$\begin{array}{lcl}
(u,v) \in \inter{\tra} & \Leftrightarrow & \textup{there exists an accepting run } q_0 \xrightarrow{u|w}_{\tra} q \textup{ s.t. } v = wf_T(q)\\
& \Leftrightarrow & \exists q \in T, \exists w \in \Gamma^* \textup{ s.t. } (q,w) \in R_{U_0,u} \textup{ and }v = wf_T(q)\\
& \Leftrightarrow & \exists q \in T, \exists w' \in \Gamma^* \textup{ s.t. } (q,w) \in P_{U_0,u} \textup{ and }v = w_{U_0,u}w'f_T(q)\\
& \Leftrightarrow & (u,v) \in \inter{\mathcal{D}(\tra)}\\
\end{array}$$
where the last equivalence follows from lemma \ref{run_equ}.
\end{proof}

\begin{corollary}\label{wdet_equ}
The transducer $\tra$ is equivalent to $\mathcal{W}(\tra)$.
\end{corollary}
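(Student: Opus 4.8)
The plan is to prove the equivalence already at the level of the infinite multi-transducers, namely $\inter{\bar{\mathcal{W}}(\tra)} = \inter{\bar{\mathcal{D}}(\tra)}$, and then conclude: taking the trim part does not change the recognised relation, so $\inter{\mathcal{W}(\tra)} = \inter{\bar{\mathcal{W}}(\tra)} = \inter{\bar{\mathcal{D}}(\tra)} = \inter{\mathcal{D}(\tra)} = \inter{\tra}$, the last equality being Corollary \ref{det_equ}. Recall that $\bar{\mathcal{W}}(\tra)$ is obtained from $\bar{\mathcal{D}}(\tra)$ only by modifying edges: every rank-dropping edge $(U,\sigma|w_{U,\sigma},P_{U,\sigma})$ is deleted and replaced by the \emph{jump edges} $\{(U,\sigma|w_{U,\sigma}\hat w,\{(q,\epsilon)\}) \mid (q,\hat w)\in P_{U,\sigma}\}$, while the set of states, the initial state $U_0=I\times\{\epsilon\}$, the final states and the final output relation are unchanged. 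Throughout I use the notations $R_{U,v},\,w_{U,v},\,P_{U,v}$ of Lemma \ref{run_equ}.

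For the inclusion $\inter{\bar{\mathcal{W}}(\tra)}\subseteq\inter{\bar{\mathcal{D}}(\tra)}$, the key step is the following claim, proved by induction on the number of jump edges occurring in the run: if $U_0\xrightarrow{u|v}_{\bar{\mathcal{W}}(\tra)}U$ then $(q,vw)\in R_{U_0,u}$ for every $(q,w)\in U$. If no jump edge occurs, the run is a run of $\bar{\mathcal{D}}(\tra)$ and the claim is exactly Lemma \ref{run_equ}. Otherwise, cut the run at its last jump edge $(U_1,\sigma|w_{U_1,\sigma}\hat w,\{(p,\epsilon)\})$: this edge replaces a $\bar{\mathcal{D}}$-edge, so $(p,w_{U_1,\sigma}\hat w)\in R_{U_1,\sigma}$ and hence $p_0\xrightarrow{\sigma|\beta}_{\tra}p$ for some $(p_0,\alpha)\in U_1$ with $\alpha\beta=w_{U_1,\sigma}\hat w$; the prefix run up to $U_1$ uses fewer jump edges, so by induction $(p_0,v_1\alpha)\in R_{U_0,u_1}$, i.e. $\bar p\xrightarrow{u_1|v_1\alpha}_{\tra}p_0$ for some $\bar p\in I$; the suffix run after the jump is a run of $\bar{\mathcal{D}}(\tra)$ from $\{(p,\epsilon)\}$, so by Lemma \ref{run_equ} $p\xrightarrow{u_2|v_2w}_{\tra}q$, where $u_1,u_2,v_1,v_2$ are the inputs and outputs of the two sub-runs. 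Concatenating the three $\tra$-runs yields $\bar p\xrightarrow{u|v_1\alpha\beta v_2w}_{\tra}q$; since $u=u_1\sigma u_2$ and $\alpha\beta=w_{U_1,\sigma}\hat w$, and since $v=v_1w_{U_1,\sigma}\hat w v_2$ is precisely the output of the whole $\bar{\mathcal{W}}$-run, this output equals $vw$, so $(q,vw)\in R_{U_0,u}$. Applying the claim to an accepting run and invoking the characterisation of $\inter{\bar{\mathcal{D}}(\tra)}$ from the proof of Corollary \ref{det_equ} then gives the inclusion.

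For the reverse inclusion, since $\inter{\bar{\mathcal{D}}(\tra)}=\inter{\tra}$ I start from $(u,v')\in\inter{\tra}$ witnessed by a run $\bar p=s_0\xrightarrow{a_1|b_1}_{\tra}s_1\cdots\xrightarrow{a_m|b_m}_{\tra}s_m$ with $\bar p\in I$, $s_m\in T$, $u=a_1\cdots a_m$ and $v'=b_1\cdots b_m\,f_T(s_m)$. I construct a run of $\bar{\mathcal{W}}(\tra)$ reading $u$ while maintaining the invariant: after reading $a_1\cdots a_i$ the run is in a state $U$ with $(s_i,w_i)\in U$ for some $w_i$, and its output $V$ satisfies $Vw_i=b_1\cdots b_i$. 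At step $i+1$, note $(s_{i+1},w_ib_{i+1})\in R_{U,a_{i+1}}$, hence $w_{U,a_{i+1}}$ is a prefix of $w_ib_{i+1}$, say $w_ib_{i+1}=w_{U,a_{i+1}}w_{i+1}$, and $(s_{i+1},w_{i+1})\in P_{U,a_{i+1}}$; if the $\bar{\mathcal{D}}$-edge out of $U$ on $a_{i+1}$ preserves the rank I take it (new state $P_{U,a_{i+1}}$, output $Vw_{U,a_{i+1}}$), otherwise I take the jump edge $(U,a_{i+1}|w_{U,a_{i+1}}w_{i+1},\{(s_{i+1},\epsilon)\})$ (new state $\{(s_{i+1},\epsilon)\}$, output $Vw_{U,a_{i+1}}w_{i+1}=b_1\cdots b_{i+1}$, delay $\epsilon$); in both cases a direct computation re-establishes the invariant. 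At the end $(s_m,w_m)\in U$ with $s_m\in T$ makes $U$ a final state, so the run recognises $(u,\,Vw_m f_T(s_m))=(u,v')$, as desired.

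The only real difficulty is bookkeeping: $\bar{\mathcal{D}}(\tra)$ and $\bar{\mathcal{W}}(\tra)$ emit their outputs differently --- at a reset, $\bar{\mathcal{W}}(\tra)$ flushes an entire accumulated delay into the output --- so the delicate point is to fix the right invariant, namely ``the output produced so far, concatenated with the delay still attached to the tracked state, equals the corresponding $\tra$-output'', after which both inductions become routine. Note that the weak twinning property plays no role here; it is needed only for the finiteness of $\mathcal{W}(\tra)$ (Proposition \ref{weakdet}). \eof
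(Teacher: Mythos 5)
Your proof is correct and follows essentially the same route as the paper: the paper's own argument is only a sketch that states the decomposition identity $\{(q,w_{U,v}w) \mid (q,w)\in P_{U,v}\} = \bigcup_{(q,w)\in U}\{(q',ww_{\{(q,\epsilon)\},v}w') \mid (q',w')\in P_{\{(q,\epsilon)\},v}\}$ from Lemma~\ref{run_equ} and leaves the rest to the reader, and your two inclusions (induction on jump edges, and the invariant $Vw_i=b_1\cdots b_i$) are exactly a careful unrolling of that identity along a run. Your observation that the weak twinning property is irrelevant to equivalence and only ensures finiteness is also consistent with how the paper separates Corollary~\ref{wdet_equ} from Proposition~\ref{weakdet}.
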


\begin{proof}
Using the definition of $\mathcal{W}(\tra)$, and the fact that, by lemma \ref{run_equ}, for every $U \in \mathcal{T}(Q \times \Gamma^*)$, for every $v \in \Sigma^*$,
$$\{ (q,w_{U,v}w) | (q,w) \in P_{U,v} \} = \bigcup_{(q,w) \in U} \{ (q',ww_{\{ (q,\epsilon)\},v}w') | (q',w') \in P_{\{ (q,\epsilon)\},v} \},$$
one can prove that $\mathcal{W}(\tra)$ is equivalent to $\mathcal{D}(\tra)$.
The result then follows from the previous corollary.
\end{proof}

\begin{corollary}\label{pref_empt}
Let $U \in \mathcal{P}_f(Q, \Gamma^*)$ be the target of a non empty run in $\bar{\mathcal{W}}(\tra)$.
Then the largest common prefix of the words $\{ w | \exists q \in Q \textup{ s.t. } (q,w) \in R_{U,v} \}$ is equal to $\epsilon$.
\end{corollary}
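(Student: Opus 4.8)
The plan is to trace where a state $U$ can come from in $\bar{\mathcal{W}}(\tra)$ and to exploit the fact that the subset-with-delays construction normalises every freshly created state by stripping off the longest common prefix of the outputs. Since the target of any non-empty run is the target of its last transition, I would first reduce the claim to examining the targets of the individual transitions of $\bar{\mathcal{W}}(\tra)$. By the very definition of $\bar{\mathcal{W}}(\tra)$, each transition is of exactly one of two shapes: a transition $(U',\sigma\mid w_{U',\sigma},P_{U',\sigma})$ inherited unchanged from $\bar{\mathcal{D}}(\tra)$ (the rank-preserving case, $n_{P_{U',\sigma}}=n_{U'}$), or a split transition $(U',\sigma\mid w_{U',\sigma}w,\{(q,\epsilon)\})$ produced from an edge along which the rank strictly decreases, where $(q,w)\in P_{U',\sigma}$.

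In the split case the target is the singleton $\{(q,\epsilon)\}$, whose associated set of outputs is $\{\epsilon\}$, so the claim is immediate. In the rank-preserving case the target is $P_{U',\sigma}$, and here the key step is the elementary fact on words that removing the longest common prefix from a non-empty family of words leaves a family whose longest common prefix is $\epsilon$. Concretely, by definition the set $\{w\mid\exists q,\ (q,w)\in P_{U',\sigma}\}$ is obtained from $\{w'\mid\exists q,\ (q,w')\in R_{U',\sigma}\}$ by deleting the prefix $w_{U',\sigma}$, which is \emph{by definition} the longest common prefix of the latter set; hence any common prefix $x$ of the former set would make $w_{U',\sigma}x$ a common prefix of all words in $R_{U',\sigma}$, forcing $x=\epsilon$ by maximality of $w_{U',\sigma}$.

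Finally I would recover the statement by noting that, since $\tra$ is real-time, the only $\epsilon$-labelled runs are trivial, so $R_{U,\epsilon}=U$, and the set in the statement is exactly the set of outputs stored in $U$. The one point deserving a line of care is that $R_{U',\sigma}$ must be non-empty for the notion of longest common prefix to behave as expected and for the maximality argument to apply; this holds because $U'$ is itself reached by a run and an edge of $\bar{\mathcal{D}}(\tra)$ with $P_{U',\sigma}=\emptyset$ is, for non-empty $U'$, rank-decreasing (as $n_\emptyset=\emptyset\subsetneq n_{U'}$) and hence replaced by the empty set of split transitions, so the empty state is never the target of a non-empty run starting from the (non-empty) initial state. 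I do not anticipate any real obstacle: the whole argument is a case split over the two kinds of transitions plus a one-line word-combinatorial fact about longest common prefixes.
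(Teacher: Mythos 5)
Your proposal is correct and follows essentially the same route as the paper: a case split on whether the state is the target of a split (reset) transition, where it equals $\{(q,\epsilon)\}$ and the claim is immediate, or the target of a transition inherited from $\bar{\mathcal{D}}(\tra)$, where the claim follows because $P_{U',\sigma}$ is by construction obtained from $R_{U',\sigma}$ by stripping its longest common prefix. The only cosmetic difference is that the paper invokes Lemma~\ref{run_equ} for the second case while you argue directly on the last transition (and you additionally tidy up the statement's stray $R_{U,v}$ and the non-emptiness of $R_{U',\sigma}$, which is harmless).
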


\begin{proof}
By definition of $\bar{\mathcal{W}}(\tra)$, either $U = \{(q,\epsilon)\}$ for some $q \in Q$, and the results follows immediately, or $U$ is the target of a run in $\bar{\mathcal{D}}(\tra)$, and the result follows from lemma \ref{run_equ}.
\end{proof}

\paragraph{Proof of Proposition \ref{weakdet}}

\begin{proof}
By corollary \ref{wdet_equ}, $\mathcal{W}(\tra)$ and $\tra$ are equivalent.
Suppose that $\tra$ is weakly twinned.
We shall show that the subset $Q'$ of $\mathcal{P}_f(Q \times \Gamma^*)$ accessible from $U_0$ in $\bar{\mathcal{W}}(\tra)$ is finite.\\
Let $Q_b$ be the subset of $Q'$ composed of the elements $U \in Q'$ that contain at least two pairs $(q_1,v_1)$, $(q_2,v_2)$ such that $|\Delta(v_1,v_2)| \geq 2 M_{\tra} |Q|^3$.
We shall prove that $Q_b$ is empty, which proves the desired result, as it bounds the size of the words present in the sets $U \in Q'$ : if there were an element $U$ of $Q'$ containing a word of size greater than $2 M_{\tra} |Q|^3$, then it would be in $Q'$, by corollary \ref{pref_empt}.\\
Suppose ab absurdo that $Q_b$ is not empty.
Let $r : U_0 \xrightarrow{u|v} V \in Q_b$ be a run in $\mathcal{W}(\tra)$ such that for every $U \in Q_b$, for every run $r_{U} : U_0 \xrightarrow{x|y} U$ in $\mathcal{W}(\tra)$, the length of $r$ is shorter than or equal to the length of $r_{U}$.
The run $r$ can be decomposed into two parts
$r : U_0 \xrightarrow{u_0|v_0} V_0 \xrightarrow{\bar{u}|\bar{v}} V$
where $V_0$ is the first state such that the part of the run following it preserves the rank.
Therefore, the second part of the run is a run of $\bar{\mathcal{D}}(\tra)$.
Note that either $V_0 = U_0$, or its rank is different than the one of its predecessor.
In both cases, by definition of $\bar{\mathcal{W}}(\tra)$, $V_0 = \{ (p,\epsilon) \}$ for some $p \in Q$.
As $n_{V_0} = n_{V}$, there exists a pair $(p_0,v_0)$ in $V$ such that there exists a run from $p_0$ to $q$ in $\tra$.
Moreover, by supposition, $V$ contains two pairs $(p_1,v_1)$, $(p_2,v_2)$ such that $\Delta(v_1,v_2) \geq M_{\tra} |Q|^3$.
Then, by lemma \ref{run_equ}, $\tra$ admits three runs 
$$\begin{array}{lll}
r_0 & : & p = p_0^0 \xrightarrow{u^1|v_0^1} p_0^1 \xrightarrow{u^2|v_0^2} \ldots \xrightarrow{u^n|v_0^{n}} p_0^{n} = p_0,\\
r_1 & : & p = p_1^0 \xrightarrow{u^1|v_1^1} p_1^1 \xrightarrow{u^2|v_1^2} \ldots \xrightarrow{u^n|v_1^{n}} p_1^{n} = p_1,\\
r_2 & : & p = p_2^0 \xrightarrow{u^1|v_2^1} p_2^1 \xrightarrow{u^2|v_2^2} \ldots \xrightarrow{u^n|v_2^{n}} p_2^{n} = p_2,
\end{array}$$
such that $u^{1} \ldots u^{n} = \bar{u}$, and for every $0 \leq i < n$, for every $0 \leq j \leq 2$, $p_j^i \xrightarrow{u^{i+1}|v_j^{i+1}} p_j^{i+1}$ is a transition of $\tra$ and $v_j^{1} \ldots v_j^{n} = \bar{v}v_j$.
Then $n \geq |Q|^3$, because $|v_1^{1} \ldots v_1^{n}| + |v_2^{1} \ldots v_2^{n}| \geq \Delta(v_1,v_2) \geq 2 M_{\tra} |Q|^3$ by supposition, and for every $0 \leq i \leq n$, $|v_1^{i}| + |v_2^i| \leq 2 M_{\tra}$ by definition of $M_{\tra}$.
Therefore there exist $0 \leq i_1<i_2 \leq n$ such that for every $0 \leq j \leq 2$, $p_j^{i_1} = p_j^{i_2}$.\\
If $\Delta(v_1^{1} \ldots v_1^{i_1},v_2^{1} \ldots v_2^{i_1}) = \Delta(v_1^{1} \ldots v_1^{i_2},v_2^{1} \ldots v_2^{i_2})$, then 
$$\Delta(v_1^{1} \ldots v_1^{i_1}v_1^{i_2} \ldots v_1^{n},v_2^{1} \ldots v_2^{i_1}v_2^{i_2} \ldots v_2^{n}) = \Delta(v_1^{1} \ldots v_1^{n},v_2^{1} \ldots v_2^{n}) \geq  2M_{\tra} |Q|^3,$$
hence the run in $\mathcal{W}(\tra)$ corresponding to the runs of $\tra$ obtained by removing the part between $i_1$ and $i_2$ of $r_1$ and $r_2$ is a run strictly shorter than $r$ between $U_0$ and an element of $Q_b$, which contradicts the minimality of $r$.\\
If $\Delta(v_1^{1} \ldots v_1^{i_1},v_2^{1} \ldots v_2^{i_1}) \neq \Delta(v_1^{1} \ldots v_1^{i_2},v_2^{1} \ldots v_2^{i_2})$, then $\Delta(v_0^{1} \ldots v_0^{i_1},v_j^{1} \ldots v_j^{i_1}) \neq \Delta(v_0^{1} \ldots v_0^{i_2},v_j^{1} \ldots v_j^{i_2})$, for $j = 1$ or $j = 2$.
Moreover, by choice of $p_0$, there exists a run $p_0 \xrightarrow{x_0|y_0} p$.
Therefore there is a contradiction with the fact that $\tra$ is weakly twinned, exposed by taking the runs
$$\begin{array}{llll}
p_0 \xrightarrow{x_0|y_0} p \xrightarrow{u^{1} \ldots u^{i_1}|v_j^{1} \ldots v_j^{i_1}} p_j^{i_1}, & \ & p_0^{i_1}  \xrightarrow{u^{i_1} \ldots u^{i_2}|v_0^{i_1} \ldots v_0^{i_2}} p_0^{i_1},\\
p_0 \xrightarrow{x_0|y_0} p \xrightarrow{u^{1} \ldots u^{i_1}|v_0^{1} \ldots v_0^{i_1}} p_0^{i_1}, & \ & p_j^{i_1} \xrightarrow{u^{i_1} \ldots u^{i_2}|v_j^{i_1} \ldots v_j^{i_2}} p_j^{i_1}.
\end{array}$$
This concludes the proof.
\end{proof}

\subsection*{Proofs of section 4}

\paragraph{Proof of lemma \ref{thm:streaming}}

\begin{proof}
    If $R$ is multi-sequential, then 
    $R = \bigcup_{i=1}^k \inter{\mathcal{D}_i}$ for $k$ sequential
    transducers $\mathcal{D}_1,\dots,\mathcal{D}_k$. Let us show that $R$ is constant
    memory computable, i.e., let us construct $f,M,K$ as in the
    definition of constant memory computability. The function $f$
    associates with $u$ a word $b_1\dots b_k\in \{0,1\}^*$ such that 
    $b_i = 1$ iff $u\in dom(\mathcal{D}_i)$. The machine $M$ 
    simulates all the $\mathcal{D}_i$ such that $b_i = 1$
    successively. It uses at most $k$ output tapes to write 
    the output words $\inter{\mathcal{D}_i}(u)$. It does not need to
    use the working tape (and therefore we can take $K=0$) because it
    can simulate the states of $\mathcal{D}_i$ by using its internal
    states. A state of $M$ is therefore a tuple of states from distinct $\mathcal{D}_i$.
     $M$ uses $f(u)$ to initialise the first tuple of
    states to $(q_0^i)_{i\in \{ j\ |
      b_j=1\}}$, where $q_0^i$ is the initial state of $\mathcal{D}_i$.
     $M$ stops when it reads the first blank symbol on the
    input tape. 

    Conversely, let us show that any constant memory computable
    relation is multi-sequential. Let $f,k,M,K$ as in the definition
    of  constant memory computability. Let $P = \{ f(u)\ |\ u\in
    \Sigma^*\}$. Since for all $u\in\Sigma^*$, $|f(u)|\leq k$, the set
    $P$ is finite. Given $v\in P$, the machine $M$ reading input of
    the form $v\# u$ can be seen as the union of $k$ sequential
    transducers $\mathcal{D}_{1,v},\dots \mathcal{D}_{k,v}$. Informally, since the memory used on the working tape
    is bounded by $K$, there are only a constant number of possible
    configurations on that tape. The states of $\mathcal{D}_{i,v}$ are
    these configurations. Then, the transducer $\mathcal{D}_{i,v}$
    simulates $M$ and only produces the symbols produced by $M$ on the
    $i^{\textup{th}}$ output tape. The accepting states of $\mathcal{D}_{i,v}$ are the
    configurations where $M$ halts. Note that since $M$ can perform
    several transitions without reading any input symbol, the resulting transducers $\mathcal{D}_{i,v}$ may
    have $\epsilon$-transitions. It is easily shown that the
    $\mathcal{D}_{i,v}$ can be turned into proper transducers without
    $\epsilon$-transitions, as shown in
    \cite{DBLP:conf/fsttcs/FiliotGRS11} for constant memory computable
    functions. Finally, we have $R = \bigcup_{v\in P}\bigcup_{i=1}^k
    \inter{\mathcal{D}_{i,v}}$, and therefore $R$ is multi-sequential.

    \eof
\end{proof}

\subsection*{Examples}
Here are two more examples of weakly twinned transducers that are not twinned.\\

The figure $\ref{fig:det2} $ presents the transducer $\tra_1$ along with its determinisation and weak determinisation.
It is not twinned, as exposed by the runs 
$$\begin{array}{lllll}
q_0 & \xrightarrow{aaaa|a} & q_5 & \xrightarrow{ba|a} & q_5,\\
q_0 & \xrightarrow{aaaa|b} & q_5 & \xrightarrow{ba|a} & q_5.
\end{array}$$
Therefore, the determinisation is infinite.
However, the weak determinisation is finite.
The dotted edge of $\mathcal{D}(\tra_1)$, definitely leaving the SCC $\{ q_0,q_1,q_2 \}$ of $\tra_1$, is split into the three dotted edges of $\mathcal{W}(\tra_1)$.
The figure $\ref{fig:det2b}$ exposes the decomposition of $\tra_1$ into a union of sequential transducers.\\ 

The figure $\ref{fig:det3} $ present the transducer $\tra_2$ along with its determinisation and weak determinisation.
It is not twinned, as exposed by the runs 
$$\begin{array}{lllll}
q_0 & \xrightarrow{aaaaa|baaaa} & q_6 & \xrightarrow{a|a} & q_6,\\
q_0 & \xrightarrow{aaaaa|abaaa} & q_6 & \xrightarrow{a|a} & q_6.
\end{array}$$
Therefore, the determinisation is infinite.
However, the weak determinisation is finite.
The dotted edge of $\mathcal{D}(\tra_2)$, definitely leaving the SCC $\{ q_0,q_1,q_2,q_3 \}$ of $\tra_1$, is split into the three dotted edges of $\mathcal{W}(\tra_2)$.
The figure $\ref{fig:det3b}$ exposes the decomposition of $\tra_2$ into a union of sequential transducers.\\ 

\begin{figure}[!ht]
\subfigure[$\tra_1$\label{ex21}]{
\begin{tikzpicture}[->,>=stealth',auto,node distance=2cm,thick,scale=0.7,every node/.style={scale=0.7}]
  \tikzstyle{every state}=[fill=yellow!30,text=black]
  \tikzstyle{initial}=[initial by arrow, initial where=left, initial text=]
  \tikzstyle{accepting}=[accepting by arrow, accepting where=right, accepting text=]

  \node[initial,state] (A)  {$q_0$};
  \node[state] [above right of=A] (B)  {$q_1$};
  \node[state] [below right of=B] (C)  {$q_2$};
  \node[state] [right of=C] (D)  {$q_3$};
  \node[state] [above right of=D] (E)  {$q_4$};
  \node[accepting,state] [below right of=E] (F)  {$q_5$};
  \path (A) edge node {\trans{a}{a}} (B);
  \path (B) edge node {\trans{a}{\epsilon}} (C);
  \path (A) edge [bend right] node {\trans{a}{b}} (C);
  \path (C) edge [bend right] node {\trans{b}{\epsilon}} (A);
  \path (C) edge node {\trans{a}{\epsilon}} (D);
  \path (D) edge node {\trans{a}{\epsilon}} (E);
  \path (D) edge [bend right] node {\trans{a}{\epsilon}} (F);
  \path (E) edge node {\trans{a}{\epsilon}} (F);
  \path (F) edge [bend right] node {\trans{b}{a}} (D);
\end{tikzpicture}
}

\subfigure[$\mathcal{D}(\tra_1)$\label{ex22}]{
\begin{tikzpicture}[->,>=stealth',auto,node distance=2cm,thick,scale=0.7,every node/.style={scale=0.7}]
  \tikzstyle{every state}=[fill=yellow!30,text=black, shape = rectangle,rounded corners]
  \tikzstyle{initial}=[initial by arrow, initial where=left, initial text=]
  \tikzstyle{accepting}=[accepting by arrow, accepting where=right, accepting text=]

  \node[initial,state] (A)  {$q_0 : \epsilon$};
  \node[state] [above right of=A] (B)  {$\begin{array}{lll}q_1 & : & a\\ q_2 & : & b\end{array}$};
  \node[state] [below right of=B] (C)  {$\begin{array}{lll}q_2 & : & a\\ q_3 & : & b\end{array}$};
  \node[state] [right of=C] (D)  {$\begin{array}{lll}q_3 & : & a\\  q_4 & : & b \\  q_5 & : & b \end{array}$};
  \node[state] [right of=D] (D1)  {$\begin{array}{lll}q_3 & : & \epsilon \end{array}$};
  \node[state] [above right of=D1] (D2)  {$\begin{array}{lll}q_4 & : & \epsilon \end{array}$};
  \node[accepting,state] [below right of=D2] (D3)  {$\begin{array}{lll}q_5 & : & \epsilon \end{array}$};
  \node[state] [below of=D] (E)  {$\begin{array}{lll}q_4 & : & a,\\ q_5 & : & a,\\  &  & b \end{array}$};

  \tikzstyle{accepting}=[accepting by arrow, accepting where=below, accepting text=\vdots]

  \node[accepting,state] [below of=E] (F)  {$\begin{array}{lll}q_3 & : & aa,\\  &  & ba\end{array}$};
  \path (B) edge node {\trans{a}{\epsilon}} (C);
  \path (A) edge [bend left] node {\trans{a}{\epsilon}} (B);
  \path (B) edge [bend left] node {\trans{b}{b}} (A);
  \path (C) edge node {\trans{b}{a}} (A);
  \path (C) edge [densely dotted] node {\trans{a}{\epsilon}} (D);
  \path (D) edge node {\trans{b}{ba}} (D1);
  \path (D1) edge node {\trans{a}{\epsilon}} (D2);
  \path (D1) edge [bend right] node {\trans{a}{\epsilon}} (D3);
  \path (D2) edge node {\trans{a}{\epsilon}} (D3);
  \path (D3) edge [bend right] node {\trans{b}{a}} (D1);
  \path (D) edge node {\trans{a}{\epsilon}} (E);
  \path (E) edge [bend right] node {\trans{a}{\epsilon}} (D3);
  \path (E) edge node {\trans{b}{\epsilon}} (F);
\end{tikzpicture}
}

\subfigure[$\mathcal{W}(\tra_1)$\label{ex23}]{
\begin{tikzpicture}[->,>=stealth',auto,node distance=2cm,thick,scale=0.7,every node/.style={scale=0.7}]
  \tikzstyle{every state}=[fill=yellow!30,text=black, shape = rectangle,rounded corners]
  \tikzstyle{initial}=[initial by arrow, initial where=left, initial text=]
  \tikzstyle{accepting}=[accepting by arrow, accepting where=right, accepting text=]

  \node[initial,state] (A)  {$q_0 : \epsilon$};
  \node[state] [above right of=A] (B)  {$\begin{array}{lll}q_1 & : & a\\ q_2 & : & b\end{array}$};
  \node[state] [below right of=B] (C)  {$\begin{array}{lll}q_2 & : & a\\ q_3 & : & b\end{array}$};
  \node[state] [right of=C] (D)  {$\begin{array}{lll}q_3 & : & \epsilon \end{array}$};
  \node[state] [above right of=D] (E)  {$\begin{array}{lll}q_4 & : & \epsilon \end{array}$};
  \node[state,accepting] [below right of=E] (F)  {$\begin{array}{lll}q_5 & : & \epsilon \end{array}$};
  \path (B) edge node {\trans{a}{\epsilon}} (C);
  \path (A) edge [bend left] node {\trans{a}{\epsilon}} (B);
  \path (B) edge [bend left] node {\trans{b}{b}} (A);
  \path (C) edge node {\trans{b}{a}} (A);
  \path (C) edge [densely dotted] node {\trans{a}{a}} (D);
  \path (C) edge [densely dotted, bend left] node {\trans{a}{b}} (E);
  \path (C) edge [densely dotted, bend right] node {\trans{a}{b}} (F);
  \path (D) edge node {\trans{a}{\epsilon}} (E);
  \path (D) edge [bend right] node {\trans{a}{\epsilon}} (F);
  \path (E) edge node {\trans{a}{\epsilon}} (F);
  \path (F) edge [bend right] node {\trans{b}{a}} (D);
\end{tikzpicture}
}
\caption{Multisequential transducer that is not sequential.}
\label{fig:det2} 
\end{figure}
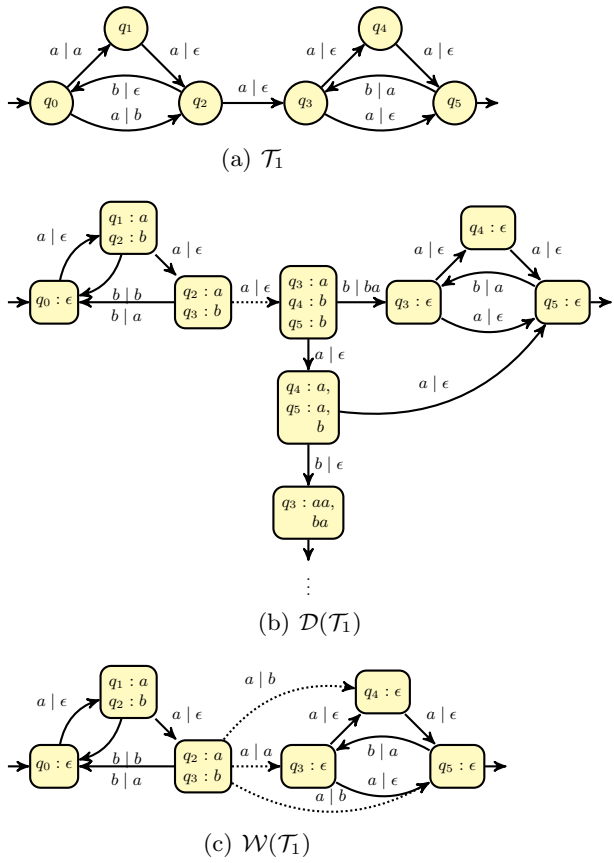

\begin{figure}[!ht]
\subfigure{
\begin{tikzpicture}[->,>=stealth',auto,node distance=2cm,thick,scale=0.7,every node/.style={scale=0.7}]
  \tikzstyle{every state}=[fill=yellow!30,text=black, shape = rectangle,rounded corners]
  \tikzstyle{initial}=[initial by arrow, initial where=left, initial text=]
  \tikzstyle{accepting}=[accepting by arrow, accepting where=right, accepting text=]

  \node[initial,state] (A)  {$q_0 : \epsilon$};
  \node[state] [above right of=A] (B)  {$\begin{array}{lll}q_1 & : & a\\ q_2 & : & b\end{array}$};
  \node[state] [below right of=B] (C)  {$\begin{array}{lll}q_2 & : & a\\ q_3 & : & b\end{array}$};
  \node[state] [right of=C] (D)  {$\begin{array}{lll}q_3 & : & \epsilon \end{array}$};
  \node[state] [above right of=D] (E)  {$\begin{array}{lll}q_4 & : & \epsilon \end{array}$};
  \node[state,accepting] [below right of=E] (F)  {$\begin{array}{lll}q_5 & : & \epsilon \end{array}$};
  \path (B) edge node {\trans{a}{\epsilon}} (C);
  \path (A) edge [bend left] node {\trans{a}{\epsilon}} (B);
  \path (B) edge [bend left] node {\trans{b}{b}} (A);
  \path (C) edge node {\trans{b}{a}} (A);
  \path (C) edge [densely dotted] node {\trans{a}{a}} (D);
  \path (D) edge node {\trans{a}{\epsilon}} (E);
  \path (D) edge [bend right] node {\trans{a}{\epsilon}} (F);
  \path (E) edge node {\trans{a}{\epsilon}} (F);
  \path (F) edge [bend right] node {\trans{b}{a}} (D);
\end{tikzpicture}
}

\subfigure{
\begin{tikzpicture}[->,>=stealth',auto,node distance=2cm,thick,scale=0.7,every node/.style={scale=0.7}]
  \tikzstyle{every state}=[fill=yellow!30,text=black, shape = rectangle,rounded corners]
  \tikzstyle{initial}=[initial by arrow, initial where=left, initial text=]
  \tikzstyle{accepting}=[accepting by arrow, accepting where=right, accepting text=]

  \node[initial,state] (A)  {$q_0 : \epsilon$};
  \node[state] [above right of=A] (B)  {$\begin{array}{lll}q_1 & : & a\\ q_2 & : & b\end{array}$};
  \node[state] [below right of=B] (C)  {$\begin{array}{lll}q_2 & : & a\\ q_3 & : & b\end{array}$};
  \node[state] [right of=C] (D)  {$\begin{array}{lll}q_3 & : & \epsilon \end{array}$};
  \node[state] [above right of=D] (E)  {$\begin{array}{lll}q_4 & : & \epsilon \end{array}$};
  \node[state,accepting] [below right of=E] (F)  {$\begin{array}{lll}q_5 & : & \epsilon \end{array}$};
  \path (B) edge node {\trans{a}{\epsilon}} (C);
  \path (A) edge [bend left] node {\trans{a}{\epsilon}} (B);
  \path (B) edge [bend left] node {\trans{b}{b}} (A);
  \path (C) edge node {\trans{b}{a}} (A);
  \path (C) edge [densely dotted, bend left] node {\trans{a}{b}} (E);
  \path (D) edge node {\trans{a}{\epsilon}} (E);
  \path (D) edge [bend right] node {\trans{a}{\epsilon}} (F);
  \path (E) edge node {\trans{a}{\epsilon}} (F);
  \path (F) edge [bend right] node {\trans{b}{a}} (D);
\end{tikzpicture}
}

\subfigure{
\begin{tikzpicture}[->,>=stealth',auto,node distance=2cm,thick,scale=0.7,every node/.style={scale=0.7}]
  \tikzstyle{every state}=[fill=yellow!30,text=black, shape = rectangle,rounded corners]
  \tikzstyle{initial}=[initial by arrow, initial where=left, initial text=]
  \tikzstyle{accepting}=[accepting by arrow, accepting where=right, accepting text=]

  \node[initial,state] (A)  {$q_0 : \epsilon$};
  \node[state] [above right of=A] (B)  {$\begin{array}{lll}q_1 & : & a\\ q_2 & : & b\end{array}$};
  \node[state] [below right of=B] (C)  {$\begin{array}{lll}q_2 & : & a\\ q_3 & : & b\end{array}$};
  \node[state] [right of=C] (D)  {$\begin{array}{lll}q_3 & : & \epsilon \end{array}$};
  \node[state] [above right of=D] (E)  {$\begin{array}{lll}q_4 & : & \epsilon \end{array}$};
  \node[state,accepting] [below right of=E] (F)  {$\begin{array}{lll}q_5 & : & \epsilon \end{array}$};
  \path (B) edge node {\trans{a}{\epsilon}} (C);
  \path (A) edge [bend left] node {\trans{a}{\epsilon}} (B);
  \path (B) edge [bend left] node {\trans{b}{b}} (A);
  \path (C) edge node {\trans{b}{a}} (A);
  \path (C) edge [densely dotted, bend right] node {\trans{a}{b}} (F);
  \path (D) edge node {\trans{a}{\epsilon}} (E);
  \path (D) edge [bend right] node {\trans{a}{\epsilon}} (F);
  \path (E) edge node {\trans{a}{\epsilon}} (F);
  \path (F) edge [bend right] node {\trans{b}{a}} (D);
\end{tikzpicture}
}
\caption{The transducer $\textsf{trim}(\textsf{split}(\mathcal{W}(\tra_1)))$}
\label{fig:det2b} 
\end{figure}

\begin{figure}[!ht]
\subfigure[$\tra$\label{ex31}]{
\begin{tikzpicture}[->,>=stealth',auto,node distance=3cm,thick,scale=0.7,every node/.style={scale=0.7}]
  \tikzstyle{every state}=[fill=yellow!30,text=black]
  \tikzstyle{initial}=[initial by arrow, initial where=left, initial text=]
  \tikzstyle{accepting}=[accepting by arrow, accepting where=right, accepting text=]

  \node[initial,state] (A)  {$q_0$};
  \node[state] [right of=A] (B)  {$q_1$};
  \node[state] [right of=B] (C)  {$q_2$};
  \node[state] [below of=A] (D)  {$q_3$};
  \node[state] [right of=D] (E)  {$q_4$};
  \node[state] [right of=E] (F)  {$q_5$};
  \node[accepting,state] [right of=F] (G)  {$q_6$};
  \path (A) edge node {\trans{a}{a}} (B);
  \path (B) edge node {\trans{a}{a}} (C);
  \path (A) edge [left, bend left] node {\trans{a}{b}} (D);
  \path (B) edge [left] node {\trans{a}{b}} (D);
  \path (C) edge [left] node {\trans{a}{b}} (D);
  \path (D) edge [left, bend left] node {\trans{b}{\epsilon}} (A);
  \path (D) edge node {\trans{a}{a}} (E);
  \path (E) edge node {\trans{a}{a}} (F);
  \path (F) edge node {\trans{a}{a}} (G);
  \path (G) edge [loop above] node {\trans{a}{a}} (G);
\end{tikzpicture}
}

\subfigure[$\mathcal{D}(\tra)$\label{ex32}]{
\begin{tikzpicture}[->,>=stealth',auto,node distance=3cm,thick,scale=0.7,every node/.style={scale=0.7}]
  \tikzstyle{every state}=[fill=yellow!30,text=black, shape = rectangle,rounded corners]
  \tikzstyle{initial}=[initial by arrow, initial where=left, initial text=]
  \tikzstyle{accepting}=[accepting by arrow, accepting where=right, accepting text=$\cdots$]

  \node[initial,state] (A)  {$(q_0, \epsilon)$};
  \node[state] [right of=A] (B)  {$\begin{array}{l}(q_1, a)\\ (q_3, b)\end{array}$};
  \node[state] [right of=B] (C)  {$\begin{array}{lll}(q_2, aa)\\ (q_3, ab)\\ (q_4, ba)\end{array}$};
  \node[state] [below of=A] (D)  {$\begin{array}{lll}(q_3, aab)\\ (q_4, aba)\\ (q_5, baa)\end{array}$};
  \node[state] [right of=D] (E)  {$\begin{array}{lll}(q_4, aaba)\\ (q_5, abaa)\\ (q_6, baaa)\end{array}$};
  \node[state] [right of=E] (F)  {$\begin{array}{lll}(q_5, aabaa)\\ (q_6, abaaa)\\ (q_6, baaaa)\end{array}$};
  \node[accepting,state] [right of=F] (G)  {$\begin{array}{lll}(q_6, aabaaa)\\ (q_6, abaaaa)\\ (q_6, baaaaa)\end{array}$};
  \path (A) edge node {\trans{a}{\epsilon}} (B);
  \path (B) edge node {\trans{a}{\epsilon}} (C);
  \path (C) edge node {\trans{a}{\epsilon}} (D);
  \path (B) edge [bend left] node {\trans{b}{b}} (A);
  \path (C) edge [above,bend right] node {\trans{b}{ab}} (A);
  \path (D) edge node {\trans{b}{aab}} (A);
  \path (D) edge [densely dotted] node {\trans{a}{\epsilon}} (E);
  \path (E) edge node {\trans{a}{\epsilon}} (F);
  \path (F) edge node {\trans{a}{\epsilon}} (G);
\end{tikzpicture}
}

\subfigure[$\mathcal{W}(\tra)$\label{ex33}]{
\begin{tikzpicture}[->,>=stealth',auto,node distance=3cm,thick,scale=0.7,every node/.style={scale=0.7}]
  \tikzstyle{every state}=[fill=yellow!30,text=black, shape = rectangle,rounded corners]
  \tikzstyle{initial}=[initial by arrow, initial where=left, initial text=]
  \tikzstyle{accepting}=[accepting by arrow, accepting where=right, accepting text=]

  \node[initial,state] (A)  {$(q_0, \epsilon)$};
  \node[state] [right of=A] (B)  {$\begin{array}{l}(q_1, a)\\ (q_3, b)\end{array}$};
  \node[state] [right of=B] (C)  {$\begin{array}{lll}(q_2, aa)\\ (q_3, ab)\\ (q_4, ba)\end{array}$};
  \node[state] [below of=A] (D)  {$\begin{array}{lll}(q_3, aaa)\\ (q_4, aba)\\ (q_5, baa)\end{array}$};
  \node[state] [right of=D] (E)  {$\begin{array}{lll}(q_4, \epsilon)\end{array}$};
  \node[state] [right of=E] (F)  {$\begin{array}{lll}(q_5, \epsilon)\end{array}$};
  \node[accepting,state] [right of=F] (G)  {$\begin{array}{lll}(q_6, \epsilon)\end{array}$};
  \path (A) edge node {\trans{a}{\epsilon}} (B);
  \path (B) edge node {\trans{a}{\epsilon}} (C);
  \path (C) edge node {\trans{a}{\epsilon}} (D);
  \path (B) edge [bend left] node {\trans{b}{b}} (A);
  \path (C) edge [above,bend right] node {\trans{b}{ab}} (A);
  \path (D) edge node {\trans{b}{aab}} (A);
  \path (D) edge [densely dotted] node {\trans{a}{aaba}} (E);
  \path (D) edge [densely dotted, bend right] node {\trans{a}{abaa}} (F);
  \path (D) edge [densely dotted, bend right] node {\trans{a}{baaa}} (G);
  \path (E) edge node {\trans{a}{a}} (F);
  \path (F) edge node {\trans{a}{a}} (G);
  \path (G) edge [loop above] node {\trans{a}{a}} (G);
\end{tikzpicture}
}
\caption{Multisequential transducer that is not sequential.}
\label{fig:det3} 
\end{figure}

\begin{figure}[!ht]
\subfigure{
\begin{tikzpicture}[->,>=stealth',auto,node distance=3cm,thick,scale=0.7,every node/.style={scale=0.7}]
  \tikzstyle{every state}=[fill=yellow!30,text=black, shape = rectangle,rounded corners]
  \tikzstyle{initial}=[initial by arrow, initial where=left, initial text=]
  \tikzstyle{accepting}=[accepting by arrow, accepting where=right, accepting text=]

  \node[initial,state] (A)  {$(q_0, \epsilon)$};
  \node[state] [right of=A] (B)  {$\begin{array}{l}(q_1, a)\\ (q_3, b)\end{array}$};
  \node[state] [right of=B] (C)  {$\begin{array}{lll}(q_2, aa)\\ (q_3, ab)\\ (q_4, ba)\end{array}$};
  \node[state] [below of=A] (D)  {$\begin{array}{lll}(q_3, aaa)\\ (q_4, aba)\\ (q_5, baa)\end{array}$};
  \node[state] [right of=D] (E)  {$\begin{array}{lll}(q_4, \epsilon)\end{array}$};
  \node[state] [right of=E] (F)  {$\begin{array}{lll}(q_5, \epsilon)\end{array}$};
  \node[accepting,state] [right of=F] (G)  {$\begin{array}{lll}(q_6, \epsilon)\end{array}$};
  \path (A) edge node {\trans{a}{\epsilon}} (B);
  \path (B) edge node {\trans{a}{\epsilon}} (C);
  \path (C) edge node {\trans{a}{\epsilon}} (D);
  \path (B) edge [bend left] node {\trans{b}{b}} (A);
  \path (C) edge [above, bend right] node {\trans{b}{ab}} (A);
  \path (D) edge node {\trans{b}{aab}} (A);
  \path (D) edge [densely dotted] node {\trans{a}{aaba}} (E);
  \path (E) edge node {\trans{a}{a}} (F);
  \path (F) edge node {\trans{a}{a}} (G);
  \path (G) edge [loop above] node {\trans{a}{a}} (G);
\end{tikzpicture}
}

\subfigure{
\begin{tikzpicture}[->,>=stealth',auto,node distance=3cm,thick,scale=0.7,every node/.style={scale=0.7}]
  \tikzstyle{every state}=[fill=yellow!30,text=black, shape = rectangle,rounded corners]
  \tikzstyle{initial}=[initial by arrow, initial where=left, initial text=]
  \tikzstyle{accepting}=[accepting by arrow, accepting where=right, accepting text=]

  \node[initial,state] (A)  {$(q_0, \epsilon)$};
  \node[state] [right of=A] (B)  {$\begin{array}{l}(q_1, a)\\ (q_3, b)\end{array}$};
  \node[state] [right of=B] (C)  {$\begin{array}{lll}(q_2, aa)\\ (q_3, ab)\\ (q_4, ba)\end{array}$};
  \node[state] [below of=A] (D)  {$\begin{array}{lll}(q_3, aaa)\\ (q_4, aba)\\ (q_5, baa)\end{array}$};
  \node [right of=D] (E)  {$ $};
  \node[state] [right of=E] (F)  {$\begin{array}{lll}(q_5, \epsilon)\end{array}$};
  \node[accepting,state] [right of=F] (G)  {$\begin{array}{lll}(q_6, \epsilon)\end{array}$};
  \path (A) edge node {\trans{a}{\epsilon}} (B);
  \path (B) edge node {\trans{a}{\epsilon}} (C);
  \path (C) edge node {\trans{a}{\epsilon}} (D);
  \path (B) edge [bend left] node {\trans{b}{b}} (A);
  \path (C) edge [above, bend right] node {\trans{b}{ab}} (A);
  \path (D) edge node {\trans{b}{aab}} (A);
  \path (D) edge [densely dotted] node {\trans{a}{abaa}} (F);
  \path (F) edge node {\trans{a}{a}} (G);
  \path (G) edge [loop above] node {\trans{a}{a}} (G);
\end{tikzpicture}
}

\subfigure{
\begin{tikzpicture}[->,>=stealth',auto,node distance=3cm,thick,scale=0.7,every node/.style={scale=0.7}]
  \tikzstyle{every state}=[fill=yellow!30,text=black, shape = rectangle,rounded corners]
  \tikzstyle{initial}=[initial by arrow, initial where=left, initial text=]
  \tikzstyle{accepting}=[accepting by arrow, accepting where=right, accepting text=]

  \node[initial,state] (A)  {$(q_0, \epsilon)$};
  \node[state] [right of=A] (B)  {$\begin{array}{l}(q_1, a)\\ (q_3, b)\end{array}$};
  \node[state] [right of=B] (C)  {$\begin{array}{lll}(q_2, aa)\\ (q_3, ab)\\ (q_4, ba)\end{array}$};
  \node[state] [below of=A] (D)  {$\begin{array}{lll}(q_3, aaa)\\ (q_4, aba)\\ (q_5, baa)\end{array}$};
  \node [right of=D] (E)  {$ $};
  \node [right of=E] (F)  {$ $};
  \node[accepting,state] [right of=F] (G)  {$\begin{array}{lll}(q_6, \epsilon)\end{array}$};
  \path (A) edge node {\trans{a}{\epsilon}} (B);
  \path (B) edge node {\trans{a}{\epsilon}} (C);
  \path (C) edge node {\trans{a}{\epsilon}} (D);
  \path (B) edge [bend left] node {\trans{b}{b}} (A);
  \path (C) edge [above, bend right] node {\trans{b}{ab}} (A);
  \path (D) edge node {\trans{b}{aab}} (A);
  \path (D) edge [densely dotted] node {\trans{a}{baaa}} (G);
  \path (G) edge [loop above] node {\trans{a}{a}} (G);
\end{tikzpicture}
}
\caption{The transducer $\textsf{trim}(\textsf{split}(\mathcal{W}(\tra_2)))$}
\label{fig:det3b} 
\end{figure}


\end{document}